\begin{document}
\sloppy
\allowdisplaybreaks[1]

\newcommand\numberthis{\addtocounter{equation}{1}\tag{\theequation}}

\newtheorem{thm}{Theorem} 
\newtheorem{lem}{Lemma}
\newtheorem{prop}{Proposition}
\newtheorem{cor}{Corollary}
\newtheorem{defn}{Definition}
\newcommand{\remarkend}{\IEEEQEDopen}
\newtheorem{remark}{Remark}
\newtheorem{rem}{Remark}
\newtheorem{ex}{Example}
\newtheorem{pro}{Property}

\newenvironment{example}[1][Example]{\begin{trivlist}
\item[\hskip \labelsep {\bfseries #1}]}{\end{trivlist}}
  
\renewcommand{\qedsymbol}{ \begin{tiny}$\blacksquare$ \end{tiny} }

\renewcommand{\leq}{\leqslant}
\renewcommand{\geq}{\geqslant}

\title {Distributed Secret Sharing over a Public Channel from Correlated Random Variables
}
 
 \author{\IEEEauthorblockN{R\'emi A. Chou}

\thanks{
R. Chou is with the Department of Computer Science and Engineering, The University of Texas at Arlington, Arlington, TX 76019. E-mail: remi.chou@uta.edu. Part of this work has been presented at the 2018 IEEE International Symposium on Information Theory (ISIT) \cite{chou2018secret}. This work was supported in part by NSF grants CCF-1850227 and CCF-2047913.}
}
\maketitle
\begin{abstract}
We consider a secret-sharing model where a dealer distributes the shares of a secret among a set of participants with the constraint that only predetermined subsets of participants must be able to reconstruct the secret by pooling their shares. Our study generalizes Shamir's secret-sharing model in three directions. First, we allow a joint design of the protocols for the creation of the shares  and the distribution of the shares, instead of  constraining the model to independent designs. Second, instead of assuming that the participants and the dealer have access to information-theoretically secure channels at no cost, we assume that they have access to a public channel and correlated randomness. Third, motivated by a wireless network setting where the correlated randomness is obtained from channel gain measurements, we explore a distributed setting where the dealer is an entity made of multiple sub-dealers.
Our main results are inner and outer regions for the achievable secret rates that the dealer and the participants can obtain in this model. To this end, we develop two new achievability techniques, a first one to successively handle reliability and security constraints in a distributed setting, and a second one to reduce a multi-dealer setting to multiple single-user dealer settings. Our results yield the capacity region for threshold access structures when the correlated randomness corresponds to pairwise secret keys shared between each sub-dealer and each participant, and the capacity for the all-or-nothing access structure in the presence of a single dealer and arbitrarily correlated randomness.

 \end{abstract} 
 \section{Introduction}
Consider a dealer who distributes to $L$ participants the shares of a secret $S$ with the requirements that any $t$ participants are able to reconstruct $S$ by pooling their shares, and any subsets of participants with cardinality strictly smaller than $t$ must be unable to learn anything about the secret, in an information-theoretic sense.
More specifically, the dealer forms $L$ shares $(M_1, \ldots, M_L)$ from the secret $S$, and transmits each share $S_l$ to Participant $l \in \{1,\ldots,L\}$ via an individual and information-theoretically secure channel. The setting is depicted in Figure \ref{fig1} for the case $(L,t)=(3,2)$. 

\begin{figure}[t!]
    \centering
    \begin{subfigure}[t]{0.5\textwidth}
        \centering
   \includegraphics[width=5.04cm]{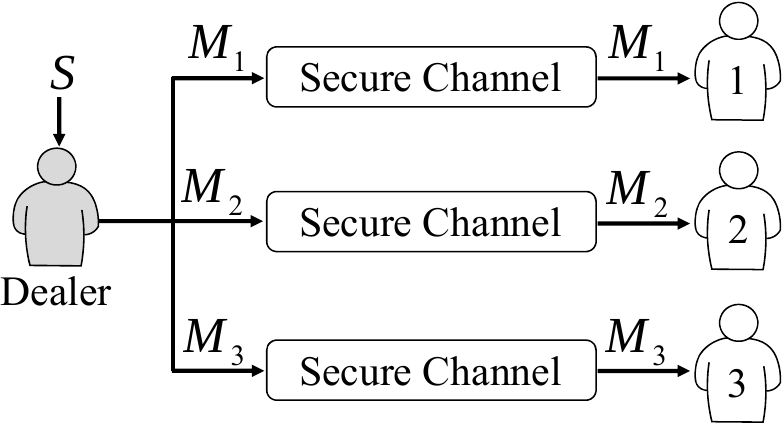}
    \subcaption{Distribution of the shares}
    \label{fig:5a}  
    \end{subfigure}
    ~~~~~~ 
    \begin{subfigure}[t]{0.5\textwidth}
        \centering
    \includegraphics[width=2.45cm]{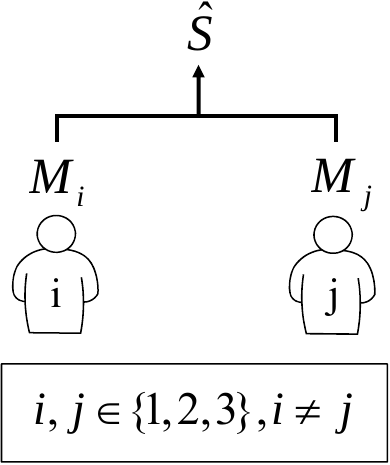}
    \subcaption{Reconstruction of the secret}
    \label{fig:5b}
    \end{subfigure}
\caption{Traditional secret sharing with $L=3$ participants and $t=2$.} \label{fig1}
\end{figure}

\begin{figure}[t!]
    \centering
    \begin{subfigure}[t]{0.5\textwidth}
        \centering
   \includegraphics[width=6.05cm]{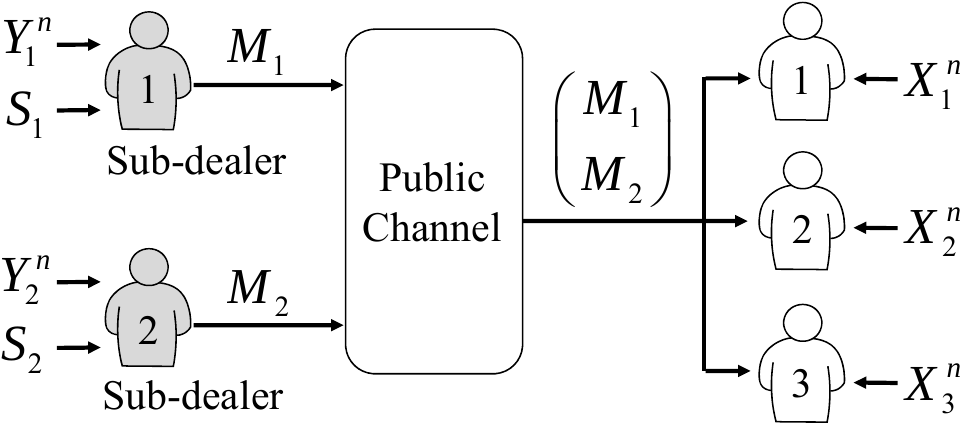}
    \subcaption{Distribution of the shares} 
    \end{subfigure}
    ~~~~~~ 
    \begin{subfigure}[t]{0.5\textwidth}
        \centering
    \includegraphics[width=3.89cm]{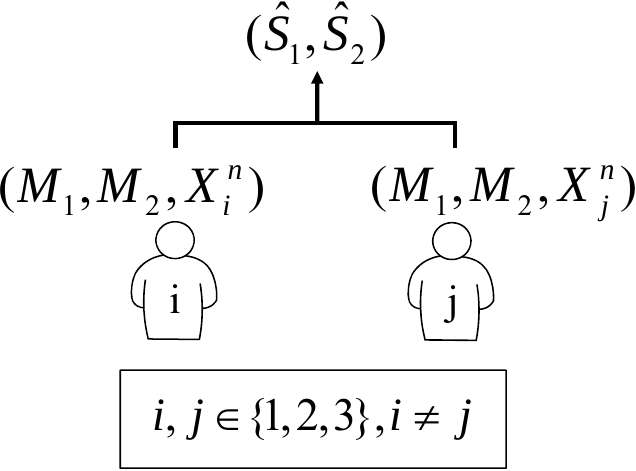}
    \subcaption{Reconstruction of the secret}
    \end{subfigure}
\caption{Proposed secret sharing model with two sub-dealers, three participants, and a reconstruction threshold $t=2$.} \label{fig1b}
\end{figure}
This secret-sharing problem was first introduced by Shamir in~\cite{shamir1979share}  and, independently, by Blakley in \cite{blakley}.  Subsequently, numerous variants have been extensively studied in the computer science literature, see, for instance, \cite{stinson2005cryptography,beimel} and references therein. In these studies of information-theoretically secure secret sharing, it is assumed that the dealer can distribute to the participants the shares of the secret through information-theoretically secure channels that are available at no cost, as in the setting depicted in Figure \ref{fig:5a}.  
Recently, to avoid this assumption, an information-theoretic treatment of secret sharing over noisy channels has  been proposed in~\cite{zou2015information} by leveraging information-theoretic security results at the physical~layer. Specifically, in \cite{zou2015information}, the information-theoretically secure channels of  traditional secret-sharing models are replaced by a noisy broadcast  channel from the dealer to the participants so that secret sharing reduces to physical-layer security for a coumpound wiretap channel~\cite{Liang09}.

In this paper, as illustrated in Figure \ref{fig1b}, we formulate a secret-sharing model that generalizes Shamir's secret-sharing model in three directions. First, our model allows a joint design of the creation of the shares by the dealer and the distribution of the shares by the dealer to the participants. This contrasts with Shamir's model which considers these two phases independently,  since information-theoretically secure channels are available between the dealer and each participant for the distribution phase. Second, while Shamir's model assumes that the participants and the dealer  have access to information-theoretically secure channels  for the distribution of shares, we, instead, only rely on a public channel and correlated randomness in the form of realizations of independently and identically distributed random variables.\footnote{Note that the latter resources are more general than the former resources since they can be used to implement information-theoretically secure channels via a one-time pad when the correlated randomness corresponds to secret keys shared between the dealer and the participants.} Third, motivated by a wireless network setting discussed next, we further explore the problem of secret sharing in a distributed setting where the dealer is an entity made of multiple sub-dealers.

 Our setting is formally described in Section \ref{sec:pre} and can be explained at a high level as follows. Assume that the participants and the dealer, made of multiple sub-dealers distributed in space, observe independently and identically distributed realizations of correlated random variables, that have, for instance, been obtained in a wireless communication network from channel gain measurements  after  appropriate  manipulations~\cite{wilson2007channel,wallace2010automatic,ye2010information,pierrot2013experimental}. The dealer wishes to share a secret with the participants with the requirement that only predefined subsets of participants are able to reconstruct the secret, while any other subsets of participants that pool all their knowledge must remain ignorant, in an information-theoretic sense, of the secret.  
  We are interested in characterizing the set of all achievable secret rates that the dealer can obtain via its sub-dealers when those are allowed to communicate with the participants over a  public channel. Note that a potential limitation in the presence of a single dealer is that the correlated random variables available at the dealer and the participants could be such that no positive secret rates are achievable. 
It is precisely to mitigate this eventuality that we consider a dealer that can deploy in space sub-dealers using, for instance, mobile stations or drones whose positions are adjusted to change the statistics of the channel gains and avoid sterile spatial configurations.

We summarize the main features of our work as follows:
\begin{enumerate} [(i)]
\item  We study a secret-sharing model that extends Shamir's secret-sharing model in three  directions by considering (1) a joint, instead of independent, design of the share creation and distribution phases, (2) a public channel and correlated randomness, instead of secure channels, and (3) a dealer made of distributed sub-dealers, instead of a single dealer. Specifically, we derive inner and outer regions for the achievable secret rates that the dealer and the participants can obtain    in this model. We obtain capacity results in the case of threshold access structures when the correlated randomness corresponds to pairwise secret keys shared between each sub-dealer and each participant, and in the case of a single-dealer setting for the all-or-nothing access structure and arbitrarily correlated randomness.  
\item  In all our achievabilitiy results, the length of each share  always scales linearly with the size of the secret for any access structures. This comes from the fact that the size of the secret is linear with the number of source observations $n$, and  a share corresponds to the public communication plus $n$ source observations, whose lengths are both linear with $n$. Indeed, the public communication corresponds to a compressed version of the $n$ source observations of all the sub-dealers. The length of the public communication does not depend on the number of participants but does depend on the access structure, in particular, the public communication must allow the secret reconstruction for the group of authorized participants that has the least amount of information in their source observations about the secret. This contrasts with Shamir's secret-sharing model, for which the best known coding schemes require the share size to depend exponentially on the number of participants for some access structures~\cite{beimel}. 
\item  As a by-product of independent interest, for distributed settings, we develop two novel achievability techniques to simultaneously satisfy  reliability  and   security constraints. The first one consists in successively handling the reliability and security constraints. This is done by deriving a new variant of the distributed leftover hash lemma and developing a new coding scheme for distributed reconciliation that can be combined with it.  The second one consists in reducing a distributed setting to multiple single-user~settings.
\end{enumerate}

\subsection{Related work} Our work is related to secret-key generation from correlated random variables and public communication \cite{Maurer93,Ahlswede93}, as correlated randomness and public communication are also the main resources considered in our setting.  However, the analysis of our proposed secret-sharing model does not follow from known results for the secret-key generation models in~\cite{Maurer93,Ahlswede93}, as these models only consider a key exchange between two parties, whereas our setting considers a secret exchange between multipe dealers and multiple participants. The analysis of our proposed secret-sharing model does not follow either from  subsequent multiuser secret-key generation models, e.g.,~\cite{Csiszar04,tavangaran2016secret,zhang2017multi,gohari2010information}, that either do not consider multiple reliability and security constraints simultaneously (and are thus unable to support  access structures as in our secret-sharing model) or do not consider distributed settings (and are thus unable to support our distributed dealer setting). The main technical difficulties in our study precisely come from having to simultaneously deal with (i) a distributed setting due to the presence of multiple sub-dealers, and (ii) information-theoretic security constraints able to support an access structure, i.e., able to ensure that \emph{all} the unauthorized subsets of participants cannot learn information about the secret. 
More specifically, for the all-or-nothing access structure, i.e., when all the participants are needed to reconstruct the secret, we develop a new achievability technique that \emph{successively} handles the reliability and security constraints in the presence of distributed sub-dealers. Perhaps surprisingly, we show that this achievability technique is superior to a random binning strategy that \emph{simultaneously} handles the reliability and security constraints, in the sense that no elimination of  auxiliary rates in the obtained achievability region is necessary. To this end, we derive a new variant of 
the distributed leftover hash lemma~\cite{wullschleger2007oblivious,nascimento2008oblivious,chou2017secret}. While the standard leftover hash lemma~\cite{haastad1999pseudorandom} has been extensively used for secret key generation, e.g., \cite{Bennett95,cachin1997linking,Maurer00},  known proofs techniques to study the \emph{distributed} leftover hash lemma in our problem do not seem optimal. Specifically,   at least two new technical challenges arise in our study: 
(i)~while in a non-distributed setting only one min-entropy appears in the leftover hash lemma, the presence of multiple min-entropies (defined from the marginals of the same joint probability distribution) for a distributed setting complexifies the task of finding good approximations of theses min-entropies, further, (ii)~usual techniques for non-distributed settings, e.g.,~\cite[Lemma 10]{Maurer00}, to study the impact of public communication on the leaked information to an eavesdropper do not lead to tight results in a distributed setting.   
Additionally, we also develop for the all-or-nothing access structure another new achievability technique to reduce the task of coding for a distributed-dealer setting to the task of coding for multiple separate single-dealer settings.  

As alluded to earlier, \cite{zou2015information} considers a channel model version of the model studied in this paper but in the presence of a single dealer. Note that subsequently to the preliminary version~\cite{chou2018secret} of this paper, Reference~\cite{rana2021information} investigated a similar model to the one in this study, but only in the presence of a \emph{single-dealer}, when the participants and the dealer observe realizations of correlated Gaussian variables. Note also that~\cite{csiszar2} investigated another secret-sharing problem from correlated random variables and public discussion in the absence of a designated dealer and for special kinds of access structures that are not monotone. By contrast, in this work, we consider arbitrary monotone access structures, as defined in~\cite{benaloh1988generalized}.

Finally, note that distributed secret sharing has also been studied from a different perspective in \cite{soleymani2020distributed,khalesi2021capacity}. In these references, a dealer stores information in multiple storage nodes such that each participant who has access to predefined storage nodes can reconstruct a secret but cannot learn information about the secrets of the other participants. The main difference, in terms of assumptions, between \cite{soleymani2020distributed,khalesi2021capacity} and our setting is that it is assumed in \cite{soleymani2020distributed,khalesi2021capacity} that the dealer can store information in multiple nodes and thus that there exist information-theoretically secure channels between the dealer and each node, similar to the standard assumption in Shamir's secret sharing. By contrast, we do not make this assumption in our setting and instead only rely on a public channel and correlated randomness. For this reason, in \cite{soleymani2020distributed,khalesi2021capacity} the nature of the problem studied is different, specifically, in \cite{soleymani2020distributed,khalesi2021capacity}, the minimization of communication rates is sought out, whereas in our setting, for given source statistics, the maximization of the secret length is sought out.

 \subsection{Paper organization}
The remainder of the paper is organized as follows.  We formally define the problem in Section~\ref{sec:pre} and state our main results in Section~\ref{sec:res}. We present our achievability proofs and converse proofs in Sections~\ref{sec:achievability} and \ref{appth2ga}, respectively. We prove the optimality of our results in some special cases in Section~\ref{secoptimal}. We propose an extension of all our results to the case of chosen (instead of random) secrets in Section \ref{secextension}. Finally, we provide concluding remarks in Section~\ref{sec:concl}.

\section{Notation} \label{secnot}
 
For any $a \in \mathbb{R}^*$, define $\llbracket 1,  a \rrbracket \triangleq [1,\lceil a \rceil] \cap \mathbb{N}$.  The indicator function is denoted by $\mathds{1}\{ \omega \}$, which is equal to~$1$ if the predicate $\omega$ is true and $0$ otherwise.    Let $\mathbb{V}(\cdot,\cdot)$ denote the variational distance. For a given set $\mathcal{S}$, let $2^{\mathcal{S}}$  denote the power set of~$\mathcal{S}$, and $|\mathcal{S}|$ denotes the cardinality of~$\mathcal{S}$.  Finally, let $\bigtimes$ denote the Cartesian product.

\section{Problem statement}\label{sec:pre}

For $L,D \in \mathbb{N}^*$, define the sets $\mathcal{L} \triangleq \llbracket 1, L \rrbracket $ and $\mathcal{D} \triangleq \llbracket 1, D \rrbracket $. Consider $L$ finite alphabets $(\mathcal{X}_l)_{l \in \mathcal{L}}$, $D$ finite alphabets $(\mathcal{Y}_d)_{d \in \mathcal{D}}$, and define $\mathcal{X}_{\mathcal{L}} \triangleq \bigtimes_{l \in \mathcal{L}} \mathcal{X}_l$ and  $\mathcal{Y}_{\mathcal{D}} \triangleq \bigtimes_{d \in \mathcal{D}} \mathcal{Y}_d$.  Then, consider a discrete memoryless source $(\mathcal{X}_{\mathcal{L}}\times \mathcal{Y}_{\mathcal{D}},p_{X_{\mathcal{L}}Y_{\mathcal{D}}})$, where $X_{\mathcal{L}} \triangleq (X_l)_{l\in \mathcal{L}}$ and $Y_{\mathcal{D}} \triangleq (Y_d)_{d\in \mathcal{D}}$. $n \in \mathbb{N}$ independent and identically distributed realizations of the source are denoted by $(X^n_{\mathcal{L}},Y^n_{\mathcal{D}})$, where $X^n_{\mathcal{L}} \triangleq (X^n_l)_{l\in \mathcal{L}}$ and $Y^n_{\mathcal{D}} \triangleq (Y^n_d)_{d\in \mathcal{D}}$.  In the following, for any subset $ \mathcal{T} \subseteq \mathcal{L}$, we use the notation $X^n_{\mathcal{T}} \triangleq (X^n_l)_{l \in \mathcal{T}}$.

 As formalized next, we consider $D$ sub-dealers and $L$ participants, who  each observes one component of the discrete memoryless source. Through public communication from the sub-dealers to the participants, their objective is to generate $D$ random secrets such that authorized subsets of participants can reconstruct the secrets, whereas unauthorized subsets of participants cannot learn any information about the secrets. We highlight that in the following definitions the secrets are random, however, in Section \ref{secextension}, we explain how to address the same setting when the values of the secrets are chosen by the sub-dealers.

\begin{defn}[Monotone access structure \cite{benaloh1988generalized}]
A set $\mathbb{A}$ of subsets of $\mathcal{L}$ is  a monotone access structure when for any $\mathcal{T} \subseteq \mathcal{L}$, if $\mathcal{T}$ contains a set that belongs to $\mathbb{A}$, then $\mathcal{T}$ also belongs to $\mathbb{A}$. We write the complement of $\mathbb{A}$ in $2^{\mathcal{L}}$ as $\mathbb{U} \triangleq 2^{\mathcal{L}} \backslash \mathbb{A}$.
\end{defn}

\begin{defn} \label{definition_modelg}
For $d \in \mathcal{D}$, define the alphabet  $\mathcal{S}_d \triangleq \llbracket 1 ,2^{nR_d} \rrbracket$ and  $\mathcal{S}_{\mathcal{D}}\triangleq \bigtimes_{d\in \mathcal{D}} \mathcal{S}_d$. A $( (2^{nR_d})_{d \in \mathcal{D}},\mathbb{A},\mathbb{U},n)$ secret-sharing strategy consists of:
\begin{itemize}
\item A monotone access structure $\mathbb{A}$.
\item $D$ sub-dealers indexed by the set $\mathcal{D}$.
\item $L$ participants indexed by the set $\mathcal{L}$.
\item $D$ encoding functions $(f_d)_{d\in\mathcal{D}}$, where $f_d:\mathcal{Y}^n_d \to  \mathcal{M}_d$, $d\in\mathcal{D}$, with $\mathcal{M}_d$ an arbitrary finite alphabet.
\item $D$ encoding functions $(g_d)_{d\in\mathcal{D}}$, where $g_d:\mathcal{Y}^n_d \to  \mathcal{S}_d$, $d\in\mathcal{D}$.
\item $|\mathbb{A}|\times D$ decoding functions $(h_{\mathcal{A},d})_{\mathcal{A}\in\mathbb{A}, d\in\mathcal{D}}$, where $h_{\mathcal{A},d}: \mathcal{X}_{\mathcal{A}}^n  \times \mathcal{M}_{\mathcal{D}} \to \mathcal{S}_{d}$, $\mathcal{A} \in \mathbb{A}$, with  $\mathcal{X}_{\mathcal{A}}^n \triangleq \bigtimes_{a\in \mathcal{A}} \mathcal{X}_a^n$ and $\mathcal{M}_{\mathcal{D}} \triangleq \bigtimes_{d\in \mathcal{D}} \mathcal{M}_d$.
\end{itemize}
and operates as follows:
\begin{itemize}
\item Sub-dealer $d \in \mathcal{D}$ observes $Y_d^{n}$.
\item Participant $l\in \mathcal{L}$ observes $X_l^{n}$.
\item Sub-dealer $d \in \mathcal{D}$ sends over a noiseless public authenticated channel the  public communication $M_d \triangleq f_d(Y_d^{n})$ to the participants. We write the global communication of all the sub-dealers as $M_{\mathcal{D}} \triangleq (M_d)_{d\in\mathcal{D}}$. 
\item Sub-dealer $d\in\mathcal{D}$ computes $S_d \triangleq g_d(Y_d^{n})$.
\item Any subset of participants $\mathcal{A} \in \mathbb{A}$ can compute for $d\in\mathcal{D}$, $\widehat{S}_d(\mathcal{A}) \triangleq h_{\mathcal{A},d} (X^n_{\mathcal{A}},M_{\mathcal{D}})$, and thus form
$\widehat{{S}}_{\mathcal{D}} (\mathcal{A})\triangleq(\widehat{S}_d(\mathcal{A}))_{d\in\mathcal{D}}$, an estimate of ${S}_{\mathcal{D}} \triangleq ({S}_d)_{d\in\mathcal{D}}$.
\end{itemize}
\end{defn}

\begin{defn} \label{def}
A secret rate-tuple $(R_d)_{d\in \mathcal{D}}$ is achievable if there exists a sequence of $((2^{nR_d})_{d \in \mathcal{D}}$,$\mathbb{A}$, $\mathbb{U},n)$ secret-sharing strategies such that
\begin{align}
\lim_{n \to \infty}	\displaystyle\max_{ \mathcal{A} \in \mathbb{A}}    \mathbb{P}\left[\widehat{{S}}_{\mathcal{D}} (\mathcal{A}) \neq {S}_{\mathcal{D}} \right] &= 0 \text{ (Reliability),} \label{eqrel} \\
\lim_{n \to \infty}	\displaystyle\max_{ \mathcal{U} \in \mathbb{U}}    I\left({S}_{\mathcal{D}} ; M_{\mathcal{D}}, X^n_{\mathcal{U}}  \right) &= 0 \text{ (Strong Security),}\label{eqSeca} \\
	\lim_{n \to \infty} \log |\mathcal{S}_{\mathcal{D}} | - H({S}_{\mathcal{D}}) &= 0 \text{ (Secret Uniformity)}.\label{eqU}
\end{align}
Let $\mathcal{C}(\mathbb{A})$ denote the set of all achievable secret rate-tuples. When $D=1$,  ${C}(\mathbb{A})$ denotes the supremum   of all achievable secret rates and is called the secret capacity. 
\end{defn}
\eqref{eqrel} means that any subset of participants in $\mathbb{A}$ is able to recover the secret, while \eqref{eqSeca} means that any subset of participants in $\mathbb{U}$ cannot learn any information about the secret even if they pool their observations and the  public communication sent by all the sub-dealers. \eqref{eqU}  means that
the secret is nearly uniform, i.e., the entropy of the secret is nearly equal to its length. In other words, \eqref{eqU} means that we seek secret-sharing strategies that \emph{maximize the entropy of the secret}.

\begin{ex}
Suppose that there are $L=3$ participants who observe $(X_1^n,X_2^n,X_3^n)$ and $D=2$ sub-dealers who observe $(Y_1^n,Y_2^n)$  as depicted in Figure \ref{fig:secreta}.  In a first phase, depicted in Figure~\ref{fig:secreta}, Sub-dealer $i \in \{1,2\}$ computes $S_i \triangleq g_i(Y^n_i)$ and $M_i \triangleq f_i(Y_i^n)$, and publicly shares $M_i$ with all the participants. In this example, suppose that  $M_1$, $M_2$, $S_1$, $S_2$ are created such that any two participants must be able to recover $(S_1,S_2)$, i.e., the access structure is defined as $\mathbb{A} \triangleq \{ \{1,2\}, \{1,3\}, \{2,3\}, \{1,2,3\} \}$, but a single participant must not learn information about $(S_1,S_2)$ as described by Equation \eqref{eqSeca} with $\mathbb{U} \triangleq 2^{\mathcal{L}} \backslash \mathbb{A} = \{ \{1\},\{2\},\{3\}\}$. Hence, in a second phase,  depicted in Figure~\ref{fig:secretb}, any two participants $i\in\{1,2,3\}$ and $j\in \{1,2,3\} \backslash \{i\}$ who pool their information, i.e., $(M_1,M_2,X_i^n,X_j^n)$, can estimate  $(S_1,S_2)$ as $(\hat{S}_1,\hat{S}_2)\triangleq (h_{\mathcal{A},1}(X^n_{\mathcal{A}},M_{\mathcal{D}} ),h_{\mathcal{A},2}(X^n_{\mathcal{A}},M_{\mathcal{D}} ))$, where $\mathcal{A} \triangleq \{i,j\}\in \mathbb{A}$ and $M_{\mathcal{D}} \triangleq (M_1,M_2)$.
\end{ex}

\begin{figure}[t!]
    \centering
    \begin{subfigure}[t]{0.5\textwidth}
        \centering
    \includegraphics[width=6.8cm]{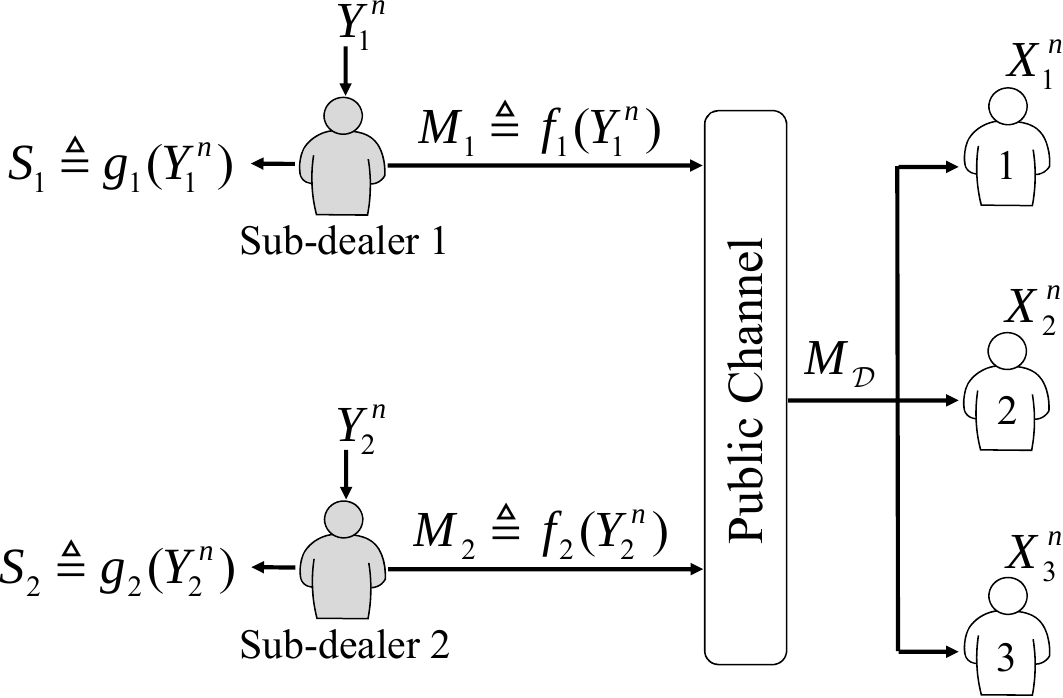}
        \caption{Formation and distribution of shares} \label{fig:secreta}
    \end{subfigure}
    ~~~~~~~~~~ 
    \begin{subfigure}[t]{0.5\textwidth}
        \centering
    \includegraphics[width=4.8cm]{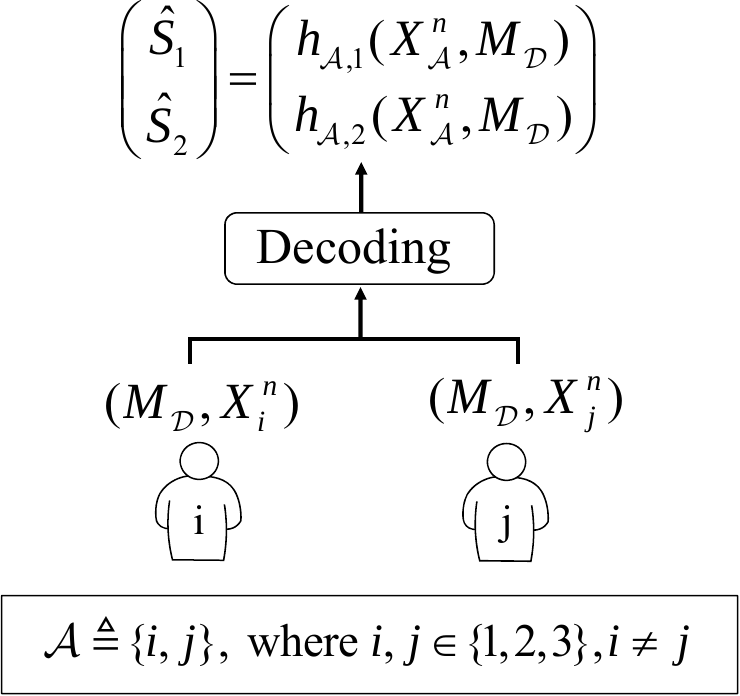}
        \caption{Reconstruction of the secret~$(S_1,S_2)$} \label{fig:secretb}
    \end{subfigure}
\caption{Secret sharing with $D=2$ sub-dealers, $L=3$ users, and the access structure $\mathbb{A} \triangleq \{ \{1,2\}, \{1,3\}, \{2,3\}, \{1,2,3\} \}$.}
\end{figure}

\section{Results} \label{sec:res}
In the following, for a rate-tuple $(R_d)_{d\in\mathcal{D}} \in \mathbb{R}^D_+$ and $\mathcal{S} \subseteq \mathcal{D}$, we use the notation $R_{\mathcal{S }} \triangleq \sum_{i \in \mathcal{S}} R_i$.

\subsection{General access structures} \label{sec:resGEN}
\subsubsection{Results for an arbitrary number $D$ of sub-dealers}

The achievability scheme to derive Theorem \ref{th1ga} relies on  random binning designed to simultaneously satisfy the reliability condition \eqref{eqrel} and the security condition \eqref{eqSeca}.

 \begin{thm}[Inner bound]  \label{th1ga}
We have $\mathcal{R}^{(\textup{in})}(\mathbb{A}) \subseteq \mathcal{C}(\mathbb{A})$, where
\begin{align*}
&\mathcal{R}^{(\textup{in})}(\mathbb{A})\\
&\triangleq \textup{Proj}_{(R_d)_{d\in\mathcal{D}}} \left\{ \left(R_d,R_d'\right)_{d\in\mathcal{D}} : \right.\\
& \left. \phantom{-----}\begin{array}{rl}
R_{\mathcal{S}}' &\geq \displaystyle\max_{\mathcal{A} \in \mathbb{A}} H(Y_{\mathcal{S}}|Y_{\mathcal{S}^c}X_{\mathcal{A}}), \forall \mathcal{S} \subseteq \mathcal{D}\\
R_{\mathcal{S}}' + R_{\mathcal{S}} &\leq \displaystyle\min_{\mathcal{U} \in \mathbb{U}} H(Y_{\mathcal{S}}|X_{\mathcal{U}}), \forall \mathcal{S} \subseteq \mathcal{D}
\end{array}      \right\}\!,
\end{align*}
where $\textup{Proj}_{(R_d)_{d\in\mathcal{D}}}$ denotes the projection on the space defined by the rates $(R_d)_{d\in\mathcal{D}}$.
\end{thm}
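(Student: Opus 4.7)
My plan is to prove the inner bound by a ``reconcile-then-hash'' construction: multi-terminal Slepian--Wolf source coding followed by distributed privacy amplification. Fix a rate-tuple $(R_d,R_d')_{d\in\mathcal{D}}$ strictly in the interior of the region defining $\mathcal{R}^{(\textup{in})}(\mathbb{A})$. At each sub-dealer $d$, perform two independent randomized operations on $Y_d^n$: (a)~uniform random binning into $2^{nR_d'}$ bins, producing the public message $M_d$; and (b)~application of an independently drawn $2$-universal hash $F_d:\mathcal{Y}_d^n\to\llbracket 1,2^{nR_d}\rrbracket$, producing the share $S_d\triangleq F_d(Y_d^n)$. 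The binning functions and the hashes are declared publicly as part of the code.

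For reliability, any authorized set $\mathcal{A}\in\mathbb{A}$ receives $M_{\mathcal{D}}$ and, using side information $X_{\mathcal{A}}^n$, runs joint-typicality decoding of $Y_{\mathcal{D}}^n$. By the multi-terminal Slepian--Wolf theorem with decoder side information, the error probability vanishes provided $R_{\mathcal{S}}'>H(Y_{\mathcal{S}}|Y_{\mathcal{S}^c}X_{\mathcal{A}})$ for every $\mathcal{S}\subseteq\mathcal{D}$; the $\max_{\mathcal{A}}$ in the region enforces this uniformly over $\mathbb{A}$, and a union bound over the finitely many authorized sets yields \eqref{eqrel}. Once $\widehat{Y}_{\mathcal{D}}^n=Y_{\mathcal{D}}^n$ with high probability, applying the public hashes delivers $\widehat{S}_{\mathcal{D}}(\mathcal{A})=S_{\mathcal{D}}$.

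The core step is the joint security \eqref{eqSeca} and uniformity \eqref{eqU}. For each $\mathcal{U}\in\mathbb{U}$, I would apply a distributed variant of the leftover hash lemma to the map that sends $Y_{\mathcal{D}}^n$ to $(M_d,S_d)_{d\in\mathcal{D}}$, conditioned on $X_{\mathcal{U}}^n$. For each $\mathcal{S}\subseteq\mathcal{D}$ the aggregate rate extracted from $Y_{\mathcal{S}}^n$ equals $R_{\mathcal{S}}'+R_{\mathcal{S}}\leq H(Y_{\mathcal{S}}|X_{\mathcal{U}})-\delta$, so restricting the joint distribution to a strongly typical set and replacing the relevant conditional min-entropies by Shannon conditional entropies (up to vanishing slack) supplies the budget the hash lemma needs. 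The resulting vanishing variational distance, converted via the standard continuity bound for mutual information, yields \eqref{eqSeca} after a union bound over $\mathbb{U}$; \eqref{eqU} follows by specialization to $\mathcal{U}=\varnothing\in\mathbb{U}$. A random-coding/expurgation argument over the binning and hash families then extracts a deterministic instance meeting all three criteria.

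The principal obstacle, as flagged in the introduction, is the distributed leftover hash lemma itself. Unlike the single-source setting, here the $2^D$ min-entropy constraints (one per subset $\mathcal{S}$) come from marginals of the \emph{same} joint distribution and must be smoothed simultaneously, while the public bin indices $M_{\mathcal{D}}$ are correlated both across sub-dealers and with the sources being hashed, so the usual device of subtracting $\log|M_{\mathcal{D}}|$ as in \cite[Lemma 10]{Maurer00} is too lossy. I would circumvent both issues by bounding directly the expected $L_2$-collision probability of the combined map $y_{\mathcal{D}}^n\mapsto(M_d,F_d(y_d^n))_{d\in\mathcal{D}}$ on each typical marginal conditioned on $X_{\mathcal{U}}^n$, then converting from $L_2$ to variational distance by Jensen's inequality.
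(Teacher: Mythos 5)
Your proposal is correct and, despite the ``reconcile-then-hash'' framing, the substance matches the paper's proof of Theorem~\ref{th1ga}: random binning for $M_{\mathcal{D}}$ with joint-typicality decoding for reliability, and a distributed leftover-hash-style bound on the \emph{combined} map $Y^n_{\mathcal{D}}\mapsto (M_{\mathcal{D}},S_{\mathcal{D}})$ conditioned on $X^n_{\mathcal{U}}$, established via an $L_2$ (variance/collision) estimate on the typical set and converted to variational distance by Jensen's inequality, which is exactly the second-moment argument in Appendix~\ref{appth1ga}. Your observation that subtracting $\log|M_{\mathcal{D}}|$ as in Maurer--Wolf would be too lossy here, and that the bin and hash outputs must therefore be analyzed jointly, is precisely the point the paper makes when contrasting this proof with the successive (reconcile-then-hash) technique reserved for Theorem~\ref{prop1}.
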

\begin{proof}
See Section \ref{appth1ga}.
\end{proof}

\begin{thm}[Outer bound]  \label{th2ga}
We have $ \mathcal{C}(\mathbb{A}) \subseteq \mathcal{R}^{(\textup{out})}(\mathbb{A})$, where
\begin{align*}
&\mathcal{R}^{(\textup{out})}(\mathbb{A}) \triangleq\\
& \left\{(R_d)_{d\in\mathcal{D}} : 
R_{\mathcal{S}} \leq \displaystyle\min_{\mathcal{A} \in \mathbb{A}} \min_{\mathcal{U} \in \mathbb{U}} I(Y_{\mathcal{S}};X_{\mathcal{\mathcal{A}}} Y_{\mathcal{S}^c}| X_{\mathcal{U}}) , \forall \mathcal{S} \subseteq \mathcal{D}    \right\}.
\end{align*} 
\end{thm}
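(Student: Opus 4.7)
The plan is to show, for every $\mathcal{S} \subseteq \mathcal{D}$, $\mathcal{A} \in \mathbb{A}$, and $\mathcal{U} \in \mathbb{U}$, the single-letter inequality $R_{\mathcal{S}} \leq I(Y_{\mathcal{S}}; X_{\mathcal{A}}, Y_{\mathcal{S}^c} | X_{\mathcal{U}})$ by combining Fano's inequality on the authorized set $\mathcal{A}$ with the security constraint on $\mathcal{U}$, followed by a careful single-letterization. The main obstacle, described in the third paragraph, is to eliminate the conditioning on the distributed public messages $M_{\mathcal{D}}$ from a conditional mutual information.

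First, I reduce the problem to bounding a conditional mutual information. Near-uniformity of $S_{\mathcal{D}}$ together with $H(S_{\mathcal{D}}) \leq \sum_{d} H(S_d) \leq \sum_{d} nR_d$ forces $H(S_{\mathcal{S}}) \geq nR_{\mathcal{S}} - \delta_n$ for every $\mathcal{S}$. Applying the security condition to $S_{\mathcal{D}} \supseteq S_{\mathcal{S}}$ via data processing gives $I(S_{\mathcal{S}}; M_{\mathcal{D}}, X_{\mathcal{U}}^n) \leq \delta_n'$, hence $H(S_{\mathcal{S}} | M_{\mathcal{D}}, X_{\mathcal{U}}^n) \geq nR_{\mathcal{S}} - o(n)$. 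Fano's inequality applied to the reconstruction by $\mathcal{A}$ yields $H(S_{\mathcal{D}} | M_{\mathcal{D}}, X_{\mathcal{A}}^n) \leq n\epsilon_n$, which upper-bounds $H(S_{\mathcal{S}} | M_{\mathcal{D}}, X_{\mathcal{U}}^n, X_{\mathcal{A}}^n, Y_{\mathcal{S}^c}^n)$ by the same quantity. Combining and using the data-processing inequality (since $S_{\mathcal{S}}$ is a function of $Y_{\mathcal{S}}^n$), one obtains
\begin{equation*}
nR_{\mathcal{S}} - o(n) \leq I(Y_{\mathcal{S}}^n; X_{\mathcal{A}}^n, Y_{\mathcal{S}^c}^n | M_{\mathcal{D}}, X_{\mathcal{U}}^n) + n\epsilon_n.
\end{equation*}

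Next, I split the right-hand side via the chain rule as $I(Y_{\mathcal{S}}^n; Y_{\mathcal{S}^c}^n | M_{\mathcal{D}}, X_{\mathcal{U}}^n) + I(Y_{\mathcal{S}}^n; X_{\mathcal{A}}^n | Y_{\mathcal{S}^c}^n, M_{\mathcal{D}}, X_{\mathcal{U}}^n)$ and bound each term. The second term is at most $I(Y_{\mathcal{S}}^n; X_{\mathcal{A}}^n | Y_{\mathcal{S}^c}^n, X_{\mathcal{U}}^n)$: once $Y_{\mathcal{S}^c}^n$ is conditioned upon, $M_{\mathcal{S}^c}$ is deterministic, and since $M_{\mathcal{S}}$ is a function of $Y_{\mathcal{S}}^n$, the chain-rule identity $I(Y_{\mathcal{S}}^n, M_{\mathcal{S}}; X_{\mathcal{A}}^n | Y_{\mathcal{S}^c}^n, X_{\mathcal{U}}^n) = I(Y_{\mathcal{S}}^n; X_{\mathcal{A}}^n | Y_{\mathcal{S}^c}^n, X_{\mathcal{U}}^n)$ together with the nonnegativity of $I(M_{\mathcal{S}}; X_{\mathcal{A}}^n | Y_{\mathcal{S}^c}^n, X_{\mathcal{U}}^n)$ yields the claim. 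The harder first term, where the distributed structure really matters, is bounded via $I(Y_{\mathcal{S}}^n; Y_{\mathcal{S}^c}^n | M_{\mathcal{D}}, X_{\mathcal{U}}^n) \leq I(Y_{\mathcal{S}}^n; Y_{\mathcal{S}^c}^n | X_{\mathcal{U}}^n)$. Writing $I(Y_{\mathcal{S}}^n; Y_{\mathcal{S}^c}^n, M_{\mathcal{D}} | X_{\mathcal{U}}^n)$ two different ways and simplifying reduces this to the entropy inequality
\begin{equation*}
H(M_{\mathcal{S}} | Y_{\mathcal{S}^c}^n, X_{\mathcal{U}}^n) + H(M_{\mathcal{S}^c} | Y_{\mathcal{S}}^n, X_{\mathcal{U}}^n) \leq H(M_{\mathcal{D}} | X_{\mathcal{U}}^n).
\end{equation*}
Splitting the right-hand side as $H(M_{\mathcal{S}^c} | X_{\mathcal{U}}^n) + H(M_{\mathcal{S}} | M_{\mathcal{S}^c}, X_{\mathcal{U}}^n)$, the first summand dominates $H(M_{\mathcal{S}^c} | Y_{\mathcal{S}}^n, X_{\mathcal{U}}^n)$ by monotonicity of conditional entropy, while the second dominates $H(M_{\mathcal{S}} | Y_{\mathcal{S}^c}^n, X_{\mathcal{U}}^n)$ because $M_{\mathcal{S}^c}$ is a deterministic function of $Y_{\mathcal{S}^c}^n$, so conditioning on $Y_{\mathcal{S}^c}^n$ is a refinement of conditioning on $M_{\mathcal{S}^c}$. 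This last step is precisely where the hypothesis that each sub-dealer's message depends only on its own observation is used.

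Summing the two bounds gives $I(Y_{\mathcal{S}}^n; X_{\mathcal{A}}^n, Y_{\mathcal{S}^c}^n | M_{\mathcal{D}}, X_{\mathcal{U}}^n) \leq I(Y_{\mathcal{S}}^n; X_{\mathcal{A}}^n, Y_{\mathcal{S}^c}^n | X_{\mathcal{U}}^n)$. A standard single-letterization using the i.i.d.\ structure of the source yields $I(Y_{\mathcal{S}}^n; X_{\mathcal{A}}^n, Y_{\mathcal{S}^c}^n | X_{\mathcal{U}}^n) = n\, I(Y_{\mathcal{S}}; X_{\mathcal{A}}, Y_{\mathcal{S}^c} | X_{\mathcal{U}})$. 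Dividing by $n$, sending $n \to \infty$, and minimizing over $\mathcal{A} \in \mathbb{A}$ and $\mathcal{U} \in \mathbb{U}$ completes the proof.
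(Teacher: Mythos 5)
Your proof is correct, but it takes a genuinely different route from the paper's. The paper first establishes the chain of inequalities for the case $\mathcal{S}=\mathcal{D}$ (Fano plus security plus a direct chain-rule manipulation that produces $nR_{\mathcal{D}}\leq n\,I(Y_{\mathcal{D}};X_{\mathcal{A}}|X_{\mathcal{U}})+o(n)$), and then disposes of a general $\mathcal{S}\subseteq\mathcal{D}$ by a terse genie argument: give $Y_{\mathcal{S}^c}^n$ to the participants and rerun the same chain with the variables reinterpreted. You instead work directly with the subset $\mathcal{S}$ throughout and confront the key obstacle --- removing the conditioning on $M_{\mathcal{D}}$ from $I(Y_{\mathcal{S}}^n;X_{\mathcal{A}}^n Y_{\mathcal{S}^c}^n\,|\,M_{\mathcal{D}}X_{\mathcal{U}}^n)$ --- head on, by splitting via chain rule and reducing the hard half to the explicit entropy inequality $H(M_{\mathcal{S}}\,|\,Y_{\mathcal{S}^c}^n X_{\mathcal{U}}^n)+H(M_{\mathcal{S}^c}\,|\,Y_{\mathcal{S}}^n X_{\mathcal{U}}^n)\leq H(M_{\mathcal{D}}\,|\,X_{\mathcal{U}}^n)$, proved by exploiting that $M_{\mathcal{S}^c}$ is a deterministic function of $Y_{\mathcal{S}^c}^n$. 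This isolates precisely where the distributed (per-dealer) structure of the public communication is used, which the paper's genie substitution leaves implicit. The trade-off is length and redundancy: the same conclusion follows from a shorter single chain-rule pass, namely $I(Y_{\mathcal{S}}^n;X_{\mathcal{A}}^n Y_{\mathcal{S}^c}^n\,|\,M_{\mathcal{D}}X_{\mathcal{U}}^n)\leq I(Y_{\mathcal{S}}^n M_{\mathcal{S}};X_{\mathcal{A}}^n Y_{\mathcal{S}^c}^n M_{\mathcal{S}^c}\,|\,X_{\mathcal{U}}^n)=I(Y_{\mathcal{S}}^n;X_{\mathcal{A}}^n Y_{\mathcal{S}^c}^n\,|\,X_{\mathcal{U}}^n)$, using exactly the same two structural facts ($M_{\mathcal{S}}$ a function of $Y_{\mathcal{S}}^n$, $M_{\mathcal{S}^c}$ a function of $Y_{\mathcal{S}^c}^n$) without the two-term decomposition or the intermediate entropy inequality. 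One small presentational wrinkle: the sentence deriving $H(S_{\mathcal{S}})\geq nR_{\mathcal{S}}-\delta_n$ would read more cleanly if you combined near-uniformity (giving $H(S_{\mathcal{D}})\geq nR_{\mathcal{D}}-o(n)$) with $H(S_{\mathcal{D}})\leq H(S_{\mathcal{S}})+nR_{\mathcal{S}^c}$ directly, rather than routing through $\sum_d H(S_d)$; the conclusion is the same, but the stated subadditivity bound is not the inequality that does the work.
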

\begin{proof}
See Section \ref{secconvproof2}.
\end{proof}
 
 Note that it is challenging to simplify the inner bound $\mathcal{R}^{(\textup{in})}(\mathbb{A})$ in  Theorem \ref{th1ga} because the set functions  $\mathcal{S} \mapsto \displaystyle\max_{\mathcal{A} \in \mathbb{A}} H(Y_{\mathcal{S}}|Y_{\mathcal{S}^c}X_{\mathcal{A}})$ and $\mathcal{S} \mapsto \displaystyle\min_{\mathcal{U} \in \mathbb{U}} H(Y_{\mathcal{S}}|X_{\mathcal{U}})$  are not necessarily submodular or supermodular and, consequently, Fourier-Motzkin elimination is not easily applicable for a large number of sub-dealers~$D$. As described next, one can, however, obtain simplified bounds when $D=1$ and $D=2$, and a capacity result for threshold access structures when the source of randomness corresponds to pairwise secret keys.

\subsubsection{Results for a two-sub-dealer setting, i.e., $D=2$}

\begin{cor}[Inner bound] \label{cor1}
Assume that $D=2$. We have $\mathcal{R}^{(\textup{in})}(\mathbb{A}) \subseteq \mathcal{C}(\mathbb{A})$, where $\mathcal{R}^{(\textup{in})}(\mathbb{A})$ is defined in \eqref{eqcor1}.
\end{cor}
\begin{figure*}[t!]
\begin{align*}
\mathcal{R}^{(\textup{in})}(\mathbb{A}) \triangleq  \left\{\!(R_1, R_2) :\! 
\begin{array}{rl}
R_{1} \!\!&\leq \displaystyle\min_{\mathcal{A} \in \mathbb{A}} \min_{\mathcal{U} \in \mathbb{U}} \left( I(Y_{1};Y_2X_{\mathcal{A}}) - I(Y_{1};X_{\mathcal{U}}) \right)   \\
R_{2} \!\!&\leq  \displaystyle\min_{\mathcal{A} \in \mathbb{A}} \min_{\mathcal{U} \in \mathbb{U}} \left( I(Y_{2};Y_1 X_{\mathcal{A}}) - I(Y_{2};X_{\mathcal{U}}) \right) \\
R_1 + R_2 \!\!&\leq  \min \!\!\left(\!\!\!\!\!\!\! \begin{array}{rl}
& \displaystyle\min_{\mathcal{A} \in \mathbb{A}}    I(Y_{\mathcal{D}};X_{\mathcal{A}}) - \max_{\mathcal{U} \in \mathbb{U}} I(Y_{\mathcal{D}};X_{\mathcal{U}})  ,\\
&  \displaystyle\min_{\mathcal{A} \in \mathbb{A}}    I(Y_{1};Y_2 X_{\mathcal{A}}) + \displaystyle\min_{\mathcal{A} \in \mathbb{A}}    I(Y_{2}; X_{\mathcal{A}}|Y_1) - \displaystyle\max_{\mathcal{U} \in \mathbb{U}}  I(Y_{\mathcal{D}};X_{\mathcal{U}}),\\
&  \displaystyle\min_{\mathcal{A} \in \mathbb{A}}    I(Y_{\mathcal{D}}; X_{\mathcal{A}}) \!-\!  \displaystyle\max_{\mathcal{U} \in \mathbb{U}}  I(Y_{1};X_{\mathcal{U}}) \!  -\!  \displaystyle\max_{\mathcal{U} \in \mathbb{U}}  I(Y_{2};X_{\mathcal{U}}) + I(Y_1;Y_2) \end{array} \!\! \!\right)  \end{array}     
\!\!\!\!\!\right\} \numberthis \label{eqcor1}
\end{align*} 
\hrulefill
\end{figure*}
\begin{cor}[Outer bound]  \label{cor2}
Assume that $D=2$. We have $ \mathcal{R}^{(\textup{out})}(\mathbb{A}) \supseteq \mathcal{C}(\mathbb{A}) $, where
\begin{align*}
&\mathcal{R}^{(\textup{out})}(\mathbb{A})\\& \triangleq \left\{(R_1, R_2) : \!\!
\begin{array}{rl}
R_{1} &\leq \displaystyle\min_{\mathcal{A} \in \mathbb{A}} \min_{\mathcal{U} \in \mathbb{U}} I(Y_1;X_{\mathcal{\mathcal{A}}} Y_{2}| X_{\mathcal{U}}) \\
R_{2} &\leq \displaystyle\min_{\mathcal{A} \in \mathbb{A}} \min_{\mathcal{U} \in \mathbb{U}} I(Y_{2};X_{\mathcal{\mathcal{A}}} Y_{1}| X_{\mathcal{U}}) \\
R_1 + R_2 &\leq \displaystyle\min_{\mathcal{A} \in \mathbb{A}} \min_{\mathcal{U} \in \mathbb{U}} I(Y_{\mathcal{D}};X_{\mathcal{\mathcal{A}}} | X_{\mathcal{U}}) 
\end{array}    \!\! \right\}.
\end{align*} 
\end{cor}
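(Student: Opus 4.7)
The plan is to obtain Corollary \ref{cor2} as a direct specialization of Theorem \ref{th2ga} to the case $D=2$. Theorem \ref{th2ga} already establishes that every achievable rate-tuple $(R_d)_{d \in \mathcal{D}}$ must satisfy
\[
R_{\mathcal{S}} \leq \min_{\mathcal{A} \in \mathbb{A}} \min_{\mathcal{U} \in \mathbb{U}} I(Y_{\mathcal{S}}; X_{\mathcal{A}} Y_{\mathcal{S}^c} \mid X_{\mathcal{U}})
\]
for every $\mathcal{S} \subseteq \mathcal{D}$. With $\mathcal{D} = \{1,2\}$ there are only four subsets of $\mathcal{D}$ to examine, so the proof reduces to enumerating them and matching the resulting inequalities to those displayed in the corollary.

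Concretely, I would proceed case by case. For $\mathcal{S} = \emptyset$ the bound is trivially $0 \leq 0$ and contributes nothing. For $\mathcal{S} = \{1\}$ we have $\mathcal{S}^c = \{2\}$, which yields the first displayed bound $R_1 \leq \min_{\mathcal{A},\mathcal{U}} I(Y_1; X_{\mathcal{A}} Y_2 \mid X_{\mathcal{U}})$; the case $\mathcal{S} = \{2\}$ is symmetric and gives the second bound. Finally, for $\mathcal{S} = \mathcal{D}$ we have $\mathcal{S}^c = \emptyset$, so the conditioning term $Y_{\mathcal{S}^c}$ drops out of the mutual information and we recover the sum-rate bound $R_1 + R_2 \leq \min_{\mathcal{A},\mathcal{U}} I(Y_{\mathcal{D}}; X_{\mathcal{A}} \mid X_{\mathcal{U}})$. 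Intersecting these three bounds reproduces exactly the region $\mathcal{R}^{(\textup{out})}(\mathbb{A})$ defined in the corollary.

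I do not foresee any genuine obstacle here, because all of the real work is contained in the converse argument for Theorem \ref{th2ga} (given in Appendix \ref{appth2ga}), which presumably combines a Fano-type bound on the reconstruction error for each $\mathcal{A} \in \mathbb{A}$, the strong-security condition relative to $(M_{\mathcal{D}}, X_{\mathcal{U}}^n)$ for each $\mathcal{U} \in \mathbb{U}$, the near-uniformity condition on $S_{\mathcal{D}}$, and single-letterization via a time-sharing random variable of the resulting conditional-entropy expression. Once Theorem \ref{th2ga} is granted, Corollary \ref{cor2} follows by the enumeration above with no additional ingredients; accordingly, I would present it as a one- or two-sentence justification rather than a standalone proof.
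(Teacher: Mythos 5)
Your proposal is correct and matches the paper's approach exactly: the paper states that Corollary \ref{cor2} is a consequence of Theorem \ref{th2ga}, and your enumeration of $\mathcal{S} \in \{\emptyset,\{1\},\{2\},\{1,2\}\}$ is precisely the specialization needed.
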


 Corollary \ref{cor1} is obtained from Theorem \ref{th1ga} by using Fourier-Motzkin elimination. Corollary \ref{cor2} is a consequence of Theorem~\ref{th2ga}.
\subsubsection{Results for a single-dealer setting, i.e., $D=1$}

\begin{cor} \label{cor3}
Assume that $D=1$. We have the following lower and upper bounds for the secret capacity ${C}(\mathbb{A})$
\begin{multline*}
\displaystyle\min_{\mathcal{A} \in \mathbb{A}} \min_{\mathcal{U} \in \mathbb{U}} \left( I(Y_1;X_{\mathcal{A}}) - I(Y_1;X_{\mathcal{U}}) \right) \\ \leq {C}(\mathbb{A}) \\  \leq \displaystyle\min_{\mathcal{A} \in \mathbb{A}} \min_{\mathcal{U} \in \mathbb{U}} I(Y_1;X_{\mathcal{\mathcal{A}}} | X_{\mathcal{U}}).
\end{multline*}
\end{cor}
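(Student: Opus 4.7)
The corollary is the specialization of Theorems~\ref{th1ga} and~\ref{th2ga} to the case $D=1$, so the plan is to carry out that specialization and then eliminate the auxiliary rate in the inner region by Fourier--Motzkin. Throughout, the index set $\mathcal{D}=\{1\}$ has only two subsets, $\mathcal{S}=\emptyset$ and $\mathcal{S}=\{1\}$, and the case $\mathcal{S}=\emptyset$ yields trivial constraints ($R_\emptyset'\geq 0$ and $R_\emptyset+R_\emptyset'\leq 0$, both satisfied with equality by the convention of empty sums), so only $\mathcal{S}=\{1\}$ matters.

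\textbf{Upper bound.} Starting from the outer region in Theorem~\ref{th2ga}, set $\mathcal{S}=\{1\}$, so that $\mathcal{S}^c=\emptyset$ and $Y_{\mathcal{S}^c}$ is a constant (degenerate random variable). The constraint then collapses to
\[
R_1 \;\leq\; \min_{\mathcal{A}\in\mathbb{A}}\min_{\mathcal{U}\in\mathbb{U}} I(Y_1; X_{\mathcal{A}}\, Y_{\mathcal{S}^c} \mid X_{\mathcal{U}}) \;=\; \min_{\mathcal{A}\in\mathbb{A}}\min_{\mathcal{U}\in\mathbb{U}} I(Y_1; X_{\mathcal{A}} \mid X_{\mathcal{U}}),
\]
which is exactly the stated upper bound.

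\textbf{Lower bound.} Starting from the inner region in Theorem~\ref{th1ga}, keep only the nontrivial block $\mathcal{S}=\{1\}$, which (since $Y_{\mathcal{S}^c}$ is constant) reads
\[
R_1' \;\geq\; \max_{\mathcal{A}\in\mathbb{A}} H(Y_1\mid X_{\mathcal{A}}), \qquad R_1' + R_1 \;\leq\; \min_{\mathcal{U}\in\mathbb{U}} H(Y_1\mid X_{\mathcal{U}}).
\]
Projecting onto $R_1$ amounts to requiring the interval for $R_1'$ to be non-empty, i.e.
\[
R_1 \;\leq\; \min_{\mathcal{U}\in\mathbb{U}} H(Y_1\mid X_{\mathcal{U}}) \;-\; \max_{\mathcal{A}\in\mathbb{A}} H(Y_1\mid X_{\mathcal{A}}) \;=\; \min_{\mathcal{A}\in\mathbb{A}}\min_{\mathcal{U}\in\mathbb{U}}\bigl[H(Y_1\mid X_{\mathcal{U}}) - H(Y_1\mid X_{\mathcal{A}})\bigr].
\]
Adding and subtracting $H(Y_1)$ inside the bracket rewrites the right-hand side as $\min_{\mathcal{A},\mathcal{U}}\bigl[I(Y_1;X_{\mathcal{A}}) - I(Y_1;X_{\mathcal{U}})\bigr]$, which is the stated lower bound.

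\textbf{Anticipated obstacles.} There is really no substantive obstacle; the only points to be careful about are (i) checking that the degenerate cases $\mathcal{S}=\emptyset$ and $Y_{\mathcal{S}^c}$ being constant handle correctly, and (ii) verifying that the Fourier--Motzkin step commutes the $\max$ over $\mathcal{A}$ with the $\min$ over $\mathcal{U}$ into a joint minimum, which is immediate because the two optimizations are over independent variables in an additively separable expression.
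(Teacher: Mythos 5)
Your proof is correct and takes the same approach as the paper, which simply states that Corollary~\ref{cor3} is a direct consequence of Theorems~\ref{th1ga} and~\ref{th2ga}; you have merely filled in the routine details (specializing to $\mathcal{S}=\{1\}$, eliminating $R_1'$, and rewriting conditional entropies as mutual informations), all of which are accurate.
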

Corollary \ref{cor3} is a  consequence of Theorem~\ref{th1ga} and Theorem~\ref{th2ga}.
\subsection{All-or-nothing access structure} \label{sec:resAON}
In this section, we consider the all-or-nothing access structure denoted by $\mathbb{A}^{\star} \triangleq \{ \mathcal{L} \}$. This setting corresponds to the case where all the participants are needed to reconstruct the secret.

\subsubsection{Results for an arbitrary number $D$ of sub-dealers}

The achievability proof technique for Theorem \ref{prop1} is different than the proof technique for a general access structure in Theorem \ref{th1ga}. Specifically, we successively, instead of simultaneously, handle the reliability constraint \eqref{eqrel} and the security constraint~\eqref{eqSeca}. This strategy is, for instance, used for secret-key generation~\cite{Bennett95,cachin1997linking,Maurer00}. However, in our \emph{distributed setting}, the application of this strategy is not straightforward and we discuss in the proof the main technical challenges that needs to be overcome to obtain this extension. The first step of our coding strategy, to handle the reliability constraint,  involves  a careful design of an exponential number (with respect to $D$) of nested binnings. The second step of our coding strategy, to handle the security constraints, involves a new variant of the distributed leftover hash lemma (Lemma \ref{lemloh} in Appendix \ref{sec:analy}).  Note that the proof technique used to prove Theorem~\ref{prop1} has at least two  advantages compared to a joint random binning approach as in Theorem~\ref{th1ga}. First, no auxiliary rate appears in the achievability region of Theorem~\ref{prop1}, second, it provides insight for the design of explicit secret-sharing schemes by showing that a two-layer design approach that separates the reliability constraint from the security constraints can be used.

\begin{thm}[Inner bound] \label{prop1}
We have $\mathcal{R}_{1}^{(\textup{in})}\subseteq \mathcal{C}(\mathbb{A}^{\star})$, with 
\begin{align*} 
\mathcal{R}_{1}^{(\textup{in})}  \triangleq \!\! \left\{ \! (R_d)_{d \in \mathcal{D}} : R_{\mathcal{S }} \leq   \min_{ \mathcal{T} \subsetneq \mathcal{L} }  I(Y_{\mathcal{S}};   X_{\mathcal{L}}  |X_{\mathcal{T}}) 
, \forall \mathcal{S} \subseteq \mathcal{D} \right\}.
\end{align*}
\end{thm}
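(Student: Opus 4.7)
The plan is to establish achievability by a two-stage scheme that \emph{successively} handles reliability and security, in contrast to the joint random-binning strategy behind Theorem~\ref{th1ga}. Stage one is a distributed reconciliation: each sub-dealer $d\in\mathcal{D}$ publishes a message $M_d$ that allows the unique authorised set $\mathcal{L}$ to reconstruct $Y_\mathcal{D}^n$. Stage two consists of each sub-dealer locally extracting its share $S_d$ from $Y_d^n$ via an independent $2$-universal hash $F_d$, and the secrecy/uniformity analysis relies on a new variant of the distributed leftover hash lemma tailored to the reconciliation performed in stage one.

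For reconciliation, I would have sub-dealer $d$ send the index $M_d$ of a uniformly random bin of $\mathcal{Y}_d^n$ at rate $R'_d$, with $(R'_d)_{d\in\mathcal{D}}$ chosen inside the Slepian--Wolf polytope for $Y_\mathcal{D}$ with side information $X_\mathcal{L}$, i.e., $R'_\mathcal{S}\geq H(Y_\mathcal{S}|Y_{\mathcal{S}^c}X_\mathcal{L})+\delta$ for every $\mathcal{S}\subseteq\mathcal{D}$. A standard joint-typicality decoder using $(M_\mathcal{D},X_\mathcal{L}^n)$ then recovers $Y_\mathcal{D}^n$ with vanishing error, which forces $\widehat{S}_\mathcal{D}(\mathcal{L})=S_\mathcal{D}$ and secures \eqref{eqrel}. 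For \eqref{eqSeca}--\eqref{eqU}, the new distributed leftover hash lemma would show that for every $\mathcal{T}\subsetneq\mathcal{L}$ the joint distribution of $(S_\mathcal{D},M_\mathcal{D},X_\mathcal{T}^n)$ is close in variational distance to the product of the uniform distribution on $\mathcal{S}_\mathcal{D}$ with $p_{M_\mathcal{D}X_\mathcal{T}^n}$, provided
\begin{equation*}
nR_\mathcal{S}\leq nI(Y_\mathcal{S};X_\mathcal{L}|X_\mathcal{T})-o(n)\quad\text{for every }\mathcal{S}\subseteq\mathcal{D}.
\end{equation*}
Taking the minimum over $\mathcal{T}\subsetneq\mathcal{L}$ yields exactly $\mathcal{R}_1^{(\textup{in})}$, and a union bound over the $O(2^L)$ unauthorised sets together with derandomisation of the binning and hashing delivers a deterministic strategy simultaneously satisfying \eqref{eqrel}, \eqref{eqSeca}, and \eqref{eqU}.

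The main obstacle is the new distributed leftover hash lemma itself. A naive reduction to the non-distributed case would subtract the full public-communication rate $\log|M_\mathcal{D}|=nR'_\mathcal{D}$ from the min-entropy of $Y_\mathcal{S}^n$ given $X_\mathcal{T}^n$, which is wasteful because $M_{\mathcal{S}^c}$ depends on $Y_\mathcal{S}^n$ only through the correlation of $Y_{\mathcal{S}^c}^n$ with $Y_\mathcal{S}^n$. The sharper bound I would prove is
\begin{equation*}
H_\infty^{\epsilon}(Y_\mathcal{S}^n \mid M_\mathcal{D}, X_\mathcal{T}^n) \geq n\,I(Y_\mathcal{S}; X_\mathcal{L}|X_\mathcal{T}) - o(n),
\end{equation*}
which I would attack by (i) restricting the analysis to the joint typical set of $(Y_\mathcal{D}^n,X_\mathcal{L}^n)$, on which $M_\mathcal{D}$ is near-uniform and $H(Y_\mathcal{D}^n|M_\mathcal{D},X_\mathcal{L}^n)=o(n)$ by the Slepian--Wolf property, and (ii) performing a change of measure from $X_\mathcal{L}^n$ to $X_\mathcal{T}^n$ that controls the $\mathcal{S}$-indexed marginal min-entropies simultaneously, exploiting the fact that $M_d$ is a function of $Y_d^n$ only. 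Substituting this bound into the standard universal-hashing argument for each $\mathcal{S}$ then delivers the near-uniformity of $S_\mathcal{D}$ conditional on $(M_\mathcal{D},X_\mathcal{T}^n)$ and concludes the proof.
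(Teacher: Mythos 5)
Your two-stage outline (reconciliation followed by privacy amplification via a new distributed leftover hash lemma) is the right skeleton and matches the paper's high-level strategy, but the plan glosses over the part of the argument that the paper identifies as the technical crux, and the step you label "change of measure" is not a proof.

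The obstacle is exactly the one you identify: after publishing $M_{\mathcal{D}}$, one must show that for every $\mathcal{S}\subseteq\mathcal{D}$ and every $\mathcal{T}\subsetneq\mathcal{L}$ the (smooth) min-entropy of $Y_{\mathcal{S}}^n$ given $(M_{\mathcal{D}},X_{\mathcal{T}}^n)$ is close to $nI(Y_{\mathcal{S}};X_{\mathcal{L}}|X_{\mathcal{T}})$, which is strictly better than what one gets by naively subtracting $\log|M_{\mathcal{D}}|$ (the paper points out that the naive route via \cite[Lemma~10]{Maurer00} yields only $R_{\mathcal{S}}\leq\min_{\mathcal{T}}[I(Y_{\mathcal{S}};X_{\mathcal{L}}|X_{\mathcal{T}})-H(Y_{\mathcal{S}^c}|Y_{\mathcal{S}}X_{\mathcal{L}})]^+$). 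Where your plan falls short is that a generic Slepian--Wolf binning is not enough to carry out the sharper estimate: the paper replaces it with a \emph{nested} binning with $2^D+1$ layers per sub-dealer, decoded successively (a corner-point, not joint-typicality, decoder), chosen so that for every $\mathcal{S}\subseteq\mathcal{D}$ an appropriate prefix $M_{i,1:j}$ of the communication $M_i$ is nearly uniform and nearly independent of $(Y_{1:i-1}^n,Y_{\mathcal{S}}^n,X_{\mathcal{L}}^n)$ (Requirement~(ii) in Appendix~\ref{sec:cs}). This structural property is what makes the decomposition
\begin{align*}
H(Y_{\mathcal{S}}^n|M_{\mathcal{D}}X_{\mathcal{U}}^n)=H(Y_{\mathcal{S}}^n|X_{\mathcal{U}}^n)+H(M_{\mathcal{S}^c}|Y_{\mathcal{S}}^nX_{\mathcal{U}}^n)-H(M_{\mathcal{D}}|X_{\mathcal{U}}^n)
\end{align*}
useful, since it yields $H(M_{\mathcal{S}^c}|Y_{\mathcal{S}}^nX_{\mathcal{U}}^n)\geq nH(Y_{\mathcal{S}^c}|Y_{\mathcal{S}}X_{\mathcal{L}})-o(n)$ for \emph{all} $\mathcal{S}$ simultaneously (Lemma~\ref{lemamp2}). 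Your sketch never constructs the objects that would certify this bound. You also omit the other half of the paper's bookkeeping: the min-entropy is not smoothed and then changed of measure directly; instead the paper first proves a generic distributed leftover hash lemma (Lemma~\ref{lemloh}), then applies a separate smoothing lemma (Lemma~\ref{lems1}) that produces a single subnormalized function whose marginals control \emph{all} $2^D$ min-entropies at once, and, crucially, does so over a super-block of $B$ i.i.d.\ copies of the single-block experiment so that the conversion penalty $\delta_{\mathcal{S}}(n,B)$ vanishes as $B\to\infty$ for fixed $n$. Without the super-block repetition, the Shannon-to-min-entropy gap does not disappear.

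In short: the architecture you propose is the paper's, but the proof you describe for the key min-entropy inequality would not go through as stated with a plain Slepian--Wolf binning and an unspecified "change of measure." You would need to (1) redesign the reconciliation code so that, for every $\mathcal{S}$, a sub-rate portion of each $M_i$ is provably near-uniform given $(Y_{1:i-1}^n,Y_{\mathcal{S}}^n,X_{\mathcal{L}}^n)$; (2) separate the smoothing step (one event controlling all $\mathcal{S}$-marginals) from the Shannon-entropy lower bound; and (3) run the privacy-amplification hash over a super-block to absorb the residual penalty.
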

\begin{proof}
See Section \ref{sec:cs}.
\end{proof}

\begin{thm}[Outer bound] \label{th3}
We have $\mathcal{R}^{(\textup{out})}(\mathbb{A}^{\star}) \supseteq \mathcal{C}(\mathbb{A}^{\star})$, where 
\begin{align*}
&\mathcal{R}^{(\textup{out})}(\mathbb{A}^{\star})\\
&\triangleq \! \left\{ \! (R_d)_{d \in \mathcal{D}} : R_{\mathcal{S }} \leq   \min_{ \mathcal{T} \subsetneq \mathcal{L} }  I(Y_{\mathcal{S}};   X_{\mathcal{L}} Y_{\mathcal{S}^c} |X_{\mathcal{T}}) 
, \forall \mathcal{S} \subseteq \mathcal{D} \right\}.
\end{align*}
\end{thm}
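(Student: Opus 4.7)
I would carry out a standard Fano-plus-single-letterization converse. Fix an achievable $(R_d)_{d\in\mathcal{D}}\in\mathcal{C}(\mathbb{A}^{\star})$, a subset $\mathcal{S}\subseteq\mathcal{D}$, and a $\mathcal{T}\subsetneq\mathcal{L}$. The aim is to show $nR_{\mathcal{S}}\leq n\,I(Y_{\mathcal{S}};X_{\mathcal{L}}Y_{\mathcal{S}^c}\mid X_{\mathcal{T}})+o(n)$, then minimize over $\mathcal{T}$. As a first step, I would exploit the uniformity condition \eqref{eqU} to extract $H(S_{\mathcal{S}})=nR_{\mathcal{S}}+o(1)$. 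Since $H(S_{\mathcal{D}})\leq\sum_{d}H(S_d)\leq\sum_{d}\log|\mathcal{S}_d|=\log|\mathcal{S}_{\mathcal{D}}|$, and by \eqref{eqU} the outer terms differ by $o(1)$, a squeeze yields $H(S_d)=nR_d+o(1)$ for every $d$ and $\sum_d H(S_d)-H(S_{\mathcal{D}})=o(1)$. The latter upper-bounds $0\leq\sum_{d\in\mathcal{S}}H(S_d)-H(S_{\mathcal{S}})$ for every $\mathcal{S}$, so $H(S_{\mathcal{S}})=nR_{\mathcal{S}}+o(1)$.

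Next, I would use the security constraint \eqref{eqSeca}---which applies to every $\mathcal{T}\subsetneq\mathcal{L}$ since $\mathbb{U}=2^{\mathcal{L}}\setminus\{\mathcal{L}\}$---together with Fano's inequality applied to \eqref{eqrel} for the unique authorized set $\mathcal{L}\in\mathbb{A}^{\star}$, giving $H(S_{\mathcal{D}}\mid M_{\mathcal{D}},X_{\mathcal{L}}^n)\leq n\epsilon_n$. This yields
\begin{align*}
nR_{\mathcal{S}}&=H(S_{\mathcal{S}})+o(1)\\
&\leq H(S_{\mathcal{S}}\mid M_{\mathcal{D}},X_{\mathcal{T}}^n)+I(S_{\mathcal{D}};M_{\mathcal{D}},X_{\mathcal{T}}^n)+o(1)\\
&\leq I(S_{\mathcal{S}};X_{\mathcal{L}\setminus\mathcal{T}}^n\mid M_{\mathcal{D}},X_{\mathcal{T}}^n)+H(S_{\mathcal{D}}\mid M_{\mathcal{D}},X_{\mathcal{L}}^n)+o(1)\\
&\leq I(S_{\mathcal{S}};X_{\mathcal{L}\setminus\mathcal{T}}^n\mid M_{\mathcal{D}},X_{\mathcal{T}}^n)+o(n).
\end{align*}

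Finally, I would single-letterize the remaining mutual-information term. Recalling from Definition \ref{definition_modelg} that $(S_{\mathcal{S}},M_{\mathcal{S}})$ is a function of $Y_{\mathcal{S}}^n$ and $M_{\mathcal{S}^c}$ is a function of $Y_{\mathcal{S}^c}^n$, I would pull $M_{\mathcal{S}}$ out of the conditioning into the first argument, enlarge $(S_{\mathcal{S}},M_{\mathcal{S}})$ to $Y_{\mathcal{S}}^n$, then pull $M_{\mathcal{S}^c}$ out of the conditioning and enlarge it to $Y_{\mathcal{S}^c}^n$:
\begin{align*}
I(S_{\mathcal{S}};X_{\mathcal{L}\setminus\mathcal{T}}^n\mid M_{\mathcal{D}},X_{\mathcal{T}}^n)
&\leq I(S_{\mathcal{S}},M_{\mathcal{S}};X_{\mathcal{L}\setminus\mathcal{T}}^n\mid M_{\mathcal{S}^c},X_{\mathcal{T}}^n)\\
&\leq I(Y_{\mathcal{S}}^n;X_{\mathcal{L}\setminus\mathcal{T}}^n\mid M_{\mathcal{S}^c},X_{\mathcal{T}}^n)\\
&\leq I(Y_{\mathcal{S}}^n;X_{\mathcal{L}\setminus\mathcal{T}}^n,M_{\mathcal{S}^c}\mid X_{\mathcal{T}}^n)\\
&\leq I(Y_{\mathcal{S}}^n;X_{\mathcal{L}}^n,Y_{\mathcal{S}^c}^n\mid X_{\mathcal{T}}^n)\\
&=n\,I(Y_{\mathcal{S}};X_{\mathcal{L}}Y_{\mathcal{S}^c}\mid X_{\mathcal{T}}),
\end{align*}
the last equality by the i.i.d.\ memorylessness of the source. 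Minimizing over $\mathcal{T}\subsetneq\mathcal{L}$ and taking $n\to\infty$ completes the proof.

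The main obstacle, in my view, is the bookkeeping in this last step: one must keep $M_{\mathcal{S}^c}$ in the conditioning long enough to swap it for $Y_{\mathcal{S}^c}^n$, while pulling $M_{\mathcal{S}}$ out early to swap it for $Y_{\mathcal{S}}^n$; the naive route of swapping both messages simultaneously for $Y_{\mathcal{D}}^n$ would only produce the weaker outer bound $R_{\mathcal{S}}\leq\min_{\mathcal{T}\subsetneq\mathcal{L}}I(Y_{\mathcal{D}};X_{\mathcal{L}\setminus\mathcal{T}}\mid X_{\mathcal{T}})$ instead of the asymmetric one stated in the theorem. The uniformity-to-$H(S_{\mathcal{S}})$ reduction via the total-correlation squeeze is the other place where care is needed, but it is routine.
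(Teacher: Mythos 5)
Your proposal is correct and takes essentially the same approach as the paper: a Fano-based converse exploiting the uniformity \eqref{eqU}, security \eqref{eqSeca}, and reliability \eqref{eqrel} constraints, followed by data-processing steps and i.i.d.\ single-letterization. The paper obtains the subset-$\mathcal{S}$ bound by invoking the full-$\mathcal{D}$ chain of Theorem~\ref{th2ga} together with a genie that supplies $Y_{\mathcal{S}^c}^n$ to the decoder (so $M_{\mathcal{S}^c}$ becomes computable locally), whereas you make the same reduction explicit by splitting $M_{\mathcal{D}}$ into $M_{\mathcal{S}}$ and $M_{\mathcal{S}^c}$ and absorbing each into $Y_{\mathcal{S}}^n$ and $Y_{\mathcal{S}^c}^n$ respectively; this is a more detailed but equivalent derivation.
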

\begin{proof}
See Section \ref{secconvproof2b}.
\end{proof}

\subsubsection{Results for a two-sub-dealer setting, i.e., $D=2$}

The achievability proof strategy of Theorem \ref{th2} is different than the achievability proof strategy of Theorem \ref{prop1}. Note that in the proof of Theorem \ref{prop1}, we deal with the security constraint \eqref{eqSeca} by jointly considering all the sub-dealers. By contrast,  our achievability proof strategy in Theorem \ref{th2} considers the sub-dealers individually when ensuring \eqref{eqSeca}. Specifically, when $D=2$, one can first realize a secret-sharing scheme between Sub-dealer 1 and the participants  with the requirement $\lim_{n \to \infty} \max_{ \mathcal{U} \in \mathbb{U}}   I\left({S}_{1} ; M_{1}, X^n_{\mathcal{U}}  \right) = 0$, and then realize a secret-sharing scheme between Sub-dealer~2 and the participants with the requirement $\lim_{n \to \infty} \max_{ \mathcal{U} \in \mathbb{U}}  I\left({S}_{2} ; M_{2}, X^n_{\mathcal{U}}, Y_1^n  \right) = 0$, as illustrated in Figure \ref{fig:sj}. As described next, one can show that such an approach is sufficient to ensure the security constraint~\eqref{eqSeca}. However,  the proof is not trivial as we need to modify the reconciliation protocol of Theorem \ref{prop1} described in Section~\ref{sec:cs}, and as an initialization phase is also required, during which  Sub-dealer 2 shares a secret with negligible rate with all the participants. Note also that one could exchange the role of the two sub-dealers in the  protocol to potentially enlarge the achievablity region via this method. This idea leads to Theorem \ref{th2}.

\begin{figure}[t!]
    \centering
    \begin{subfigure}[t]{0.5 \textwidth}
        \centering
    \includegraphics[width=5.36cm]{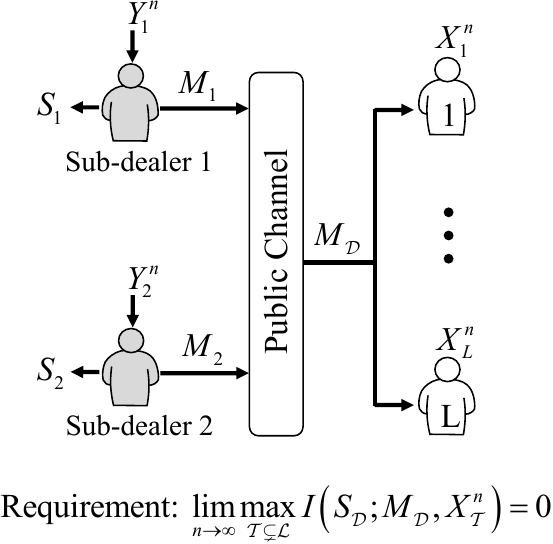}
        \caption{Joint security design strategy}
    \end{subfigure}
\vspace{1em}
    \begin{subfigure}[t]{0.5 \textwidth}
        \centering
    \includegraphics[width=5.36cm]{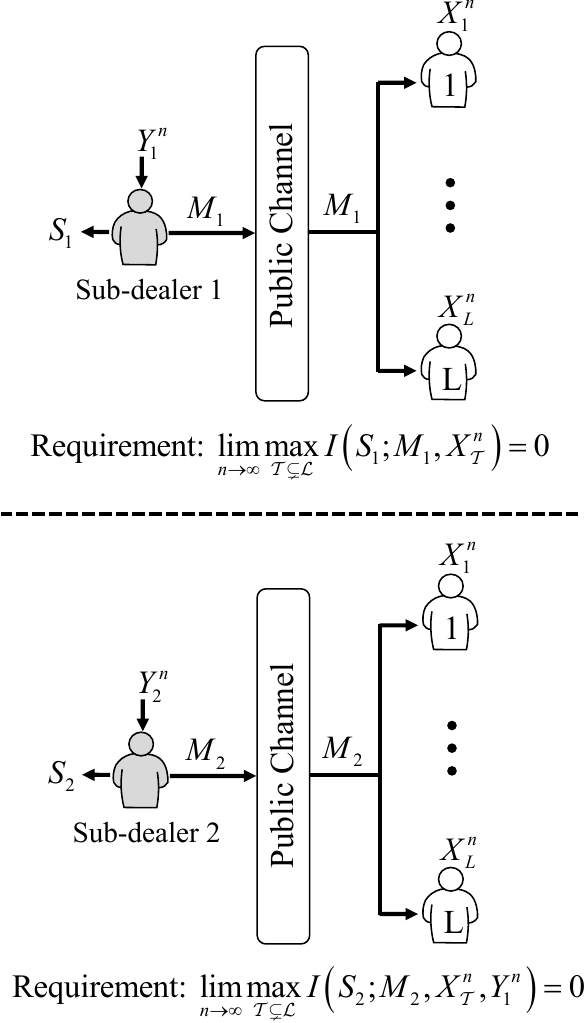}
        \caption{Successive security design strategy}
    \end{subfigure}
\caption{A joint security design strategy for $(S_1,S_2)$ is used in Theorem \ref{prop1}, whereas a successive security design strategy for $(S_1,S_2)$ is used in Theorem \ref{th2}.}\label{fig:sj}
\end{figure}

\begin{thm}[Inner bound] \label{th2}
Assume that $D=2$. If $\displaystyle\min_{d \in \{1,2\}}\displaystyle\min_{ \mathcal{T} \subsetneq \mathcal{L}} I(Y_{d};X_{\mathcal{L}}|X_{\mathcal{T}}) >0$, then $\mathcal{R}_{1}^{(\textup{in})} \subseteq \mathcal{R}_{2}^{(\textup{in})} \subseteq \mathcal{C}(\mathbb{A}^{\star})$, with
\begin{align*}  
\mathcal{R}_{2}^{(\textup{in})} &\triangleq  \left[  \mathcal{R}(\{1\})\times \mathcal{R}(\{2\}|\{1\}) \right]\\
&\phantom{--}\cup \left[  \mathcal{R}(\{1\}|\{2\}) \times \mathcal{R}(\{2\})\right] \cup \mathcal{R}(\{1,2\}),
\end{align*}
where we have defined for any $\mathcal{S}, \mathcal{V} \subseteq \mathcal{D}$,
\begin{align*}
&\mathcal{R}(\mathcal{S}|\mathcal{V})\\
& \triangleq \! \left\{ (R_d)_{d \in \mathcal{S}} \!  : \! R_{\mathcal{B}}  \leq \!  \min_{ \mathcal{T} \subsetneq \mathcal{L} }  I(Y_{\mathcal{B}};     X_{\mathcal{L}}| Y_{\mathcal{V}} X_{\mathcal{T}}) 
  ,\forall \mathcal{B} \subseteq \mathcal{S} \right\} ,
\end{align*}
and $\mathcal{R}(\mathcal{S}) \triangleq \mathcal{R}(\mathcal{S}|\emptyset)$.
\end{thm}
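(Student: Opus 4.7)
The plan is to prove the two inclusions in $\mathcal{R}_1^{(\textup{in})} \subseteq \mathcal{R}_2^{(\textup{in})} \subseteq \mathcal{C}(\mathbb{A}^\star)$ separately. For the first inclusion, I would specialize $\mathcal{R}(\{1,2\})=\mathcal{R}(\{1,2\}\mid\emptyset)$: with $\mathcal{V}=\emptyset$, the bound becomes $R_{\mathcal{B}}\leq \min_{\mathcal{T}\subsetneq\mathcal{L}}[I(Y_{\mathcal{B}};X_{\mathcal{L}})-I(Y_{\mathcal{B}};X_{\mathcal{T}})]$, and since $X_{\mathcal{T}}\subseteq X_{\mathcal{L}}$ this equals $\min_{\mathcal{T}}I(Y_{\mathcal{B}};X_{\mathcal{L}}\mid X_{\mathcal{T}})$, which is exactly the definition of $\mathcal{R}_1^{(\textup{in})}$. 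Hence $\mathcal{R}_1^{(\textup{in})}=\mathcal{R}(\{1,2\})\subseteq\mathcal{R}_2^{(\textup{in})}$.

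For the second inclusion, since $\mathcal{R}(\{1,2\})$ is achievable by Theorem~\ref{prop1}, it suffices to show that $\mathcal{R}(\{1\})\times\mathcal{R}(\{2\}\mid\{1\})$ is achievable; the other asymmetric component $\mathcal{R}(\{2\})\times\mathcal{R}(\{1\}\mid\{2\})$ then follows by swapping the indices of the two sub-dealers. I would design a two-phase protocol. In Phase 1, sub-dealer 1 alone runs the single-dealer all-or-nothing scheme of Theorem~\ref{prop1}, so any $R_1\leq \min_{\mathcal{T}\subsetneq\mathcal{L}} I(Y_1;X_{\mathcal{L}}\mid X_{\mathcal{T}})$ is achievable together with $I(S_1;M_1,X_{\mathcal{U}}^n)\to 0$ for every $\mathcal{U}\in\mathbb{U}$. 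In Phase 2, sub-dealer 2 performs Slepian-Wolf reconciliation of $Y_2^n$ at rate $H(Y_2\mid X_{\mathcal{L}})$ (it cannot condition on $Y_1^n$ because it does not observe it) and then extracts $S_2$ via a randomly chosen two-universal hash; a leftover-hash-lemma argument against the enlarged eavesdropper view $(X_{\mathcal{U}}^n,Y_1^n)$ yields $I(S_2;M_2,X_{\mathcal{U}}^n,Y_1^n)\to 0$ whenever $R_2\leq \min_{\mathcal{T}\subsetneq\mathcal{L}}[I(Y_2;X_{\mathcal{L}})-I(Y_2;Y_1,X_{\mathcal{T}})]$. Since $M_1$ is a deterministic function of $Y_1^n$, this immediately implies $I(S_2;M_1,M_2,X_{\mathcal{U}}^n)\to 0$ as well.

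The main obstacle will be the joint security condition $I(S_1,S_2;M_1,M_2,X_{\mathcal{U}}^n)\to 0$. Using the chain rule, $I(S_1,S_2;M_1,M_2,X_{\mathcal{U}}^n)=I(S_2;M_1,M_2,X_{\mathcal{U}}^n)+I(S_1;M_1,M_2,X_{\mathcal{U}}^n\mid S_2)$; the first term is controlled by Phase 2, but the second is delicate because conditioning on $S_2$ (a function of $Y_2^n$) can leak information about $Y_1^n$ through the correlation between $Y_1^n$ and $Y_2^n$, thereby disrupting the Phase 1 security. To circumvent this, I would follow the hint preceding the theorem and insert an initialization phase in which the sub-dealers and the participants share a secret of vanishing rate, which is feasible precisely because the hypothesis $\min_d\min_{\mathcal{T}\subsetneq\mathcal{L}} I(Y_d;X_{\mathcal{L}}\mid X_{\mathcal{T}})>0$ guarantees a positive single-dealer capacity from each sub-dealer. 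The Phase 1 reconciliation message would then be masked by a one-time-pad derived from this short key, so that $M_1$ becomes effectively independent of the Phase 2 quantities from the eavesdropper's standpoint. I would then combine this modification with a careful chain-rule expansion that isolates the Phase 1 security term, the Phase 2 hash-independence term, and a vanishing contribution from the initialization key.

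Reliability follows directly because Phase 1 allows reconstruction of $S_1$ and Phase 2 allows reconstruction of $S_2$ at any set of participants equal to $\mathcal{L}$, and near-uniformity follows from the uniformity properties of the two-universal hash together with the Phase 1 guarantees. Swapping the sub-dealer labels gives $\mathcal{R}(\{2\})\times\mathcal{R}(\{1\}\mid\{2\})$, completing the argument for $\mathcal{R}_2^{(\textup{in})}\subseteq \mathcal{C}(\mathbb{A}^\star)$.
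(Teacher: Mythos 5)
Your high-level plan matches the paper's: you correctly identify that $\mathcal{R}(\{1,2\})=\mathcal{R}_1^{(\textup{in})}$, that it suffices to achieve $\mathcal{R}(\{1\})\times\mathcal{R}(\{2\}\mid\{1\})$ and swap labels, and that a two-phase protocol (sub-dealer~1 via Theorem~\ref{prop1}, then sub-dealer~2 with the eavesdropper view enlarged by $Y_1^n$) plus an initialization phase yielding a short key used as a one-time pad is the right machinery. However, there are two concrete problems.

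First, your chain-rule decomposition is in the wrong order and would leave an uncontrolled term. You expand as $I(S_1,S_2;M_1,M_2,X_{\mathcal U}^n)=I(S_2;M_1,M_2,X_{\mathcal U}^n)+I(S_1;M_1,M_2,X_{\mathcal U}^n\mid S_2)$ and try to handle the second term. But that term is not controlled by anything you build: bounding it gives something like $I(S_1;Y_2^n,X_{\mathcal U}^n)$, and neither Phase~1 (which controls only $I(S_1;M_1,X_{\mathcal U}^n)$) nor the pad helps here. Masking $M_1$ does not make $S_1$ independent of $Y_2^n$; the correlation of the sources persists. The paper instead conditions on $S_1$ first: $I(S_1,S_2;M_1,M_2,X_{\mathcal T}^n)=I(S_1;M_1,M_2,X_{\mathcal T}^n)+I(S_2;M_1,M_2,X_{\mathcal T}^n\mid S_1)$. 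Since $(M_1,S_1)$ is a function of $Y_1^n$, the second term is $\leq I(S_2;M_2,X_{\mathcal T}^n,Y_1^n)$, which Phase~2 drives to zero, and the first term is further split as $I(S_1;M_1,X_{\mathcal T}^n)+I(S_1;M_2\mid M_1,X_{\mathcal T}^n)$. This order is what makes the argument close.

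Second, you misattribute the role of the pad. The delicate residual term is $I(S_1;M_2\mid M_1,X_{\mathcal T}^n)$: sub-dealer~2's public message can leak sub-dealer~1's secret through source correlation. The pad is therefore applied to part of $M_2$ (and symmetrically to part of $M_1$), not to make ``$M_1$ effectively independent of Phase~2 quantities.'' Concretely, the paper writes $M_d=(M_d',M_d'')$ with $M_d'=K_d\oplus M_{d,3}$ and $M_d''=M_{d,1:2}$, and crucially invokes the reconciliation Requirement~(ii) of Theorem~\ref{prop1} to get $I(M_d'';Y_{1:d-1}^n X_{\mathcal L}^n)=o(1)$. Your sketch never explains why $M_2''$ is almost independent of $(Y_1^n,X_{\mathcal L}^n)$, yet without this property the term $I(S_1;M_2''\mid M_1,X_{\mathcal T}^n)$ cannot be controlled; this is precisely why the paper needs the nested binning construction rather than a plain Slepian--Wolf code for sub-dealer~2. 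You also need a second bookkeeping pass that accounts for the initialization messages $I_1(\mathcal U),I_2(\mathcal U)$ themselves, using the padding of \emph{both} $M_1'$ and $M_2'$ and the almost-uniformity of $K_1,K_2$; your proposal only gestures at this.
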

\begin{proof}
See Section~\ref{App:th2}. 
\end{proof}

From Theorem \ref{th2}, we deduce the following sum-rate achievability result.

\begin{cor}[Sum-rate achievability] \label{ex1}
Assume that $D=2$ and $\displaystyle\min_{d \in \{1,2\}}\displaystyle\min_{ \mathcal{T} \subsetneq \mathcal{L}} I(Y_{d};X_{\mathcal{L}}|X_{\mathcal{T}}) >0$. Define for any $\mathcal{S}, \mathcal{V} \subseteq \mathcal{D}$, $$R(\mathcal{S} | \mathcal{V}) \triangleq \min_{ \mathcal{T} \subsetneq \mathcal{L} }  I(Y_{\mathcal{S}};   X_{\mathcal{L}}| Y_{\mathcal{V}} X_{\mathcal{T}}) .$$
For convenience, we also define for $\mathcal{S} \subseteq \mathcal{D}$, $R(\mathcal{S}) \triangleq R(\mathcal{S}|\emptyset)$.  Theorem~\ref{prop1} shows the achievability of the secret sum-rate $R^{\textup{sum}}_1$, while Theorem~\ref{th2} shows the achievability of the secret sum-rate $\max (R^{\textup{sum}}_{1},  R^{\textup{sum}}_{2}, R^{\textup{sum}}_{3})$, where
\begin{align*}
R^{\textup{sum}}_{1} & \triangleq	\min \left(  R(\{1,2\})  ; R(\{1\}) +R(\{2\})   \right), \\
R^{\textup{sum}}_{2} & \triangleq	 [ R(\{1\})  +  R(\{2\} | \{1\}) ] , \\
R^{\textup{sum}}_{3} & \triangleq	 [R(\{2\})  +  R(\{1\} | \{ 2 \})].
\end{align*}
\end{cor}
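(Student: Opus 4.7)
The plan is to reduce Corollary~\ref{ex1} to a direct maximization of $R_1+R_2$ over the two achievability regions $\mathcal{R}_1^{(\textup{in})}$ and $\mathcal{R}_2^{(\textup{in})}$ given by Theorems~\ref{prop1} and~\ref{th2}, respectively. Nothing beyond rewriting mutual informations and enumerating the three unioned components of $\mathcal{R}_2^{(\textup{in})}$ should be needed, so the argument will essentially be a Fourier–Motzkin-style bookkeeping.

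First I would specialize Theorem~\ref{prop1} to $D=2$. Since $X_{\mathcal{T}}$ is a subvector of $X_{\mathcal{L}}$ for every $\mathcal{T}\subsetneq\mathcal{L}$, the chain rule gives $I(Y_{\mathcal{S}};X_{\mathcal{L}}|X_{\mathcal{T}})=I(Y_{\mathcal{S}};X_{\mathcal{L}})-I(Y_{\mathcal{S}};X_{\mathcal{T}})$, so $\min_{\mathcal{T}\subsetneq\mathcal{L}} I(Y_{\mathcal{S}};X_{\mathcal{L}}|X_{\mathcal{T}})=R(\mathcal{S})$. Hence $\mathcal{R}_1^{(\textup{in})}$ reduces to the polytope defined by $R_1\leq R(\{1\})$, $R_2\leq R(\{2\})$, and $R_1+R_2\leq R(\{1,2\})$. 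Maximizing $R_1+R_2$ over this polytope yields $\min\bigl(R(\{1\})+R(\{2\}),\,R(\{1,2\})\bigr)=R^{\textup{sum}}_1$, proving the first half of the corollary.

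Next I would turn to $\mathcal{R}_2^{(\textup{in})}$ and compute the maximum of $R_1+R_2$ over each of the three sets in the union separately. On $\mathcal{R}(\{1\})\times\mathcal{R}(\{2\}|\{1\})$ the variables are uncoupled, so the sum-rate is $R(\{1\})+R(\{2\}|\{1\})=R^{\textup{sum}}_2$. Symmetrically, on $\mathcal{R}(\{2\})\times\mathcal{R}(\{1\}|\{2\})$ the sum-rate is $R^{\textup{sum}}_3$. Finally, on $\mathcal{R}(\{1,2\})=\mathcal{R}(\{1,2\}|\emptyset)$, the defining constraints (for all $\mathcal{B}\subseteq\{1,2\}$) are exactly the three inequalities obtained in the previous step, and hence the sum-rate equals $R^{\textup{sum}}_1$. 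Since the achievable sum-rate over the union is the max of the sum-rates over the components, the sum-rate guaranteed by Theorem~\ref{th2} is $\max(R^{\textup{sum}}_1,R^{\textup{sum}}_2,R^{\textup{sum}}_3)$, which completes the proof.

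There is no real obstacle here: the only non-trivial manipulation is the identity $I(Y_{\mathcal{S}};X_{\mathcal{L}}|X_{\mathcal{T}})=I(Y_{\mathcal{S}};X_{\mathcal{L}})-I(Y_{\mathcal{S}};X_{\mathcal{T}})$ used to match the notation of $R(\mathcal{S})$, and the observation that on each Cartesian-product component of $\mathcal{R}_2^{(\textup{in})}$ the two rates decouple so that the maximum of $R_1+R_2$ is simply the sum of the componentwise maxima. Everything else is direct substitution into the definitions of $R(\mathcal{S})$ and $R(\mathcal{S}|\mathcal{V})$.
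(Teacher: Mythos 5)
Your argument is correct and is essentially the same direct deduction the paper has in mind (the paper states only that the corollary is deduced from Theorem~\ref{th2} without giving details). The key steps you use — the identity $I(Y_{\mathcal{S}};X_{\mathcal{L}}|X_{\mathcal{T}})=I(Y_{\mathcal{S}};X_{\mathcal{L}})-I(Y_{\mathcal{S}};X_{\mathcal{T}})$ to identify $\min_{\mathcal{T}\subsetneq\mathcal{L}}I(Y_{\mathcal{S}};X_{\mathcal{L}}|X_{\mathcal{T}})$ with $R(\mathcal{S})$, the observation that maximizing $R_1+R_2$ over the polytope $\{R_1\le R(\{1\}),\,R_2\le R(\{2\}),\,R_1+R_2\le R(\{1,2\})\}$ yields $\min(R(\{1\})+R(\{2\}),R(\{1,2\}))$ (which is attainable since $R(\{1,2\})\ge\max(R(\{1\}),R(\{2\}))$), the decoupling on the Cartesian-product components, and taking the maximum of componentwise maxima over the union — are all sound and amount to exactly the Fourier–Motzkin bookkeeping the corollary requires.
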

From Theorem \ref{th3}, we will also have the following outer bound.
\begin{cor}[Outer bound]  \label{cor5}
Assume that $D=2$. We have $ \mathcal{R}^{(\textup{out})}(\mathbb{A}^{\star}) \supseteq \mathcal{C}(\mathbb{A}^{\star}) $, where
\begin{align*}
&\mathcal{R}^{(\textup{out})}(\mathbb{A}^{\star})\\
& \triangleq \left\{(R_1, R_2) : 
\begin{array}{rl}
R_{1} &\leq \displaystyle  \min_{\mathcal{T} \subsetneq \mathcal{L}} I(Y_1;X_{\mathcal{\mathcal{L}}} Y_{2}| X_{\mathcal{T}}) \\
R_{2} &\leq \displaystyle  \min_{\mathcal{T} \subsetneq \mathcal{L}} I(Y_{2};X_{\mathcal{\mathcal{L}}} Y_{1}| X_{\mathcal{T}}) \\
R_1 + R_2 &\leq \displaystyle  \min_{\mathcal{T} \subsetneq \mathcal{L}} I(Y_{\mathcal{D}};X_{\mathcal{\mathcal{L}}} | X_{\mathcal{T}}) 
\end{array}     \right\}.
\end{align*} 
\end{cor}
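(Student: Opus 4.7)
The plan is to derive Corollary~\ref{cor5} as a direct specialization of Theorem~\ref{th3} to the case $D=2$. Theorem~\ref{th3} asserts that any achievable rate tuple $(R_d)_{d\in\mathcal{D}}$ must satisfy
\[
R_{\mathcal{S}} \leq \min_{\mathcal{T}\subsetneq \mathcal{L}} I(Y_{\mathcal{S}}; X_{\mathcal{L}} Y_{\mathcal{S}^c} \mid X_{\mathcal{T}})
\]
for every $\mathcal{S}\subseteq \mathcal{D}$, and my strategy is simply to enumerate the four subsets $\mathcal{S}$ of $\mathcal{D}=\{1,2\}$ and identify each resulting inequality with the corresponding constraint in the statement of Corollary~\ref{cor5}.

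First I would observe that $\mathcal{S}=\emptyset$ yields the trivial constraint $0\leq 0$, which may be discarded. For $\mathcal{S}=\{1\}$ we have $\mathcal{S}^c=\{2\}$, and the bound of Theorem~\ref{th3} becomes $R_1 \leq \min_{\mathcal{T}\subsetneq \mathcal{L}} I(Y_1; X_{\mathcal{L}} Y_2 \mid X_{\mathcal{T}})$, which is precisely the first inequality in Corollary~\ref{cor5}. The case $\mathcal{S}=\{2\}$ is symmetric and yields the second inequality. Finally, for $\mathcal{S}=\{1,2\}=\mathcal{D}$ we have $\mathcal{S}^c=\emptyset$, so $Y_{\mathcal{S}^c}$ is empty (a deterministic quantity) and the bound collapses to $R_1+R_2 \leq \min_{\mathcal{T}\subsetneq \mathcal{L}} I(Y_{\mathcal{D}}; X_{\mathcal{L}} \mid X_{\mathcal{T}})$, which matches the third inequality.

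Since the derivation is merely a rewriting of Theorem~\ref{th3} under $D=2$, there is no real technical obstacle; the only point worth making explicit is that the $\mathcal{S}=\mathcal{D}$ case introduces no $Y_{\mathcal{S}^c}$ term, which is the reason the sum-rate constraint only involves $X_{\mathcal{L}}$ on the right-hand side. The corollary is included to record the explicit two-sub-dealer outer region in a form directly comparable with the inner bound of Theorem~\ref{th2} (and with the general-access-structure outer bound in Corollary~\ref{cor2}).
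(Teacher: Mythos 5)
Your proof is correct and follows exactly the route the paper intends: the paper states the corollary with the remark ``From Theorem~\ref{th3}, we will also have the following outer bound,'' leaving the $D=2$ specialization implicit, and your enumeration of $\mathcal{S}\in\{\emptyset,\{1\},\{2\},\{1,2\}\}$ (noting that $Y_{\mathcal{S}^c}$ vanishes when $\mathcal{S}=\mathcal{D}$) supplies precisely the omitted bookkeeping.
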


Next, we provide a sufficient condition for having found the optimal secret sum-rate in Corollary~\ref{ex1}.
\begin{cor}
We use the same notation as in Corollary \ref{cor5}. If $R(\{1,2\}) \leq R(\{1\}) +R(\{2\})$, then the secret sum-rate $R^{\textup{sum}}_{1}$ in Corollary \ref{ex1} is optimal by Corollary \ref{cor5}.
\end{cor}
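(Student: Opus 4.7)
The plan is to show that the achievable sum-rate $R^{\textup{sum}}_1$ coincides, under the stated hypothesis, with the sum-rate upper bound in Corollary~\ref{cor5}. Under the hypothesis $R(\{1,2\}) \leq R(\{1\})+R(\{2\})$, the minimum in the definition $R^{\textup{sum}}_1 = \min\bigl(R(\{1,2\}),\, R(\{1\})+R(\{2\})\bigr)$ is attained by $R(\{1,2\})$, so $R^{\textup{sum}}_1 = R(\{1,2\})$. It therefore suffices to compare $R(\{1,2\})$ to the outer-bound sum-rate constraint $\min_{\mathcal{T}\subsetneq\mathcal{L}} I(Y_{\mathcal{D}}; X_{\mathcal{L}} \mid X_{\mathcal{T}})$.

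Next I would unwind the definition of $R(\mathcal{S})=R(\mathcal{S}\mid\emptyset)$ from Corollary~\ref{ex1} with $\mathcal{S}=\{1,2\}=\mathcal{D}$ and $\mathcal{V}=\emptyset$, which gives
\[
R(\{1,2\}) = \min_{\mathcal{T}\subsetneq\mathcal{L}} \bigl[ I(Y_{\mathcal{D}}; X_{\mathcal{L}}) - I(Y_{\mathcal{D}}; X_{\mathcal{T}}) \bigr].
\]
Since $\mathcal{T}\subsetneq\mathcal{L}$, the random vector $X_{\mathcal{T}}$ is a sub-tuple of $X_{\mathcal{L}}$, so the chain rule for mutual information yields $I(Y_{\mathcal{D}}; X_{\mathcal{L}}) - I(Y_{\mathcal{D}}; X_{\mathcal{T}}) = I(Y_{\mathcal{D}}; X_{\mathcal{L}} \mid X_{\mathcal{T}})$ for every $\mathcal{T}\subsetneq\mathcal{L}$. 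Taking the minimum over $\mathcal{T}$ gives
\[
R(\{1,2\}) = \min_{\mathcal{T}\subsetneq\mathcal{L}} I(Y_{\mathcal{D}}; X_{\mathcal{L}} \mid X_{\mathcal{T}}),
\]
which is exactly the sum-rate upper bound appearing in Corollary~\ref{cor5}.

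Combining the two observations, under the hypothesis of the corollary we obtain $R^{\textup{sum}}_1 = R(\{1,2\}) = \min_{\mathcal{T}\subsetneq\mathcal{L}} I(Y_{\mathcal{D}}; X_{\mathcal{L}} \mid X_{\mathcal{T}})$, so any achievable rate-pair $(R_1,R_2)\in\mathcal{C}(\mathbb{A}^\star)$ satisfies $R_1+R_2 \leq R^{\textup{sum}}_1$. Hence $R^{\textup{sum}}_1$ is the maximum achievable secret sum-rate, establishing its optimality. There is essentially no obstacle here: the proof is a two-line verification that the conditional-mutual-information form of the outer bound matches the difference-of-mutual-informations form of $R(\{1,2\})$, modulo the trivial application of the chain rule enabled by $\mathcal{T}\subsetneq\mathcal{L}$.
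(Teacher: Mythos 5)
Your proof is correct and takes exactly the approach the paper intends: the paper states the corollary without a separate proof, only citing Corollary~\ref{cor5}, and the intended verification is precisely your two-step computation, namely that the hypothesis forces $R^{\textup{sum}}_1 = R(\{1,2\})$, and the chain rule (valid because $\mathcal{T}\subsetneq\mathcal{L}$ makes $X_{\mathcal{T}}$ a sub-tuple of $X_{\mathcal{L}}$) rewrites $R(\{1,2\})$ as $\min_{\mathcal{T}\subsetneq\mathcal{L}} I(Y_{\mathcal{D}};X_{\mathcal{L}}\mid X_{\mathcal{T}})$, matching the outer-bound sum-rate constraint.
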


\subsubsection{Result for a single-dealer setting, i.e., $D=1$}

In the presence of a single dealer, i.e., when $D=1$, we have the following capacity result.

\begin{thm} \label{cor4}
Assume that $D=1$. The secret capacity ${C}(\mathbb{A}^{\star})$ is given by 
$$
 {C}(\mathbb{A}^{\star}) = \displaystyle\min_{ \mathcal{T} \subsetneq \mathcal{L} }I(Y_{\mathcal{D}};X_{\mathcal{L}}|X_{\mathcal{T}}).
$$ 
\end{thm}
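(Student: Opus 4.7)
The plan is to obtain Theorem~\ref{cor4} by simply specializing the inner bound of Theorem~\ref{prop1} and the outer bound of Theorem~\ref{th3} to $D=1$ and observing that they coincide. Since both bounds have already been established for arbitrary $D$, essentially no new argument is needed: this is a matching-bounds corollary rather than a theorem that requires its own achievability and converse.

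Concretely, first I would list the non-empty subsets $\mathcal{S} \subseteq \mathcal{D}$ when $D=1$. The only such subset is $\mathcal{S} = \{1\} = \mathcal{D}$, with $\mathcal{S}^c = \emptyset$, so $Y_{\mathcal{S}} = Y_{\mathcal{D}}$ and $Y_{\mathcal{S}^c}$ is void. Thus only a single rate constraint survives in each region. For the outer region $\mathcal{R}^{(\textup{out})}(\mathbb{A}^{\star})$ of Theorem~\ref{th3}, the surviving constraint becomes
\begin{equation*}
R_1 \leq \min_{\mathcal{T} \subsetneq \mathcal{L}} I(Y_{\mathcal{D}}; X_{\mathcal{L}} Y_{\mathcal{S}^c} \mid X_{\mathcal{T}}) = \min_{\mathcal{T} \subsetneq \mathcal{L}} I(Y_{\mathcal{D}}; X_{\mathcal{L}} \mid X_{\mathcal{T}}),
\end{equation*}
yielding the desired upper bound on $C(\mathbb{A}^{\star})$.

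Next, I would instantiate the inner region $\mathcal{R}_{1}^{(\textup{in})}$ of Theorem~\ref{prop1} with $D=1$. Here the only non-trivial constraint is again indexed by $\mathcal{S}=\mathcal{D}$, which gives
\begin{equation*}
R_1 \leq \min_{\mathcal{T} \subsetneq \mathcal{L}} I(Y_{\mathcal{D}}; X_{\mathcal{L}} \mid X_{\mathcal{T}}).
\end{equation*}
This matches the outer bound above, so the two sandwich $C(\mathbb{A}^{\star})$ at the claimed value, completing the proof.

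There is no real obstacle to overcome: all the heavy lifting (the successive reconciliation-then-privacy-amplification achievability with the distributed leftover hash lemma variant, and the converse) is already carried out in the proofs of Theorems~\ref{prop1} and~\ref{th3}. The one minor sanity check worth writing out is that the secret-uniformity condition \eqref{eqU} and the strong-security condition \eqref{eqSeca} are preserved when specializing $\mathbb{A}$ to $\mathbb{A}^{\star}$ in the single-dealer case, which is immediate from Definition~\ref{def} since $\mathbb{A}^{\star}=\{\mathcal{L}\}$ and $\mathbb{U} = 2^{\mathcal{L}} \setminus \{\mathcal{L}\}$ is exactly the collection of strict subsets $\mathcal{T} \subsetneq \mathcal{L}$ appearing in the minima above.
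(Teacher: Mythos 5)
Your proof is correct, and it does take a slightly different route from the paper's. The paper derives Theorem~\ref{cor4} from Corollary~\ref{cor3} (itself a $D=1$ specialization of the general-access-structure Theorems~\ref{th1ga} and~\ref{th2ga}), where the lower bound reads $\min_{\mathcal{A}}\min_{\mathcal{U}}\bigl(I(Y_1;X_{\mathcal{A}})-I(Y_1;X_{\mathcal{U}})\bigr)$ and the upper bound $\min_{\mathcal{A}}\min_{\mathcal{U}}I(Y_1;X_{\mathcal{A}}\mid X_{\mathcal{U}})$; closing the gap requires invoking the Markov chain $Y_{\mathcal{D}}-X_{\mathcal{L}}-X_{\mathcal{T}}$ (trivially valid because $X_{\mathcal{T}}$ is a coordinate sub-tuple of $X_{\mathcal{L}}$) so that $I(Y_1;X_{\mathcal{L}})-I(Y_1;X_{\mathcal{T}})=I(Y_1;X_{\mathcal{L}}\mid X_{\mathcal{T}})$. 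You instead specialize the all-or-nothing Theorems~\ref{prop1} and~\ref{th3}: with $D=1$ the only nonempty $\mathcal{S}$ is $\mathcal{D}$ itself, $Y_{\mathcal{S}^c}$ is void, and the inner and outer constraints are \emph{literally} the same expression $\min_{\mathcal{T}\subsetneq\mathcal{L}}I(Y_{\mathcal{D}};X_{\mathcal{L}}\mid X_{\mathcal{T}})$, so no Markov-chain rewriting is needed. Both are one-line matching-bounds arguments; yours is slightly more immediate algebraically, while the paper's route leans on the lighter joint-random-binning achievability of Theorem~\ref{th1ga} rather than the heavier successive reconciliation plus distributed leftover-hash-lemma machinery behind Theorem~\ref{prop1} (though for $D=1$ that machinery collapses to a standard single-user scheme, so the distinction is cosmetic). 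One thing worth noting for completeness is that Theorem~\ref{prop1} and Theorem~\ref{th3} do not themselves depend on Theorem~\ref{cor4}, so your citation order is sound; had you instead reached for Theorem~\ref{th2}, which invokes Theorem~\ref{cor4} in its initialization phase, the argument would have been circular.
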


\begin{proof}
See Section \ref{appth6}.
\end{proof}

Theorem \ref{cor4} can be seen as a counterpart to the result  for a channel model in \cite{zou2015information}.
 
 \begin{ex}
Suppose that $D=1$ and $L=2$. Then, by Theorem \ref{cor4}, we have 
\begin{align*}
C(\mathbb{A}^{\star})
& = \min[ I(Y_{\mathcal{D}};X_{1}|X_2),I(Y_{\mathcal{D}};X_2|X_1) ].
\end{align*}
 \end{ex}

 \begin{ex}
Suppose that $D=1$ and consider $L$ identical and independent channels $C_l = (\mathcal{Y},p_{X|Y},\mathcal{X})$ with  $\mathcal{X}$ and $\mathcal{Y}$ two finite alphabets. Suppose  that, for any $l\in\mathcal{L}$, $\mathcal{X}_l = \mathcal{X}$, and $X_l$ is the output of the channel $C_l$ when $Y_{\mathcal{D}}$, distributed according to $p_Y$, is the input. Then, by Theorem \ref{cor4}, we have 
\begin{align*}
C(\mathbb{A}^{\star}) = I(Y_{\mathcal{D}};X_{1}|X_{\llbracket 2, L \rrbracket}).
 \end{align*}
 \end{ex}

\subsection{Threshold access structures when the source of randomness corresponds to pairwise secret keys}
We define threshold access structures as follows. Let $t \in \llbracket 1 , L \rrbracket$ and $z \in \llbracket 1, t-1 \rrbracket$. Define the access structure
$$\mathbb{A}_t \triangleq \{ \mathcal{S} \subseteq \mathcal{L} : |\mathcal{S}| \geq t \},$$
 and the set of non-authorized participants as  
 $$\mathbb{U}_z \triangleq \{ \mathcal{S} \subseteq \mathcal{L} : |\mathcal{S}| \leq z \},$$
 and consider Definition \ref{def} with the substitution $\mathbb{A} \leftarrow \mathbb{A}_t$ and $\mathbb{U} \leftarrow \mathbb{U}_t$. We denote the capacity region by $\mathcal{C}(\mathbb{A}_t,\mathbb{U}_z)$ instead of $\mathcal{C}(\mathbb{A})$, and the secret capacity by $C(t,z)$ instead of ${C}(\mathbb{A})$ when $D=1$. This setting means that any set of participants of size larger than or equal to~$t$ must be able to recover the secrets, and any set of participants of size smaller than or equal to $z$ must be unable to learn any information about the secrets.
 
Clearly, for arbitrarily correlated source of randomness, the results of Section \ref{sec:resGEN} apply for any $t \in \llbracket 1 , L \rrbracket$ and $z \in \llbracket 1, t-1 \rrbracket$, and the results of Section \ref{sec:resAON} apply for $(t,z)=(L,L-1)$. 
 We then have the following capacity result when the source of randomness corresponds to pairwise secret keys.
\begin{thm}[Capacity region] \label{thm_thr}
Suppose that Participant $l\in \mathcal{L}$ and Sub-dealer $d\in\mathcal{D}$ share a secret key $K_{l,d}^n$ uniformly distributed over $\{0,1\}^n$, and that all the keys are jointly independent. With the notation of Section \ref{sec:pre}, we thus have $X^n_l = (K_{l,d}^n)_{d \in \mathcal{D}} $ for User $l\in \mathcal{L}$ and  $Y^n_d = (K_{l,d}^n)_{l \in \mathcal{L}}$ for Sub-dealer $d\in \mathcal{D}$.  Let $t \in \llbracket 1 , L \rrbracket$ and $z \in \llbracket 1, t-1 \rrbracket$. Then, we have 
\begin{align*}
\mathcal{C}(\mathbb{A}_t,\mathbb{U}_z) &= \left\{(R_d)_{d\in\mathcal{D}} : 
R_{\mathcal{S}} \leq  |\mathcal{S}| (t-z), \forall \mathcal{S} \subseteq \mathcal{D} \right\}\\&= \left\{(R_d)_{d\in\mathcal{D}} : 
R_d \leq   t-z, \forall d\in\mathcal{D} \right\},
\end{align*}
moreover, the rate-tuple $(R^{\star}_d)_{d\in\mathcal{D}}$ is achievable with $R^{\star}_d \triangleq  t-z$.
\end{thm}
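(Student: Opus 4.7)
The plan is to prove both inclusions, noting first that the region $\{(R_d)_{d\in\mathcal{D}}:R_{\mathcal{S}}\leq|\mathcal{S}|(t-z),\forall\mathcal{S}\subseteq\mathcal{D}\}$ coincides with the hypercube $\{R_d\leq t-z,\forall d\in\mathcal{D}\}$, since the singleton choices $\mathcal{S}=\{d\}$ recover each $R_d\leq t-z$ and the general sum-rate constraints then follow by simple addition. It therefore suffices to establish $R_d\leq t-z$ for every achievable rate-tuple and to exhibit a scheme attaining the corner point $(t-z,\ldots,t-z)$, which simultaneously yields $R_d^{\star}=t-z$.

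For the converse, I would specialize the outer bound of Theorem~\ref{th2ga}, which reads $R_{\mathcal{S}}\leq\min_{\mathcal{A}\in\mathbb{A}_t}\min_{\mathcal{U}\in\mathbb{U}_z}I(Y_{\mathcal{S}};X_{\mathcal{A}}Y_{\mathcal{S}^c}|X_{\mathcal{U}})$, to the pairwise-key source. Because the bits $(K_{l,d})_{(l,d)\in\mathcal{L}\times\mathcal{D}}$ are mutually independent and uniform, a direct entropy count gives $H(Y_{\mathcal{S}}|X_{\mathcal{U}})=(L-|\mathcal{U}|)|\mathcal{S}|$ and $H(Y_{\mathcal{S}}|X_{\mathcal{A}}Y_{\mathcal{S}^c}X_{\mathcal{U}})=(L-|\mathcal{A}\cup\mathcal{U}|)|\mathcal{S}|$, hence $I(Y_{\mathcal{S}};X_{\mathcal{A}}Y_{\mathcal{S}^c}|X_{\mathcal{U}})=|\mathcal{A}\setminus\mathcal{U}|\cdot|\mathcal{S}|$. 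The minimum over $|\mathcal{A}|\geq t$ and $|\mathcal{U}|\leq z$ equals $t-z$, attained by nested sets $\mathcal{U}\subseteq\mathcal{A}$ with $|\mathcal{U}|=z$, $|\mathcal{A}|=t$, which is feasible since $z\leq t-1<t\leq L$.

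For achievability, since the key families $(K_{l,d})_{l\in\mathcal{L}}$ are jointly independent across $d$, I would have each sub-dealer $d$ independently run a ramp-type secret-sharing protocol. Fix a field $\mathbb{F}_q$ with $q\geq L+1$ and distinct nonzero evaluation points $\alpha_1,\ldots,\alpha_L\in\mathbb{F}_q$, and parse the $n$ key bits of $K_{l,d}^n$ into $N=n/\log_2 q$ symbols $(K_{l,d}^{(j)})_{j=1}^N$ of $\mathbb{F}_q$. For each $j\in\llbracket 1,N\rrbracket$, sub-dealer $d$ draws independently and uniformly a secret block $s_d^{(j)}\in\mathbb{F}_q^{t-z}$ and pad $r_d^{(j)}\in\mathbb{F}_q^{z}$, forms the degree-$(t-1)$ polynomial $P_d^{(j)}$ with coefficient vector $(s_d^{(j)},r_d^{(j)})$, and publishes $M_{l,d}^{(j)}\triangleq P_d^{(j)}(\alpha_l)-K_{l,d}^{(j)}$ for every $l\in\mathcal{L}$. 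Reliability is immediate: user $l$ recovers its Shamir share $P_d^{(j)}(\alpha_l)=M_{l,d}^{(j)}+K_{l,d}^{(j)}$, so any $t$ users jointly interpolate $P_d^{(j)}$ and hence $S_d=(s_d^{(j)})_{j}$. Perfect security holds because for $l\notin\mathcal{U}$ the one-time pad $K_{l,d}^{(j)}$ renders $M_{l,d}^{(j)}$ uniform and independent of everything else visible to $\mathcal{U}$, while for $l\in\mathcal{U}$ the adversary learns only $|\mathcal{U}|\leq z$ Shamir evaluations, which a standard Vandermonde argument shows are independent of $s_d^{(j)}$ (the map $(s,r)\mapsto(s,(P_d^{(j)}(\alpha_l))_{l\in\mathcal{U}})$ is a bijection on $\mathbb{F}_q^{t}$). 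The per-sub-dealer rate is $N(t-z)\log_2 q/n = t-z$, and $S_{\mathcal{D}}$ is uniform by construction.

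The only subtlety, and the step I would verify most carefully, is promoting the per-sub-dealer privacy guarantees to the joint security constraint $I(S_{\mathcal{D}};M_{\mathcal{D}},X_{\mathcal{U}}^n)=0$. This follows because $(S_d,M_d)$ depends only on $(K_{l,d})_{l\in\mathcal{L}}$ and the local randomness at sub-dealer $d$, so the pairs $\{(S_d,M_d)\}_{d\in\mathcal{D}}$ are jointly independent across sub-dealers once we condition on $X_{\mathcal{U}}^n$; a routine chain-rule decomposition then reduces the joint security to the individual analyses already established.
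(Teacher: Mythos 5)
Your converse is correct and in fact cleaner than the paper's: the direct entropy count $I(Y_{\mathcal{S}};X_{\mathcal{A}}Y_{\mathcal{S}^c}|X_{\mathcal{U}}) = |\mathcal{A}\setminus\mathcal{U}|\cdot|\mathcal{S}|$ evaluates the outer bound of Theorem~\ref{th2ga} in one step and gives $t-z$ after minimizing over nested $\mathcal{U}\subseteq\mathcal{A}$, whereas the paper reaches the same value through a long chain of chain-rule inequalities. Your observation that the stated region is simply the hypercube $\{R_d\leq t-z,\ \forall d\in\mathcal{D}\}$ is also correct and worth keeping.

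The achievability, however, has a genuine gap. Definition~\ref{definition_modelg} requires $S_d$ to be computed \emph{from} $Y_d^n$, and the proof of Theorem~\ref{th2ga} explicitly invokes (at step $(e)$) that $M_{\mathcal{D}}$ is a function of $Y_{\mathcal{D}}^n$; the model therefore gives the sub-dealers no private randomness beyond the source. Your scheme has sub-dealer $d$ \emph{draw} the secret block $s_d^{(j)}$ and the pad $r_d^{(j)}$ fresh and independently of $Y_d^n$, so neither $S_d$ nor the published $M_{l,d}^{(j)}=P_d^{(j)}(\alpha_l)-K_{l,d}^{(j)}$ is a function of $Y_d^n$, and this is not cosmetic: your security argument leans on the independence of $(s_d^{(j)},r_d^{(j)})$ from the keys. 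The paper's own achievability stays inside the model by specializing the random-binning inner bound of Theorem~\ref{th1ga} (which yields deterministic $S_d=h_d(Y_d^n)$), computing $\max_{\mathcal{A}\in\mathbb{A}_t}H(Y_{\mathcal{S}}|Y_{\mathcal{S}^c}X_{\mathcal{A}})=|\mathcal{S}|(L-t)$ and $\min_{\mathcal{U}\in\mathbb{U}_z}H(Y_{\mathcal{S}}|X_{\mathcal{U}})=|\mathcal{S}|(L-z)$, and eliminating the auxiliary rates $R'_{\mathcal{S}}$ by the submodularity result of Lemma~\ref{lemsubm}. Your explicit construction can be repaired within the model: instead of drawing $(s_d^{(j)},r_d^{(j)})$ fresh, define $P_d^{(j)}$ as the unique degree-$(t-1)$ polynomial interpolating $(K_{l,d}^{(j)})_{l\in\llbracket 1,t\rrbracket}$ at $(\alpha_l)_{l\leq t}$, publish $M_{l,d}^{(j)}\triangleq P_d^{(j)}(\alpha_l)-K_{l,d}^{(j)}$ only for $l>t$, and set $S_d^{(j)}$ to $t-z$ evaluations of $P_d^{(j)}$ at fresh interpolation points. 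This makes $S_d$ and $M_d$ deterministic functions of $Y_d^n$, keeps the rate at $t-z$, and your Vandermonde/one-time-pad argument together with the conditional-independence decomposition across $d$ that you sketched would then carry over; but as written the scheme does not satisfy the model.
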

\begin{proof}
See Section~\ref{App_th7}. 
\end{proof}

Note that Theorem \ref{thm_thr} is consistent with known results for Shamir's secret sharing model. Indeed, suppose that $D=1$ and $z=t-1$ in Theorem \ref{thm_thr}. Using Shamir's secret sharing, the dealer can first form $L$ shares of $n$ bits for a secret $S$ with entropy $H(S)=n$, and then secretly transmit each share to a participants via a one-time pad over the public channel by using the secret keys of length $n$. The dealer has thus shared a secret with rate 
$\frac{n}{n} =1$. Now, since $C(t,z) = t-z = 1$ by Theorem \ref{thm_thr}, we also conclude in this example that there is no loss of optimality in \emph{independently} handling the share generation phase and the secure share distribution~phase. 

\begin{ex}
Suppose that $D=1$ and $L=10$. Then, $C(t,z) = t-z$ is depicted in Figure \ref{figext}.
\begin{figure}[t!]
        \centering
    \includegraphics[width=6.8cm]{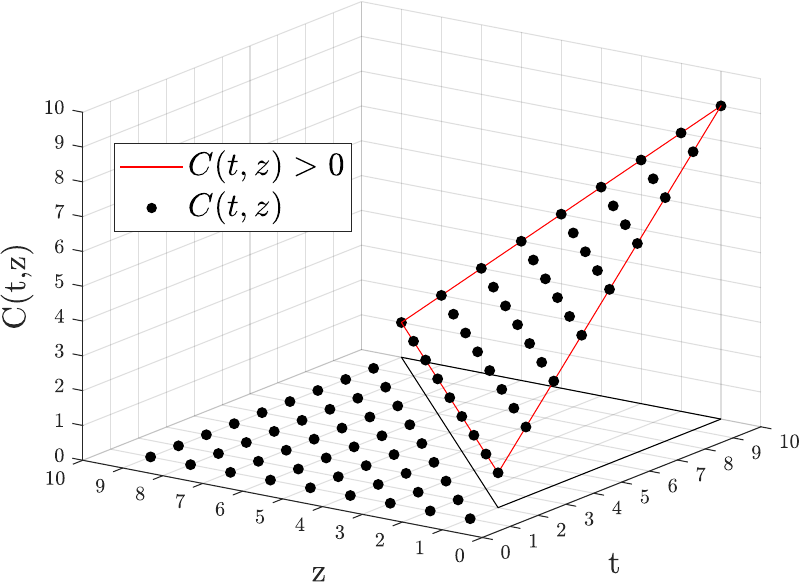}
        \caption{Secret capacity for threshold access structures when $D=1$ and $L=10$.}\label{figext}
\end{figure}

\end{ex}

\section{Achievability proofs} \label{sec:achievability}

Sections \ref{appth1ga}, \ref{sec:cs}, \ref{App:th2} contain the achievability proofs of Theorems \ref{th1ga}, \ref{prop1}, and  \ref{th2}, respectively. In the following, we will use the following notation. For a pair of discrete random variables $(X,Y)$ distributed according to $p_{XY}$ over a finite alphabet $\mathcal{X} \times \mathcal{Y}$, let $\mathcal{T}^n_{\epsilon}(X) \triangleq \left\{ x^n \in \mathcal{X}^n \!:\! \left|\tfrac{ \sum_{i=1}^n \mathds{1}\{x_i\! =\!x\}}{n}-p_{X}(x) \right| \leq \epsilon p_{X}(x), \forall x \in \mathcal{X} \! \right\}$ denote the $\epsilon$-letter-typical set associated with $p_X$ for sequences of length $n$, e.g.,~\cite{Orlitsky01}, and define $\mu_X \triangleq \min_{x \in \mathcal{X} \text{ s.t. } p_X(x)>0 } p_X(x)$. Let also $\mathcal{T}^n_{\epsilon}(XY|x^n) \triangleq \left\{ y^n \in \mathcal{Y}^n : (x^n,y^n) \in \mathcal{T}_{\epsilon}^n(XY) \right\}$ be the conditional $\epsilon$-letter-typical set associated with $p_{XY}$ with respect to $x^n \in \mathcal{X}^n$. 

\subsection{Proof of Theorem \ref{th1ga}} \label{appth1ga}

Theorem \ref{th1ga} relies on random binning. The coding scheme and its analysis are described in Sections \ref{csth1} and \ref{csath1}, respectively. 

\subsubsection{Coding scheme} \label{csth1}

\emph{Binnings}: Fix $i\in\mathcal{D}$. 
Define the  functions $g_{i} : \mathcal{Y}_i^n \to \llbracket 1,2^{nR'_{i}} \rrbracket$ and $h_{i} : \mathcal{Y}_i^n \to \llbracket 1,2^{nR_{i}} \rrbracket$,  where, for $y_i^n \in \mathcal{Y}_i^n$, $g_{i}(y_i^n)$ is drawn uniformly at random in the set $\llbracket 1 , 2^{nR_i'} \rrbracket$, and  $h_{i}(y_i^n)$ is drawn uniformly at random in the set $\llbracket 1 , 2^{nR_i} \rrbracket$. 

Then, the encoding at the sub-dealers and the decoding at the participants are  as follows:

\emph{Encoding at Sub-dealer ${i} \in\mathcal{D}$}: Given $y_i^{n}$, Sub-dealer ${i} \in\mathcal{D}$ computes  $m_{i} \triangleq   g_{i}(y_i^n)$ and $s_{i} \triangleq   h_{i}(y_i^n)$.

\emph{Decoding for a set of participants $\mathbf{\mathcal{A} \in \mathbb{A}}$}: Given $m_{\mathcal{D}} \triangleq (m_{d})_{d\in \mathcal{D}}$ and  $x_{\mathcal{A}}^n$, the set of participants $\mathcal{A}$ returns $\hat{y}_{\mathcal{D}}^n (\mathcal{A}) = (\hat{y}_{i}^n)_{ i\in \mathcal{D}}$ if it is the unique sequence such that  $(\hat{y}_{\mathcal{D}}^n(\mathcal{A}),  x_{\mathcal{A}}^n) \in \mathcal{T}_{\epsilon}^n({Y}_{\mathcal{D}} {X}_{\mathcal{A}})$ and $  \left(g_{i}(\hat{y}_{i}^n) \right)_{i\in \mathcal{D}} = m_{\mathcal{D}}$, otherwise it returns an error.

Next, we determine how to choose $R_i$ and $R_i'$, $i\in \mathcal{D}$, to ensure the reliability, security, and uniformity conditions as described in Definition \ref{def}.

\subsubsection{Coding scheme analysis} \label{csath1}

\paragraph{Reliability analysis}
Fix $\mathcal{A} \in \mathbb{A}$. Define for any $\mathcal{S} \subseteq \mathcal{D}$, $\mathcal{S} \neq  \emptyset $,
\begin{align*}
\mathcal{E}_0 & \triangleq \{ (X_{\mathcal{A}}^n,Y_{\mathcal{D}}^n) \notin \mathcal{T}_{\epsilon }^n(X_{\mathcal{A}}  Y_{\mathcal{D}}) \},\\
\mathcal{E}_{\mathcal{S}} & \triangleq \left\{ \forall i \in \mathcal{S}, \exists \hat{y}_{i}^n \neq Y_{i}^n,  g_i (\hat{y}_i^n) = g_i (Y_i^n)  \right.\\ & \phantom{-----}\left.
 \text{ and } ( X^n_{\mathcal{A}},\hat{y}_{\mathcal{S}}^n,Y_{\mathcal{D} \backslash \mathcal{S}}^n) \in \mathcal{T}_{\epsilon }^n (X_{\mathcal{A}}  Y_{\mathcal{D}}) \right\},
\end{align*}
so that by the union bound, \begin{align}\mathbb{E}  [ \mathbb{P} [ \hat{Y}_{\mathcal D}^n(\mathcal{A}) \neq {Y}_{\mathcal D}^n  ] ] \leq \mathbb{P} [\mathcal{E}_0 ] + \sum_{\mathcal{S} \subseteq \mathcal{D},\mathcal{S} \neq  \emptyset  }\mathbb{P} [\mathcal{E}_{\mathcal{S}}],\label{eqcemi}
\end{align}
where the expectation is over the random choice of the binnings. 

\begin{lem} \label{lemappth1a}

For any $\mathcal{S} \subseteq \mathcal{D}$, $\mathcal{S} \neq  \emptyset   $, we have
\begin{align}
\mathbb{P} [\mathcal{E}_{\mathcal{S}} ] &   \leq 2^{  n(1 + \epsilon ) \max_{\mathcal{A} \in \mathbb{A}} {H}(Y_{\mathcal{S}}|Y_{ \mathcal{S}^c } X_{\mathcal{A}}) -nR'_{\mathcal{S}} }, \label{2demi2} \\
\mathbb{P} [ \mathcal{E}_0] & \leq   2 |\mathcal{X}_{\mathcal{L}}| |\mathcal{Y}_{\mathcal{D}}| e^{-n \epsilon^2 \mu_{X_{\mathcal{L}}Y_{\mathcal{D}}}}. \label{eqdemi}
\end{align}
\end{lem}

\begin{proof}
See Appendix \ref{appth1a}.
\end{proof}

Hence, by \eqref{eqcemi}, \eqref{2demi2}, and \eqref{eqdemi}, we have 
\begin{align*}
& \mathbb{E}  \left[ \max_{\mathcal{A} \in \mathbb{A}} \mathbb{P} [ \hat{Y}_{\mathcal D}^n(\mathcal{A}) \neq {Y}_{\mathcal D}^n  ] \right] \\
& \leq \mathbb{E}  \left[ \sum_{\mathcal{A} \in \mathbb{A}} \mathbb{P} [ \hat{Y}_{\mathcal D}^n(\mathcal{A}) \neq {Y}_{\mathcal D}^n  ] \right] 
\\
& = \sum_{\mathcal{A} \in \mathbb{A}} \mathbb{E}  \left[  \mathbb{P} [ \hat{Y}_{\mathcal D}^n(\mathcal{A}) \neq {Y}_{\mathcal D}^n  ] \right] 
\\
& \leq 2 | \mathbb{A}| |\mathcal{X}_{\mathcal{L}}| |\mathcal{Y}_{\mathcal{D}}| e^{-n \epsilon^2 \mu_{X_{\mathcal{L}}Y_{\mathcal{D}}}} \\
&\phantom{-} + | \mathbb{A}| \sum_{\mathcal{S} \subseteq \mathcal{D},\mathcal{S} \neq   \emptyset   }\!\!\! 2^{  n(1 + \epsilon ) \max_{\mathcal{A} \in \mathbb{A}} {H}(Y_{\mathcal{S}}|Y_{ \mathcal{S}^c } X_{\mathcal{A}}) -nR'_{\mathcal{S}} }. \numberthis \label{ratechoicere}
\end{align*}

\paragraph{Security and uniformity analysis}
Fix $\mathcal{U} \in \mathbb{U}$. For all $m_{\mathcal{D}}$, $s_{\mathcal{D}}$, $x_{\mathcal{U}}^n$, we have
\begin{align*} 
&p_{M_{\mathcal{D}} S_{\mathcal{D}} X_{\mathcal{U}}^n }(m_{\mathcal{D}},s_{\mathcal{D}},x_{\mathcal{U}}^n) \\
&\phantom{-}= \sum_{y_{\mathcal D}^n} p(y_{\mathcal{D}}^n,x_{\mathcal{U}}^n) \prod_{i\in \mathcal{D}} \mathds{1} \{g_i(y_i^n) = m_{i} \}\mathds{1} \{h_i(y_i^n) = s_{i} \}.
\end{align*}
Hence, on average over the random choice of the binnings, for all $m_{\mathcal{D}}$, $s_{\mathcal{D}}$, $x_{\mathcal{U}}^n$, we have
$$
 \mathbb{E} \left[p_{M_{\mathcal{D}} S_{\mathcal{D}} X_{\mathcal{U}}^n }(m_{\mathcal{D}},s_{\mathcal{D}},x_{\mathcal{U}}^n) \right] = p(x_{\mathcal{U}}^n) 2^{-n(R_{\mathcal{D}}+R'_{\mathcal{D}})}, 
$$
which allows us to write
\begin{align}
& \mathbb{E} [ \mathbb{V} ( p_{M_{\mathcal{D}} S_{\mathcal{D}} X_{\mathcal{U}}^n}, p^{\textup{(unif)}}_{M_{\mathcal{D}} S_{\mathcal{D}}} p_{ X_{\mathcal{U}}^n}) ] \nonumber  \\
& = \mathbb{E} \left[ \textstyle\sum_{m_{\mathcal{D}},s_{\mathcal{D}},x_{\mathcal{U}}^n} \left| p_{M_{\mathcal{D}} S_{\mathcal{D}} X_{\mathcal{U}}^n }(m_{\mathcal{D}},s_{\mathcal{D}},x_{\mathcal{U}}^n)\right.\right. \nonumber \\ & \phantom{----------}\left.\left. -  \mathbb{E} \left[ p_{M_{\mathcal{D}} S_{\mathcal{D}} X_{\mathcal{U}}^n }(m_{\mathcal{D}},s_{\mathcal{D}},x_{\mathcal{U}}^n) \right] \right| \right] \nonumber \\
& \leq  \sum_{k=1}^2 \mathbb{E} \left[ \textstyle\sum_{m_{\mathcal{D}},s_{\mathcal{D}},x_{\mathcal{U}}^n} \left| p^{(k)}_{M_{\mathcal{D}} S_{\mathcal{D}} X_{\mathcal{U}}^n }(m_{\mathcal{D}},s_{\mathcal{D}},x_{\mathcal{U}}^n) \right.\right. \nonumber \\ & \phantom{----------}\left.\left.-  \mathbb{E} \left[ p^{(k)}_{M_{\mathcal{D}} S_{\mathcal{D}} X_{\mathcal{U}}^n }(m_{\mathcal{D}},s_{\mathcal{D}},x_{\mathcal{U}}^n) \right] \right| \right] \label{twotermsa}, 
\end{align}
where $p^{\textup{(unif)}}_{M_{\mathcal{D}} S_{\mathcal{D}}}$ is the uniform distribution over the sample space of $p_{M_{\mathcal{D}} S_{\mathcal{D}}}$, and $\forall m_{\mathcal{D}}, \forall s_{\mathcal{D}}, \forall x_{\mathcal{U}}^n$,
\begin{align*}
& p^{(1)}_{M_{\mathcal{D}} S_{\mathcal{D}} X_{\mathcal{U}}^n }(m_{\mathcal{D}},s_{\mathcal{D}},x_{\mathcal{U}}^n) \\
&=  \sum_{y_{\mathcal D}^n \in \mathcal{T}_{\epsilon}^n({Y}_{\mathcal{D}}  {X}_{\mathcal{U}}|x_{\mathcal{U}}^n)} p(y_{\mathcal{D}}^n,x_{\mathcal{U}}^n) \prod_{i\in \mathcal{D}} \mathds{1} \{g_i(y_i^n) = m_{i} \}\\ &\phantom{----------------} \times\mathds{1} \{h_i(y_i^n) = s_{i} \},\\
& p^{(2)}_{M_{\mathcal{D}} S_{\mathcal{D}} X_{\mathcal{U}}^n }(m_{\mathcal{D}},s_{\mathcal{D}},x_{\mathcal{U}}^n) \\
&=  \sum_{y_{\mathcal D}^n \notin \mathcal{T}_{\epsilon}^n({Y}_{\mathcal{D}}  {X}_{\mathcal{U}}|x_{\mathcal{U}}^n)} p(y_{\mathcal{D}}^n,x_{\mathcal{U}}^n) \prod_{i\in \mathcal{D}} \mathds{1} \{g_i(y_i^n) = m_{i} \} \\ &\phantom{----------------} \times \mathds{1} \{h_i(y_i^n) = s_{i} \}.
\end{align*}
\begin{lem} \label{lemappth1b}
We have
\begin{align*}
&\mathbb{E} \left[ \textstyle\sum_{m_{\mathcal{D}},s_{\mathcal{D}},x_{\mathcal{U}}^n} \left| p^{(2)}_{M_{\mathcal{D}} S_{\mathcal{D}} X_{\mathcal{U}}^n }(m_{\mathcal{D}},s_{\mathcal{D}},x_{\mathcal{U}}^n) \right. \right. \\ & \phantom{-------}\left. \left. -  \mathbb{E} \left[ p^{(2)}_{M_{\mathcal{D}} S_{\mathcal{D}} X_{\mathcal{U}}^n }(m_{\mathcal{D}},s_{\mathcal{D}},x_{\mathcal{U}}^n) \right] \right| \right] \nonumber \\
&\leq 2  |\mathcal{Y}_{\mathcal{D}}| |\mathcal{X}_{\mathcal{L}}| e^{-n\epsilon^2 \mu_{Y_{\mathcal{D}} X_{\mathcal{L}}}} \numberthis \label{ademia},
\end{align*}
and
\begin{align*}
&\mathbb{E} \left[ \textstyle\sum_{m_{\mathcal{D}},s_{\mathcal{D}},x_{\mathcal{U}}^n} \left| p^{(1)}_{M_{\mathcal{D}} S_{\mathcal{D}} X_{\mathcal{U}}^n }(m_{\mathcal{D}},s_{\mathcal{D}},x_{\mathcal{U}}^n) \right. \right. \\ & \phantom{-------}\left. \left.-  \mathbb{E} \left[ p^{(1)}_{M_{\mathcal{D}} S_{\mathcal{D}} X_{\mathcal{U}}^n }(m_{\mathcal{D}},s_{\mathcal{D}},x_{\mathcal{U}}^n) \right] \right| \right] \nonumber \\
&\leq  \sum_{\mathcal{S} \subseteq \mathcal{D}, \mathcal{S}\neq \emptyset}     2^{-\tfrac{n}{2}(1-\epsilon)\min_{\mathcal{U} \in \mathbb{U}}H(Y_{\mathcal{S}}|X_{\mathcal{U}})}  2^{\tfrac{n}{2}(R_{\mathcal{S}}+R_{\mathcal{S}}')}. \numberthis \label{ademia2}
\end{align*}
\end{lem}

\begin{proof}
See Appendix \ref{appth1b}.
\end{proof}

Finally, by (\ref{twotermsa}), \eqref{ademia}, and \eqref{ademia2}, we obtain
\begin{align*}
&\mathbb{E} \left[ \max_{\mathcal{U} \in \mathbb{U}} \mathbb{V} ( p_{M_{\mathcal{D}} S_{\mathcal{D}} X_{\mathcal{U}}^n}, p^{\textup{(unif)}}_{M_{\mathcal{D}} S_{\mathcal{D}}} p_{ X_{\mathcal{U}}^n}) \right] \nonumber \\
& \leq \mathbb{E}  \left[ \sum_{\mathcal{U} \in \mathbb{U}} \mathbb{V} ( p_{M_{\mathcal{D}} S_{\mathcal{D}} X_{\mathcal{U}}^n}, p^{\textup{(unif)}}_{M_{\mathcal{D}} S_{\mathcal{D}}} p_{ X_{\mathcal{U}}^n}) \right] 
\\
& = \sum_{\mathcal{U} \in \mathbb{U}} \mathbb{E}  \left[  \mathbb{V} ( p_{M_{\mathcal{D}} S_{\mathcal{D}} X_{\mathcal{U}}^n}, p^{\textup{(unif)}}_{M_{\mathcal{D}} S_{\mathcal{D}}} p_{ X_{\mathcal{U}}^n})  ] \right] 
\\
& \leq 2 | \mathbb{U}| |\mathcal{Y}_{\mathcal{D}}| |\mathcal{X}_{\mathcal{L}}| e^{-n\epsilon^2 \mu_{Y_{\mathcal{D}} X_{\mathcal{L}}}} \\
&\phantom{-}+ | \mathbb{U}|\sum_{\mathcal{S} \subseteq \mathcal{D}, \mathcal{S}\neq \emptyset}     2^{\tfrac{n}{2}[R_{\mathcal{S}}+R_{\mathcal{S}}'-(1-\epsilon)\min_{\mathcal{U} \in \mathbb{U}}H(Y_{\mathcal{S}}|X_{\mathcal{U}})]}  . \numberthis \label{eqratechoicesu}
\end{align*}

\subsubsection{Rate choices}
By Markov's inequality, \eqref{ratechoicere}, and \eqref{eqratechoicesu}, there exists a random binning choice and a constant $a>0$ such that $\max_{\mathcal{A} \in \mathbb{A}}  \mathbb{P} [ \hat{Y}_{\mathcal D}^n(\mathcal{A}) \neq {Y}_{\mathcal D}^n  ]  + \max_{\mathcal{U} \in \mathbb{U}} \mathbb{V} ( p_{M_{\mathcal{D}} S_{\mathcal{D}} X_{\mathcal{U}}^n}, p^{\textup{(unif)}}_{M_{\mathcal{D}} S_{\mathcal{D}}} p_{ X_{\mathcal{U}}^n}) = o(e^{-na})$ provided that for any $\mathcal{S} \subseteq \mathcal{D}$, $(1 + \epsilon ) \max_{\mathcal{A} \in \mathbb{A}} {H}(Y_{\mathcal{S}}|Y_{ \mathcal{S}^c } X_{\mathcal{A}})< R'_{\mathcal{S}}$ and $R_{\mathcal{S}}+R_{\mathcal{S}}' < (1-\epsilon)\min_{\mathcal{U} \in \mathbb{U}}H(Y_{\mathcal{S}}|X_{\mathcal{U}})$. Finally, we remark that $ \mathbb{V} ( p_{M_{\mathcal{D}} S_{\mathcal{D}} X_{\mathcal{U}}^n}, p^{\textup{(unif)}}_{M_{\mathcal{D}} S_{\mathcal{D}}} p_{ X_{\mathcal{U}}^n}) = o(e^{-na})$ implies~\eqref{eqSeca} and \eqref{eqU} by \cite[Lemma~2.7]{bookCsizar}. 

\subsection{Proof of Theorem \ref{prop1}} \label{sec:cs}
Our coding scheme operates in two steps to successively deal with reliability and secrecy by means of reconciliation and privacy amplification. The main difficulty compared to the case $D=1$ is the analysis of privacy amplification because of the distributed setting induced by the multiple sub-dealers. Additionally, our analysis of the privacy amplification step requires a modified reconciliation protocol with additional properties compared to the case $D=1$. We describe our coding scheme in Section \ref{sec:subcs} and provide its analysis in Section \ref{sec:analy}. We use the same notation as in Appendix \ref{appth1ga}. 
\subsubsection{Coding scheme} \label{sec:subcs}
\paragraph{Reconciliation} \label{sec:rec}
 We define the encoding and decoding procedures for reconciliation through $2^{{D} } +1 $ nested random binnings as follows.

\emph{Binnings}: Fix $i \in \mathcal{D}$. For $y_i^n \in \mathcal{Y}_i^n$, for $j \in \llbracket 1, 2^{{D} } +1 \rrbracket$, draw uniformly at random an index in the set $\llbracket 1,2^{n {R}_{i,j}} \rrbracket$ and let this index assignment define the  function $b_{i,j} : \mathcal{Y}_i^n \to \llbracket 1,2^{nR_{i,j}} \rrbracket$. The value of $R_{i,j}$ will be chosen later. For any subset $\mathcal{S} \subseteq \llbracket 1, 2^{{D} } +1 \rrbracket$, we define ${R}_{i,\mathcal{S}} \triangleq \sum_{j \in \mathcal{S}} {R}_{i,j}  $.

\emph{Encoding at Sub-dealer $i \in \mathcal{D}$}: Given $y_i^{n}$, Sub-dealer $i \in \mathcal{D}$ computes $(m_{i,j})_{j \in \llbracket 1, 2^{{D} } +1 \rrbracket} \triangleq \left(  b_{i,j}(y_i^n) \right)_{j \in \llbracket 1, 2^{{D} } +1 \rrbracket}$.

\emph{Decoding at the participants}: For $i \in \mathcal{D}$, given $m_i \triangleq (m_{i,j})_{j \in \llbracket 1, 2^{{D} } +1 \rrbracket}$, $y_{1:i-1}^{n} \triangleq (y_{j}^{n})_{j \in \llbracket1 ,i-1 \rrbracket}$, and  $x_{\mathcal{L}}^n$, output $\hat{y}_i^n$ if it is the unique sequence such that  $(\hat{y}_{i}^n, y_{1:i-1}^{n}, x_{\mathcal{L}}^n) \in \mathcal{T}_{\epsilon}^n( {Y}_{1:i} {X}_{\mathcal{L}})$ and $ \left(  b_{i,j}(\hat{y}_i^n) \right)_{j \in \llbracket 1, 2^{{D} } +1 \rrbracket} = (m_{i,j})_{j \in \llbracket 1, 2^{{D} } +1 \rrbracket}$, otherwise output $1$.

\emph{Design properties of the reconciliation protocol}: Fix $i \in \mathcal{D}$. We first introduce additional definitions. Let $\delta>0$. Define for $\mathcal{S} \subseteq \mathcal{D}$, $\bar{R}_{i,\mathcal{S}} \triangleq H(Y_i|Y_{1:i-1}Y_{\mathcal{S}} X_{\mathcal{L}}) - \delta$ if $H(Y_i|Y_{1:i-1}Y_{\mathcal{S}} X_{\mathcal{L}}) \neq 0$ and $\bar{R}_{i,\mathcal{S}} \triangleq 0$ otherwise. We sort the sequence $(\bar{R}_{i,\mathcal{S}})_{\mathcal{S} \subseteq \mathcal{D}}$ in increasing order and denote the result by $(\bar{R}_{i,j})_{j \in \llbracket 1, 2^{{D}}\rrbracket}$. For notation convenience, we denote by $\mathcal{S}_j$, $j\in \llbracket 1 , 2^D \rrbracket $, the subset of $\mathcal{D}$ such that $\bar{R}_{i,j} = H(Y_i|Y_{1:i-1}Y_{\mathcal{S}_j} X_{\mathcal{L}}) - \delta$. Observe that $\bar{R}_{i,1} = 0$ and $\bar{R}_{i,2^{D}} = H(Y_i|Y_{1:i-1} X_{\mathcal{L}}) - \delta$.

\begin{enumerate}[(i)]
\item We will design the reconciliation such that, for any $i\in\mathcal{D}$, the participants in $\mathcal{L}$ can form an approximation $\hat{Y}^{n}_i$  of $Y_i^{n}$, from $(M_{i,j})_{j \in \llbracket 1, 2^{{D} } +1 \rrbracket}$ and $(Y_{1:i-1}^{n}, X_{\mathcal{L}}^n)$, such that $\mathbb{P}\left[  \hat{Y}^{n}_i \neq Y^{n}_i \right] \xrightarrow{n \to \infty} 0 $. 
\item For $j \in \llbracket 1, 2^{D}\rrbracket$ such that $H(Y_i|Y_{1:i-1}Y_{\mathcal{S}_j} X_{\mathcal{L}}) \neq 0$, we will design the reconciliation such that  almost independence holds between  $M_{i,1:j} \triangleq (M_{i,k})_{k \in \llbracket 1,j \rrbracket}$ and $(Y_{\llbracket 1,i-1 \rrbracket \cup \mathcal{S}_j}^n,X^n_{\mathcal{L}})$, in the sense that $n\mathbb{V} ( p_{M_{i,1:j} Y_{\llbracket 1,i-1 \rrbracket \cup \mathcal{S}_j}^n X^n_{\mathcal{L}}}, p^{\textup{(unif)}}_{M_{i,1:j}} p_{ Y_{\llbracket 1,i-1 \rrbracket \cup \mathcal{S}_j}^n X^n_{\mathcal{L}}})\xrightarrow{n \to \infty} 0 $, where $p^{\textup{(unif)}}_{M_{i,1:j}}$ is the uniform distribution over the sample space of $p_{M_{i,1:j}}$.
\end{enumerate}
 
Note that the second property is crucial in our analysis of privacy amplification, and is not necessary in the treatment of the case $D=1$.

\paragraph{Privacy Amplification}
We rely on two-universal hash functions as defined next. 
 \begin{defn}[{\!\cite{Carter79}}]
 A family $\mathcal{F}$ of two-universal hash functions $\mathcal{F} = \{f:\{0,1 \}^n \to \{0,1\}^r\}$ is such~that
$
 \forall x,x' \in \{0,1 \}^n, x \neq x' \implies \mathbb{P} [F(x)=F(x')] \leq 2^{-r},
 $
 where $F$ is a function uniformly chosen in $\mathcal{F}$.
 \end{defn}

Suppose that the reconciliation step in Section \ref{sec:subcs} is independently repeated $B$ times. Let  $\widehat{Y}^{nB}_d$, $d\in\mathcal{D}$, be the estimate of $Y^{nB}_d$. 
 For $d \in\mathcal{D}$, let $F_d : \{0,1\}^{nB} \rightarrow \{0,1\}^{r_d}$, be  uniformly chosen in a family $\mathcal{F}_d$ of two-universal hash functions. We leave the quantities $(r_d)_{d \in \mathcal{D}}$ unspecified in this section, and will specify them in~Section~\ref{sec:analy}.
 The privacy amplification step operates as follows. Sub-dealer $d\in\mathcal{D}$ computes $S_d \triangleq F_d(Y^{nB}_d)$, while the participants in $\mathcal{L}$ compute for $d\in\mathcal{D}$, $\widehat{S}_d \triangleq F_d(\widehat{Y}^{nB}_d)$, where $\widehat{Y}^{nB}_d$ has been obtained in the reconciliation step.

\subsubsection{Coding scheme analysis} \label{sec:analy}

We now show that any rate-tuple $(R_d)_{d \in \mathcal{D}}$ in $\mathcal{R}_{1}^{(\textup{in})}$, defined in Theorem \ref{prop1}, is achievable.

\paragraph{Analysis of reconciliation} \label{sec:reconcil}
We first prove that Property (i) of Section \ref{sec:rec} holds. 
The probability of error averaged over  the random choice of the binnings $(b_{i,j})_{j \in \llbracket 1, 2^{{D} } +1 \rrbracket }$ is upper bounded~as
\begin{align*}
&\mathbb{E} \left[\mathbb{P}\left[  \hat{Y}^{n}_i \neq Y^{n}_i \right]\right] \leq \mathbb{P}\left[ \mathcal{E}_{i,1} \right] + \mathbb{P}\left[ \mathcal{E}_{i,2} \right], 
\end{align*}
where 
\begin{align*}
\mathcal{E}_{i,1} &\triangleq \left\{ (Y_{1:i}^n,X_{\mathcal{L}}^n) \notin \mathcal{T}_{\epsilon}^n({Y}_{1:i}  {X}_{\mathcal{L}}) \right\},  \\
\mathcal{E}_{i,2} &\triangleq \!\left\{ \exists \hat{y}_i^n \neq Y_i^n, \right. \\
& \phantom{--}\left. \left(  b_{i,j}(\hat{y}_i^n) \right)_{j \in \llbracket 1, 2^{{D} } +1 \rrbracket} \!=\! \left(  b_{i,j}(Y_i^n) \right)_{j \in \llbracket 1, 2^{{D} } +1 \rrbracket}\! \right. \\
& \phantom{--}\left.\text{ and }\! (\hat{y}_i^n,Y_{1:i-1}^n,X_{\mathcal{L}}^n) \in \mathcal{T}_{\epsilon}^n({Y}_{1:i}  {X}_{\mathcal{L}}) \!\right\}\!.
\end{align*}
Similar to the proof of \eqref{2demi2} and \eqref{eqdemi}, one can show that
\begin{align}
&\mathbb{E} \left[\mathbb{P}\left[  \hat{Y}^{n}_i \neq Y^{n}_i \right]\right] \leq 2  |\mathcal{Y}_{1:i}| |\mathcal{X}_{\mathcal{L}}| e^{-n\epsilon^2 \mu_{Y_{1:i}X_{\mathcal{L}}}} \nonumber
\\
&\phantom{------}+ 2^{-n(  R_{i,\llbracket 1, 2^{{D} } +1 \rrbracket} - H(Y_i|Y_{1:i-1}X_{\mathcal{L}})(1+\epsilon))}. \label{eqchoicerates}
\end{align}

We next prove that Property (ii) of Section \ref{sec:rec} holds.  Let $i \in \mathcal{D}$ and $j \in \llbracket 1, 2^{{D}}\rrbracket$ such that $H(Y_i|Y_{1:i-1}Y_{\mathcal{S}_j} X_{\mathcal{L}}) \neq 0$. In the following, for notation convenience, we define  $Z_{i,j} \triangleq (Y_{\llbracket 1, i-1 \rrbracket \cup \mathcal{S}_j},  X_{\mathcal{L}})$.
We have
\begin{align*}
&p_{M_{i,1:j} Z_{i,j}^n }(m_{i,1:j},z_{i,j}^n) \\
&= \sum_{y_i^n} p(y_{i}^n,z_{i,j}^n) \mathds{1} \{b_{i,1:j}(y_i^n)= m_{i,1:j} \}, \forall m_{i,1:j}, \forall z_{i,j}^n,
\end{align*}
where $ b_{i,1:j}(y_i^n) \triangleq (b_{i,k}(y_i^n))_{k \in \llbracket 1 ,j \rrbracket}$,
hence, on average over $(b_{i,k})_{k \in \llbracket 1, j \rrbracket }$,  
\begin{align*}
 &\mathbb{E} \left[p_{M_{i,1:j} Z_{i,j}^n}(m_{i,1:j},z_{i,j}^n) \right] \\
 &= p(z_{i,j}^n) 2^{-nR_{i,\llbracket 1, j \rrbracket}}, \forall m_{i,1:j}, \forall z_{i,j}^n. 
\end{align*}
Then, similar to the proof of \eqref{ademia} and \eqref{ademia2}, one can show that
\begin{align}
& \mathbb{E} [ \mathbb{V} ( p_{M_{i,1:j} Z_{i,j}^n}, p^{\textup{(unif)}}_{M_{i,1:j}} p_{ Z_{i,j}^n}) ]  \nonumber\\
& \leq 2  |\mathcal{Y}_{i}| |\mathcal{Z}_{i,j}| e^{-n\epsilon^2 \mu_{Y_{i}Z_{i,j}}}  \nonumber\\
&\phantom{-} +  2^{-\frac{n}{2} \left[ (1-3\epsilon) H(Y_i |Y_{1:i-1}Y_{\mathcal{S}_j} X_{\mathcal{L}}) - R_{i,\llbracket 1 , j \rrbracket}\right]}. \label{eqchoicerates2}
\end{align}

Finally, we choose the rates as follows.
  Let $i\in \mathcal{D}$. We define for $j \in \llbracket 2, 2^{D}  \rrbracket$, ${R}_{i,j} \triangleq \bar{R}_{i,j} - \bar{R}_{i,j-1}$, and ${R}_{i,1} \triangleq \bar{R}_{i,1}$. We thus have for any $j \in \llbracket 1, 2^{D} \rrbracket$, $R_{i,\llbracket 1 , j \rrbracket} = \bar{R}_{i,j}$. We then choose $\delta \triangleq 3\epsilon H(Y_i |Y_{1:i-1}  X_{\mathcal{L}}) + \epsilon$ and $R_{i,2^D +1} \triangleq \delta + \epsilon H(Y_i|Y_{1:i-1} X_{\mathcal{L}}) + \epsilon$.

  Hence, we have  $R_{i,\llbracket 1, 2^{{D} } +1 \rrbracket} = \bar{R}_{i, 2^D} + R_{i,2^D +1} = (1+ \epsilon) H(Y_i|Y_{1:i-1} X_{\mathcal{L}}) + \epsilon$ and $(1-3\epsilon) H(Y_i |Y_{1:i-1} Y_{\mathcal{S}_j}X_{\mathcal{L}}) - R_{i,\llbracket 1, j \rrbracket} = (1-3\epsilon) H(Y_i |Y_{1:i-1} Y_{\mathcal{S}_j} X_{\mathcal{L}}) - H(Y_i|Y_{1:i-1}Y_{\mathcal{S}_j} X_{\mathcal{L}}) + \delta \geq \epsilon$, which ensures, by \eqref{eqchoicerates} and \eqref{eqchoicerates2}, that $\mathbb{E} \left[ \sum_{i \in \mathcal{D}} \sum_{j \in \llbracket 1, 2^D\rrbracket} \mathbb{V} ( p_{M_{i,1:j} Z_{i,j}^n}, p^{\textup{(unif)}}_{M_{i,1:j}} p_{ Z_{i,j}^n}) \right.$ $\left.+\sum_{i \in \mathcal{D}}\mathbb{P}\left[  \hat{Y}^{n}_i \neq Y^{n}_i \right]\right] \xrightarrow{n \to \infty} 0.$ Then, by Markov's Lemma, there exist binnings $(b_{i,j})_{i\in \mathcal{D}, j\in \llbracket 1, 2^D +1\rrbracket}$ such that for any $i \in \mathcal{D}$, $\mathbb{P}\left[  \hat{Y}^{n}_i \neq Y^{n}_i \right] \xrightarrow{n \to \infty} 0$ and for any $i \in \mathcal{D}$, $j \in \llbracket 1, 2^D \rrbracket$, $\mathbb{V} ( p_{M_{i,1:j} Z_{i,j}^n}, p^{\textup{(unif)}}_{M_{i,1:j}} p_{ Z_{i,j}^n})\xrightarrow{n \to \infty} 0$. 
  
\paragraph{Analysis of privacy amplification}

We use the following version of the leftover hash lemma~\cite{haastad1999pseudorandom,dodis2008fuzzy} to analyze the privacy amplification step. The lemma is of independent interest as related versions of this lemma \cite{wullschleger2007oblivious,nascimento2008oblivious,chou2017secret,chou2021private}  had found a wide variety of applications including oblivious transfer~\cite{nascimento2008oblivious,wullschleger2007oblivious,Chou21}, commitment \cite{chou2022bc}, secret generation \cite{chou2017secret,chou2019biometric},  multiple-access channel resolvability \cite{sultana2022multiple}, and private classical communication over quantum multiple-access channels \cite{chou2021private}. 

     \begin{lem}[Distributed leftover hash lemma] \label{lemloh}
 Consider a sub-normalized non-negative function  $ p_{ X_{\mathcal L}Z}$ defined over $\bigtimes_{l \in \mathcal{L}}\mathcal{X}_{l}\times \mathcal{Z}$, where $X_{\mathcal{L}} \triangleq (X_l)_{l \in \mathcal{L}}$ and, $\mathcal{Z}$, $\mathcal{X}_{l}$, $l \in\mathcal{L}$, are finite alphabets.    
 For $l \in \mathcal{L}$, let  $F_l:\{0,1\}^{n_l} \longrightarrow \{0,1\}^{r_l}$, be uniformly chosen in a family $\mathcal{F}_l$ of two-universal hash functions. Define $s_{\mathcal L} \triangleq \prod_{l \in \mathcal{L}} s_l$, where $s_l\triangleq |\mathcal{F}_l|$, $l \in \mathcal{L}$, and for any $\mathcal{S} \subseteq \mathcal{L}$, define $r_{\mathcal{S}}\triangleq \sum_{i \in \mathcal{S}}r_i$. Define also ${F}_{\mathcal{L}}\triangleq (F_l)_{l \in \mathcal{L}}$ and 
   $
        F_{\mathcal{L}}( X_{\mathcal{L}})\triangleq \left( F_l(X_l)\right)_{l\in\mathcal{L}}$. Then, for any $q_Z$ defined over $\mathcal{Z}$ such that $\textup{supp}(q_Z) \subseteq \textup{supp}(p_Z)$, we have
   \begin{align}
      \mathbb{V}({{p}_{F_{\mathcal{L}}( X_{\mathcal{L}})  F_{\mathcal{L}}Z}}, p_{U_{\mathcal K}} p_{U_{\mathcal F}} p_Z) \leq   {{{\sqrt{{ \displaystyle\sum_{\substack{{\mathcal S\subseteq\mathcal L}, {\mathcal S \neq \emptyset}}}}2^{r_{\mathcal S}-H_{\infty}(p_{X_{\mathcal S}Z}|q_Z)}}}}}, \label{eq:lohl}
       \end{align}
      where  $p_{U_{\mathcal K}}$ and  $p_{U_{\mathcal F}}$ are the uniform distributions over $\llbracket 1,2^{r_{{\mathcal{L}}}} \rrbracket$ and $\llbracket 1,{s_{{\mathcal{L}}}} \rrbracket$, respectively, and  the min-entropies are defined as in \cite{renner2008security}, i.e., for any $\mathcal S\subseteq\mathcal L, \mathcal S \neq \emptyset$, 
$$H_{\infty}(p_{X_{\mathcal{S}} Z}|q_{Z})\triangleq -\log \displaystyle \max_{\substack{{x_{\mathcal{S}} \in \mathcal{X}_{\mathcal{S}}}\\ z \in \textup{supp}(q_{Z}) }}\frac{p_{X_{\mathcal{S}} Z}(x_{\mathcal{S}}, z)}{q_{Z}(z)}.$$
          \end{lem}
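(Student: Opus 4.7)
The approach is to reduce the variational distance to a weighted $L^2$ distance via Cauchy--Schwarz, then bound the collision probability that arises using the two-universal property together with a decomposition indexed by the coordinates on which two input samples differ. Writing $P \triangleq p_{F_{\mathcal{L}}(X_{\mathcal{L}}) F_{\mathcal{L}} Z}$ and $Q \triangleq p_{U_{\mathcal{K}}} p_{U_{\mathcal{F}}} p_Z$, I would introduce the reference sub-probability $\widetilde{q}(k_{\mathcal{L}}, f_{\mathcal{L}}, z) \triangleq 2^{-r_{\mathcal{L}}} s_{\mathcal{L}}^{-1} q_Z(z)$, which satisfies $\sum \widetilde{q} \leq 1$ since $q_Z$ is normalized. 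Cauchy--Schwarz applied with weight $\widetilde{q}$ gives $4\, \mathbb{V}(P,Q)^2 \leq 2^{r_{\mathcal{L}}} s_{\mathcal{L}} \sum (P-Q)^2 / q_Z$. Expanding $(P-Q)^2$ and using the marginalization $\sum_{k_{\mathcal{L}}, f_{\mathcal{L}}} P(k_{\mathcal{L}}, f_{\mathcal{L}}, z) = p_Z(z)$, the $PQ$ and $Q^2$ contributions collapse into a single residue $-2^{-r_{\mathcal{L}}} s_{\mathcal{L}}^{-1} \sum_z p_Z(z)^2 / q_Z(z)$, so the task reduces to bounding $\sum P^2 / q_Z$ and then verifying that this residue is cancelled by a piece of that bound.

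For the main estimate, I would expand $P^2$ as a double sum over pairs $(x_{\mathcal{L}}, x'_{\mathcal{L}})$ weighted by the hash collision probability $\mathbb{P}[F_{\mathcal{L}}(x_{\mathcal{L}}) = F_{\mathcal{L}}(x'_{\mathcal{L}})]$, and partition the sum by the \emph{differing set} $\mathcal{T}(x_{\mathcal{L}}, x'_{\mathcal{L}}) \triangleq \{ l : x_l \neq x'_l\}$. Independence of the $F_l$'s together with two-universality of each $\mathcal{F}_l$ yields a collision factor of at most $2^{-r_{\mathcal{T}}}$. For each $\mathcal{T}$ the inner sum $A_{\mathcal{T}}(z) \triangleq \sum p(x_{\mathcal{L}}, z) p(x'_{\mathcal{L}}, z)$ restricted to pairs with differing set exactly $\mathcal{T}$ can be relaxed by dropping the $x'_{\mathcal{T}} \neq x_{\mathcal{T}}$ constraint to $\sum_{x_{\mathcal{L}\setminus\mathcal{T}}} p_{X_{\mathcal{L}\setminus\mathcal{T}} Z}(x_{\mathcal{L}\setminus\mathcal{T}}, z)^2$. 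Applying the definition of min-entropy to one factor, namely $p_{X_{\mathcal{L}\setminus\mathcal{T}} Z}(\cdot, z) \leq 2^{-H_{\infty}(p_{X_{\mathcal{L}\setminus\mathcal{T}} Z} | q_Z)} q_Z(z)$ on $\textup{supp}(q_Z)$, and summing the other factor against $\sum_{x_{\mathcal{L}\setminus\mathcal{T}}} p_{X_{\mathcal{L}\setminus\mathcal{T}} Z}(\cdot, z) = p_Z(z)$, gives $A_{\mathcal{T}}(z) \leq 2^{-H_{\infty}(p_{X_{\mathcal{L}\setminus\mathcal{T}} Z} | q_Z)} q_Z(z)\, p_Z(z)$.

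To close the argument I would check two bookkeeping points. First, for $\mathcal{T} = \mathcal{L}$ the trivial bound $A_{\mathcal{L}}(z) \leq p_Z(z)^2$ produces exactly the term $2^{-r_{\mathcal{L}}} s_{\mathcal{L}}^{-1} \sum_z p_Z(z)^2 / q_Z(z)$, which matches and cancels the residue from the first step. Second, the reparametrization $\mathcal{S} \triangleq \mathcal{L}\setminus\mathcal{T}$ turns $2^{r_{\mathcal{L}}} \cdot 2^{-r_{\mathcal{T}}}$ into $2^{r_{\mathcal{S}}}$ and converts the surviving sum over $\mathcal{T} \subsetneq \mathcal{L}$ into the claimed sum over $\emptyset \neq \mathcal{S} \subseteq \mathcal{L}$ of $2^{r_{\mathcal{S}} - H_{\infty}(p_{X_{\mathcal{S}} Z} | q_Z)}$. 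The main obstacle is precisely this bookkeeping: the collision factor $2^{-r_{\mathcal{T}}}$ is carried by the \emph{differing} coordinates, whereas the surviving min-entropy is that of the marginal on the \emph{agreeing} coordinates $\mathcal{L}\setminus\mathcal{T}$; it is the relabeling $\mathcal{S} = \mathcal{L}\setminus\mathcal{T}$ together with the exact cancellation of the would-be $\mathcal{S} = \emptyset$ term against the $Q$-residue that produces the clean distributed-LHL expression involving $H_{\infty}(p_{X_{\mathcal{S}} Z} | q_Z)$ for non-empty $\mathcal{S}$ rather than an awkward bound featuring min-entropies of complementary marginals.
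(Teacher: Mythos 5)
Your proposal matches the paper's proof in all essential respects: Cauchy--Schwarz with a $q_Z$-weighted reference measure, expansion of the collision probability of $F_{\mathcal{L}}$, partitioning by the set of coordinates on which the two inputs differ, the product two-universality bound, the relabeling from the differing set to its complement, and the exact cancellation of the $\mathcal{S}=\emptyset$ term against the $Q$-residue. The only cosmetic difference is that the paper first proves the collision bound for the conditional distribution $p_{X_{\mathcal{L}}\mid Z=z}$ as a standalone estimate and then converts the conditional min-entropies to the joint ones $H_\infty(p_{X_{\mathcal S}Z}\mid q_Z)$ at the end, whereas you carry $Z$ through the double sum and invoke the joint min-entropy directly.
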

          \begin{proof}
See          Appendix \ref{app_loh}.
          \end{proof}
A challenge with using Lemma \ref{lemloh} is the evaluation of  the min-entropies in \eqref{eq:lohl}. A possible solution is to use the method in   \cite{Maurer00} to lower bound a min-entropy in terms of a Shannon entropy. However, one drawback of this method is that an extra round of reconciliation is needed, as in~\cite{chou2014separation}, which complexifies the coding scheme. Another solution could be to rely on the notion of smooth min-entropy, as in \cite{renner2008security}. However, this technique is challenging to apply here because one would need to \emph{simultaneously} smooth all the min-entropies in \eqref{eq:lohl}. Instead, we propose  to lower bound the min-entropies in~\eqref{eq:lohl} by relying on the following lemma.

 \begin{lem} \label{lems1}
        Let $(\mathcal{Y}_d)_{d \in \mathcal{D}}$ be $D$ finite alphabets and define for $\mathcal{S}\subseteq \mathcal{D}$, $\mathcal{Y}_{\mathcal{S}}\triangleq \bigtimes_{d \in \mathcal{S}} \mathcal{Y}_d$. Consider the random variables  $Y^{n}_{\mathcal{D}}\triangleq ({Y}^{n}_d)_{d \in \mathcal{D}}$ and $Z^{n}$ defined over $\mathcal{Y}_{\mathcal{D}}^n  \times\mathcal{Z}^n$ with probability distribution $q_{Y^{n}_{\mathcal{D}} Z^{n}}\triangleq \prod_{i=1}^n q_{Y_{\mathcal{D}} Z}$. For any $\epsilon>0$, there exists a subnormalized non-negative function $w_{Y^{n}_{\mathcal{D}} Z^{n}}$ defined over $\mathcal{Y}^n_{\mathcal{D}} \times\mathcal{Z}^n$ such that $\mathbb{V}(q_{Y^{n}_{\mathcal{D}} Z^{n}},w_{Y^{n}_{\mathcal{D}} Z^{n}})\leq\epsilon$ and
       \begin{align*}
           \forall \mathcal{S}\subseteq \mathcal{D}, H_{\infty}(w_{Y^{n}_{\mathcal{S}} Z^{n}}|q_{Z^{n}})\geq n H({Y_{\mathcal{S}}}|Z)-n \delta_{\mathcal{S}}(n),
       \end{align*}
       where $\delta_{\mathcal{S}}(n)\triangleq (\log (\lvert\mathcal{Y}_{\mathcal{S}}\rvert+3))\sqrt{\frac{2}{n}(D+\log(\frac{1}{\epsilon}))}$. 
\label{lem2}
       \end{lem}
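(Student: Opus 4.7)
The natural strategy is to construct $w$ by restricting $q_{Y^n_{\mathcal{D}} Z^n}$ to a ``jointly good'' set on which all $2^D$ required min-entropy bounds hold by design. For each $\mathcal{S} \subseteq \mathcal{D}$, define the event
\[
\mathcal{G}_{\mathcal{S}} \triangleq \left\{(y^n_{\mathcal{S}}, z^n) : q_{Y^n_{\mathcal{S}}|Z^n}(y^n_{\mathcal{S}}|z^n) \leq 2^{-n H(Y_{\mathcal{S}}|Z) + n \delta_{\mathcal{S}}(n)}\right\},
\]
and set
\[
w_{Y^n_{\mathcal{D}} Z^n}(y^n_{\mathcal{D}}, z^n) \triangleq q_{Y^n_{\mathcal{D}} Z^n}(y^n_{\mathcal{D}}, z^n) \prod_{\mathcal{S} \subseteq \mathcal{D}} \mathds{1}\{(y^n_{\mathcal{S}}, z^n) \in \mathcal{G}_{\mathcal{S}}\}.
\]
The conditional min-entropy property is then essentially built in. Fixing $\mathcal{S}$, the indicator $\mathds{1}\{(y^n_{\mathcal{S}}, z^n) \in \mathcal{G}_{\mathcal{S}}\}$ depends only on $(y^n_{\mathcal{S}}, z^n)$, so summing over $y^n_{\mathcal{S}^c}$ yields $w_{Y^n_{\mathcal{S}} Z^n}(y^n_{\mathcal{S}}, z^n) \leq q(y^n_{\mathcal{S}}, z^n) \mathds{1}\{(y^n_{\mathcal{S}}, z^n) \in \mathcal{G}_{\mathcal{S}}\}$, which by definition of $\mathcal{G}_{\mathcal{S}}$ is at most $q_{Z^n}(z^n)\cdot 2^{-nH(Y_{\mathcal{S}}|Z) + n\delta_{\mathcal{S}}(n)}$ for every $(y^n_{\mathcal{S}}, z^n)$ and vanishes otherwise, which is exactly the stated bound on $H_{\infty}(w_{Y^n_{\mathcal{S}}Z^n}|q_{Z^n})$.

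The content of the lemma therefore reduces to controlling $\mathbb{V}(q,w)$. A direct computation shows $\mathbb{V}(q,w) = \mathbb{P}\!\left[\bigcup_{\mathcal{S}} \mathcal{G}_{\mathcal{S}}^c\right]$, so by a union bound it suffices to show $\mathbb{P}[(Y^n_{\mathcal{S}}, Z^n) \notin \mathcal{G}_{\mathcal{S}}] \leq \epsilon/2^D$ for every $\mathcal{S}$. Writing
\[
-\log q_{Y^n_{\mathcal{S}}|Z^n}(Y^n_{\mathcal{S}}|Z^n) = \sum_{i=1}^n -\log q_{Y_{\mathcal{S}}|Z}(Y_{\mathcal{S},i}|Z_i)
\]
as a sum of $n$ i.i.d.\ random variables with mean $H(Y_{\mathcal{S}}|Z)$, I would apply a Hoeffding-type concentration inequality to conclude that this sum is at least $n H(Y_{\mathcal{S}}|Z) - n\delta_{\mathcal{S}}(n)$ with probability at least $1 - \epsilon/2^D$. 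Plugging the target deviation $\log(|\mathcal{Y}_{\mathcal{S}}|+3)\sqrt{(2/n)\log(2^D/\epsilon)}$ into Hoeffding, and using $\log 2 \leq 1$ to bound $\log(2^D/\epsilon)$ by $D + \log(1/\epsilon)$, recovers precisely $\delta_{\mathcal{S}}(n)$. Summing the failure probabilities over the $2^D$ subsets then yields $\mathbb{V}(q,w) \leq \epsilon$.

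The main obstacle is the concentration step, because the summands $-\log q_{Y_{\mathcal{S}}|Z}(Y_{\mathcal{S},i}|Z_i)$ are not uniformly bounded: they blow up for symbols of small conditional probability, so the direct range-based Hoeffding inequality does not apply. The cleanest route I foresee is to split each summand into a bounded part truncated at the level $\log(|\mathcal{Y}_{\mathcal{S}}|+3)$, for which Hoeffding yields the claimed $\sqrt{1/n}$ deviation with the correct prefactor, and a tail part corresponding to very atypical outcomes, whose total probability mass is small enough to be absorbed into the $\epsilon$-budget. It is precisely the calibration of this truncation that produces the $+3$ inside the logarithm and the factor $\log(|\mathcal{Y}_{\mathcal{S}}|+3)$ in $\delta_{\mathcal{S}}(n)$; the remainder of the argument is bookkeeping via the union bound described above.
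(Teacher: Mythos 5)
Your construction is the same as the paper's: let $w$ be $q_{Y^n_{\mathcal{D}}Z^n}$ restricted to the intersection of the $2^D$ "typical" events $\mathcal{G}_{\mathcal{S}}$, so that the marginal bound $w_{Y^n_{\mathcal{S}}Z^n}(y^n_{\mathcal{S}},z^n)\le \mathds{1}\{(y^n_{\mathcal{S}},z^n)\in\mathcal{G}_{\mathcal{S}}\}\,q_{Y^n_{\mathcal{S}}Z^n}(y^n_{\mathcal{S}},z^n)$ gives the min-entropy estimates for free, and $\mathbb{V}(q,w)$ is controlled by a union bound over the $2^D$ subsets, each handled by a concentration inequality for the i.i.d.\ sum $-\log q_{Y^n_{\mathcal{S}}|Z^n}(Y^n_{\mathcal{S}}|Z^n)$. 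Up to this point your argument matches the paper's step for step.

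Where you deviate is the concentration step, and there the truncation plan does not quite close. Truncating the per-letter summand $-\log q_{Y_{\mathcal{S}}|Z}(Y_{\mathcal{S},i}|Z_i)$ at threshold $\log(|\mathcal{Y}_{\mathcal{S}}|+3)$ does not produce a negligible tail: there is no reason $q_{Y_{\mathcal{S}}|Z}(\cdot|\cdot)$ is bounded below by $1/(|\mathcal{Y}_{\mathcal{S}}|+3)$, so the event $\{-\log q_{Y_{\mathcal{S}}|Z}(Y_{\mathcal{S},i}|Z_i)>\log(|\mathcal{Y}_{\mathcal{S}}|+3)\}$ can have constant probability, and moreover truncation shifts the mean away from $H(Y_{\mathcal{S}}|Z)$. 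The quantity $\log(|\mathcal{Y}_{\mathcal{S}}|+3)$ is not a truncation level but a universal bound on a centered moment (equivalently a sub-Gaussian norm) of the \emph{unbounded} summand $-\log q_{Y_{\mathcal{S}}|Z}(Y_{\mathcal{S}}|Z)$, and the concentration inequality you need is precisely \cite[Theorem 3.3.3]{renner2008security}, which gives $\mathbb{P}\bigl[-\log q_{Y^n_{\mathcal{S}}|Z^n}(Y^n_{\mathcal{S}}|Z^n)< nH(Y_{\mathcal{S}}|Z)-n\delta\bigr]\le 2^{-n\delta^2/(2\log^2(|\mathcal{Y}_{\mathcal{S}}|+3))}$. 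Plugging $\delta=\delta_{\mathcal{S}}(n)$ yields exactly $2^{-D}\epsilon$ per subset, and the union bound finishes the proof. The paper invokes Renner's result directly rather than reproving it; you should do the same (or replace the range-based Hoeffding with a Bernstein-type bound driven by the moment estimate, which is essentially what Renner's proof does).
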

 \begin{proof}
 See Appendix \ref{App_lemma2}.
 \end{proof}

We now combine Lemma \ref{lemloh} and Lemma \ref{lems1} as follows.

\begin{lem} \label{lemamp}
For any $\mathcal{U} \in \mathbb{U}$, we have
\begin{align}
 &\mathbb{V} ( p_{F_{\mathcal{D}}(Y^{nB}_{\mathcal{D}})F_{\mathcal{D}} M^B_{\mathcal{D}} X^{nB}_{\mathcal{U}} }, p_{U_{\mathcal{D}}} p_{U_{\mathcal{F}}}p_{M^B_{\mathcal{D}}X^{nB}_{\mathcal{U}}  }) \nonumber \\
 & \leq    2 \epsilon+  \sqrt{ \sum_{ \substack{ \mathcal{S} \subseteq {\mathcal{D}} \\ \mathcal{S}\neq \emptyset }} 2^{ r_{\mathcal{S}} - BH \left( Y^n_{\mathcal{S}}|M_{\mathcal{D}}  X^n_{\mathcal{U}}   \right)+ B \delta_{\mathcal{S}}(n,B)} }, \label{eqlem3}
\end{align}
where $\delta_{\mathcal{S}}(n,B)\triangleq (\log (\lvert\mathcal{Y}_{\mathcal{S}}\rvert^n+3))\sqrt{\frac{2}{B}(D+\log(\frac{1}{\epsilon}))}$.
\end{lem}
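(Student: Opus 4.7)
The plan is to combine the two preceding lemmas in three steps: use Lemma \ref{lems1} to replace the true joint distribution by a subnormalized function whose min-entropies are close to block-conditional Shannon entropies, apply the distributed leftover hash lemma (Lemma \ref{lemloh}) to that approximation, and absorb the approximation error via the triangle inequality.

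Since the $B$ reconciliation blocks are generated independently, $p_{Y^{nB}_{\mathcal{D}} M^B_{\mathcal{D}} X^{nB}_{\mathcal{U}}}$ is the $B$-fold product of $p_{Y^n_{\mathcal{D}} M_{\mathcal{D}} X^n_{\mathcal{U}}}$, i.e.\ it is i.i.d.\ at the block level. I would invoke Lemma \ref{lems1} on these $B$ ``letters,'' treating each $Y^n_d$ as a single symbol over $\mathcal{Y}_d^n$ and bundling $(M_{\mathcal{D}}, X^n_{\mathcal{U}})$ into the conditioning variable, producing a subnormalized $w_{Y^{nB}_{\mathcal{D}} M^B_{\mathcal{D}} X^{nB}_{\mathcal{U}}}$ with $\mathbb{V}(p, w) \leq \epsilon$ and, for every nonempty $\mathcal{S} \subseteq \mathcal{D}$,
\[
H_{\infty}(w_{Y^{nB}_{\mathcal{S}} M^B_{\mathcal{D}} X^{nB}_{\mathcal{U}}} \mid p_{M^B_{\mathcal{D}} X^{nB}_{\mathcal{U}}}) \geq B\, H(Y^n_{\mathcal{S}} \mid M_{\mathcal{D}} X^n_{\mathcal{U}}) - B\, \delta_{\mathcal{S}}(n, B),
\]
where the block-alphabet factor $|\mathcal{Y}_{\mathcal{S}}|^n$ inside $\delta_{\mathcal{S}}(n,B)$ is exactly what appears in the statement.

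Next, I would apply Lemma \ref{lemloh} to this $w$, playing the role of $\mathcal{L}$ with $\mathcal{D}$, the role of $X_l$ with $Y^{nB}_d$, the role of $Z$ with $(M^B_{\mathcal{D}}, X^{nB}_{\mathcal{U}})$, and choosing $q_Z = p_{M^B_{\mathcal{D}} X^{nB}_{\mathcal{U}}}$. Inserting the min-entropy bounds above into \eqref{eq:lohl} produces the desired square-root quantity as an upper bound on $\mathbb{V}(w_{F_{\mathcal{D}}(Y^{nB}_{\mathcal{D}}) F_{\mathcal{D}} M^B_{\mathcal{D}} X^{nB}_{\mathcal{U}}},\, p_{U_{\mathcal{D}}} p_{U_{\mathcal{F}}} w_{M^B_{\mathcal{D}} X^{nB}_{\mathcal{U}}})$. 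Two triangle-inequality steps would then finish the proof: because variational distance is non-increasing under the deterministic operations of hashing and marginalization, one has $\mathbb{V}(p_{F(Y) F M X}, w_{F(Y) F M X}) \leq \mathbb{V}(p, w) \leq \epsilon$ and $\mathbb{V}(w_{M^B_{\mathcal{D}} X^{nB}_{\mathcal{U}}}, p_{M^B_{\mathcal{D}} X^{nB}_{\mathcal{U}}}) \leq \epsilon$, which together account for the $2\epsilon$ overhead in \eqref{eqlem3}.

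The main technical subtlety is that Lemma \ref{lemloh} is stated for a subnormalized $p_{X_{\mathcal{L}} Z}$ and bounds the distance to $p_{U_{\mathcal{K}}} p_{U_{\mathcal{F}}} p_Z$ with the \emph{same} (subnormalized) marginal on the right, whereas we need the bound against the true normalized marginal $p_{M^B_{\mathcal{D}} X^{nB}_{\mathcal{U}}}$. This is precisely why Lemma \ref{lems1} takes the conditioning reference $q_{Z^n}$ to be the original i.i.d.\ marginal rather than the subnormalized one: the min-entropies it produces can then be plugged directly into Lemma \ref{lemloh} without an additional smoothing step or the extra reconciliation round used in prior constructions such as \cite{Maurer00,chou2014separation}, and the triangle-inequality corrections stay at the $O(\epsilon)$ level instead of degrading the exponent inside the square root.
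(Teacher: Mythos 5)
Your proposal is correct and matches the paper's argument essentially step for step: apply Lemma \ref{lems1} to the block-i.i.d.\ distribution $p_{Y^{nB}_{\mathcal{D}} M^B_{\mathcal{D}} X^{nB}_{\mathcal{U}}}$ with super-symbols $Y^n_d$ over $\mathcal{Y}_d^n$ and conditioning variable $(M_{\mathcal{D}}, X^n_{\mathcal{U}})$, apply Lemma \ref{lemloh} to the resulting smoothed $w$ with $q_Z = p_{M^B_{\mathcal{D}} X^{nB}_{\mathcal{U}}}$, and absorb the approximation error with a three-term triangle inequality plus the data-processing inequality. Your observation about why $q_{Z^n}$ in Lemma \ref{lems1} is deliberately taken to be the true (normalized) marginal, so that its min-entropy guarantees plug directly into Lemma \ref{lemloh} without an additional smoothing or reconciliation round, is precisely the point the paper makes.
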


\begin{proof}
See Appendix \ref{Applemamp}.
\end{proof}
Note that in the case $D=1$, a standard technique could be used \cite[Lemma 10]{Maurer00} to lower-bound the min-entropy appearing in the leftover hash lemma and study the effect of the public communication on the information leaked to unauthorized participants. However, using \cite[Lemma 10]{Maurer00} in the case $D>1$ to lower-bound the min-entropies in \eqref{eqremark1} would result in the achievability of 
\begin{align*}
 & \left\{  (R_d)_{d \in \mathcal{D}} : \right.\\
 & \left. R_{\mathcal{S }} \leq   \min_{ \mathcal{T} \subsetneq \mathcal{L} }  \left[I(Y_{\mathcal{S}};   X_{\mathcal{L}}  |X_{\mathcal{T}}) 
 - H(Y_{\mathcal{S}^c}|Y_{\mathcal{S}}X_{\mathcal{L}}) \right]^+ ,   \forall \mathcal{S} \subseteq \mathcal{D} \right\} 
\end{align*}
which is always contained in the region $\mathcal{R}_{1}^{(\textup{in})} $ of Theorem~\ref{prop1}. For this reason, we did not study the effect of the public communication on the information leaked to unauthorized participants in Lemma \ref{lemamp}. Instead, we do it by lower-bounding the Shannon entropies that appear in \eqref{eqlem3} as follows. Note that Property (ii) in the reconciliation protocol described in Section~\ref{sec:rec} plays a key role in the proof of Lemma~\ref{lemamp2}. 
\begin{lem} \label{lemamp2}
For any $\mathcal{S} \subseteq \mathcal{D}$, $\mathcal{S}\neq \emptyset$, we have
\begin{align*}
&H(Y_{\mathcal{S}}^n|M_{\mathcal{D}} X^n_{\mathcal{U}}) \geq  n \left[ I(Y_{\mathcal{S}};X_{\mathcal{L}}| X_{\mathcal{U}} )   - \delta(\epsilon)\right]   - \delta(n), 
\end{align*}
where $\delta(n)$ is such that $\lim_{n \to \infty} \delta(n) = 0$ and $\delta(\epsilon)$ is such that $\lim_{\epsilon \to 0} \delta(\epsilon) = 0$.
\end{lem}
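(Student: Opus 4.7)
The plan is to lower-bound $H(Y^n_{\mathcal{S}} | M_{\mathcal{D}} X^n_{\mathcal{U}})$ by splitting each sub-dealer's message $M_i$, via the nested binning structure, into a ``protected'' part $M^{*}_i$ that is nearly independent of $(Y^n_{\mathcal{S}}, X^n_{\mathcal{L}})$ by Requirement (ii), and a ``complementary'' part $M^{**}_i$ whose aggregate rate over $i \in \mathcal{D}$ will be driven close to $n H(Y_{\mathcal{S}}|X_{\mathcal{L}})$ by the rate choices of Section~\ref{sec:subcs}. Concretely, for each $i \in \mathcal{D}$ I would pick $j_i \in \llbracket 1, 2^{D} \rrbracket$ so that the corresponding subset is $\mathcal{S}_{j_i} = \mathcal{S} \setminus \llbracket 1, i-1 \rrbracket$ when $i \in \mathcal{S}^c$ and any zero-rate subset containing $i$ when $i \in \mathcal{S}$, and then set $M^{*}_i \triangleq M_{i, 1:j_i}$ and $M^{**}_i \triangleq M_{i, j_i + 1 : 2^D+1}$. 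The key feature of this choice is that $\llbracket 1, i-1\rrbracket \cup \mathcal{S}_{j_i} \supseteq \llbracket 1, i-1\rrbracket \cup \mathcal{S}$, so that $(Y^n_{\mathcal{S}}, M^{*}_{1:i-1}, X^n_{\mathcal{U}})$ is a deterministic function of $(Y^n_{\llbracket 1, i-1\rrbracket \cup \mathcal{S}_{j_i}}, X^n_{\mathcal{L}})$.

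Applying the elementary inequality $H(A|BC) \geq H(A|B) - H(C)$, the task reduces to establishing two bounds: the near-independence lower bound $H(Y^n_{\mathcal{S}} | M^{*}_{\mathcal{D}} X^n_{\mathcal{U}}) \geq n H(Y_{\mathcal{S}}|X_{\mathcal{U}}) - \delta(n)$, and the rate upper bound $H(M^{**}_{\mathcal{D}}) \leq n H(Y_{\mathcal{S}}|X_{\mathcal{L}}) + n\delta(\epsilon)$. For the first, I would expand $I(Y^n_{\mathcal{S}}; M^{*}_{\mathcal{D}} | X^n_{\mathcal{U}})$ by the chain rule over $i \in \mathcal{D}$ and, for each summand, apply data processing via $I(Y^n_{\mathcal{S}}; M^{*}_i | M^{*}_{1:i-1} X^n_{\mathcal{U}}) \leq I(Y^n_{\mathcal{S}} M^{*}_{1:i-1} X^n_{\mathcal{U}}; M^{*}_i) \leq I(Y^n_{\llbracket 1, i-1\rrbracket \cup \mathcal{S}_{j_i}} X^n_{\mathcal{L}}; M^{*}_i)$; this last quantity is $o(1)$ by Requirement (ii) combined with \cite[Lemma 2.7]{bookCsizar}, since the variational distance in \eqref{eqchoicerates2} decays exponentially in $n$ and the Markov-type extraction preserves that decay. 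For the second, I would bound $H(M^{**}_{\mathcal{D}}) \leq \sum_{i \in \mathcal{D}} n R_{i, \llbracket j_i + 1, 2^D+1 \rrbracket}$, substitute the rates $R_{i,\llbracket 1, 2^D + 1 \rrbracket} = (1+\epsilon) H(Y_i|Y_{1:i-1} X_{\mathcal{L}}) + \epsilon$ and $\bar{R}_{i, j_i}$ from Section~\ref{sec:subcs}, telescope $\sum_{i \in \mathcal{S}^c} H(Y_i | Y_{\llbracket 1, i-1\rrbracket \cup \mathcal{S}} X_{\mathcal{L}}) = H(Y_{\mathcal{S}^c}|Y_{\mathcal{S}} X_{\mathcal{L}})$, and then use $H(Y_{\mathcal{D}}|X_{\mathcal{L}}) = H(Y_{\mathcal{S}}|X_{\mathcal{L}}) + H(Y_{\mathcal{S}^c}|Y_{\mathcal{S}} X_{\mathcal{L}})$ to collect the residual terms into a $\delta(\epsilon)$ error.

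Combining the two bounds yields $H(Y^n_{\mathcal{S}} | M_{\mathcal{D}} X^n_{\mathcal{U}}) \geq n[H(Y_{\mathcal{S}}|X_{\mathcal{U}}) - H(Y_{\mathcal{S}}|X_{\mathcal{L}}) - \delta(\epsilon)] - \delta(n) = n[I(Y_{\mathcal{S}}; X_{\mathcal{L}}|X_{\mathcal{U}}) - \delta(\epsilon)] - \delta(n)$. The main obstacle is bridging the pointwise near-independence statement of Requirement (ii), which is marginal in $i$, to a joint bound on $I(Y^n_{\mathcal{S}}; M^{*}_{\mathcal{D}} | X^n_{\mathcal{U}})$; the index choice $\mathcal{S}_{j_i} = \mathcal{S} \setminus \llbracket 1, i-1 \rrbracket$ is designed precisely so that at sub-dealer $i$, the ``protected'' side set $\llbracket 1, i-1\rrbracket \cup \mathcal{S}_{j_i}$ contains both $\mathcal{S}$ (to absorb $Y^n_{\mathcal{S}}$) and $\llbracket 1, i-1\rrbracket$ (to absorb the prior protected sub-messages $M^{*}_{1:i-1}$), which makes each chain-rule summand reducible to Requirement (ii) by a single data-processing step.
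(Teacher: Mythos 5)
Your proposal is correct and uses the same two ingredients as the paper's proof — Requirement (ii) to control the leakage from the initial sub-messages $M_{i,1:j_i}$ (applied for $i\in\mathcal{S}^c$ with $\llbracket 1,i-1\rrbracket\cup\mathcal{S}_{j_i}=\llbracket 1,i-1\rrbracket\cup\mathcal{S}$), and the explicit rate choices plus the telescoping chain rule to account for the remaining communication. The paper reaches the same arithmetic via the identity $H(Y^n_{\mathcal{S}}|M_{\mathcal{D}}X^n_{\mathcal{U}})=H(Y^n_{\mathcal{S}}|X^n_{\mathcal{U}})+H(M_{\mathcal{S}^c}|Y^n_{\mathcal{S}}X^n_{\mathcal{U}})-H(M_{\mathcal{D}}|X^n_{\mathcal{U}})$ (upper-bounding the last term by the total rate and lower-bounding the middle term by near-uniformity of $M_{i,1:j}$), whereas you split each $M_i$ into $M^*_i,M^{**}_i$ and apply $H(A|BC)\geq H(A|B)-H(C)$ up front; this is a cleaner organization of an essentially identical argument.
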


\begin{proof}
See Appendix \ref{Applemamp2}.

\end{proof}
We are now equipped to prove that \eqref{eqSeca} and \eqref{eqU} hold. For any $\mathcal{U} \in \mathbb{U}$ and $\xi>0$, we have
\begin{align*}
&\mathbb{V} ( p_{F_{\mathcal{D}}(Y^{nB}_{\mathcal{D}})F_{\mathcal{D}} M_{\mathcal{D}}^B X^{nB}_{\mathcal{U}} }, p_{U_{\mathcal{D}}} p_{U_{\mathcal{F}}}p_{M^B_{\mathcal{D}}X^{nB}_{\mathcal{U}}  })    \\
&\stackrel{(a)} \leq      2 \epsilon+  \sqrt{ \sum_{ \substack{ \mathcal{S} \subseteq {\mathcal{D}} \\ \mathcal{S}\neq \emptyset }} 2^{ r_{\mathcal{S}} -   nB  I(Y_{\mathcal{S}};X_{\mathcal{L}}| X_{\mathcal{U}} )   + nB \delta(\epsilon)   +B \delta(n) + B \delta_{\mathcal{S}}(n,B)} }   \displaybreak[0] \\
& \stackrel{(b)} \leq 2 \epsilon+  \sqrt{ \sum_{ \substack{ \mathcal{S} \subseteq {\mathcal{D}} \\ \mathcal{S}\neq \emptyset }} 2^{ - n\xi  } }  \\
& \leq 2 \epsilon+ 2^{D/2}  2^{ - n\xi/2  }, \numberthis \label{eq45c}
\end{align*}
where $(a)$ holds by Lemmas \ref{lemamp} and \ref{lemamp2}, in $(b)$ we have chosen $r_{\mathcal{L}}$ such that for any $\mathcal{S} \subseteq \mathcal{D}$, 
\begin{align*}
r_{\mathcal{S}} & \leq \min_{\mathcal{U} \in \mathbb{U}} nB  I(Y_{\mathcal{S}};X_{\mathcal{L}}| X_{\mathcal{U}} ) \\
&\phantom{-}  - nB \delta(\epsilon)   -B \delta(n)- B \delta_{\mathcal{S}}(n,B) - n\xi .
\end{align*}
We conclude that \eqref{eqSeca} and \eqref{eqU} hold by \eqref{eq45c} and \cite[Lemma~2.7]{bookCsizar}.

\subsection{Proof of Theorem \ref{th2}} \label{App:th2}

By successively, rather than jointly (as in Theorem \ref{prop1}), considering the security constraints for the two sub-dealers, we prove Theorem \ref{th2}. The coding scheme and its analysis are described in Sections~\ref{csth5} and \ref{csth5a}, respectively. Note that $\mathcal{R}(\{1,2\}) = \mathcal{R}_{1}^{(\textup{in})}$, where the achievability of $\mathcal{R}_{1}^{(\textup{in})}$ follows from Theorem \ref{prop1} with $D=2$. Note also that if one can show the achievability of  $\left[  \mathcal{R}(\{1\})\times \mathcal{R}(\{2\}|\{1\}) \right]$, then one has the achievability of $\left[  \mathcal{R}(\{2\})\times \mathcal{R}(\{1\}|\{2\}) \right]$ by exchanging the roles of the two dealers. Hence, it is sufficient to prove the achievability of  $\left[  \mathcal{R}(\{1\})\times \mathcal{R}(\{2\}|\{1\}) \right]$.

\subsubsection{Coding scheme} \label{csth5}

In this section, we use the notation $\delta(n)$ to denote a generic function of $n$ that vanishes to $0$ as $n$ goes to infinity.  Our achievability scheme operates in two phases as follows.

\paragraph{Initialization phase} By using $n_2'$ source observations, Sub-dealer $2$ shares a secret $K_2$ with non-zero rate with the requirement $\lim_{n_2' \to \infty} \displaystyle\max_{ \mathcal{T} \subsetneq \mathcal{L}}   I\left({K}_{2} ; M_{2, \textup{init}}, X^{n_2'}_{\mathcal{T}}  \right) = \delta(n_2')$, where $M_{2, \textup{init}}$ corresponds to the public communication sends by Sub-dealer $2$. This is possible by Theorem~\ref{cor4} because we assumed that $\min_{d \in \{1,2\}}\min_{ \mathcal{T} \subsetneq \mathcal{L}} I(Y_{d};X_{\mathcal{L}}|X_{\mathcal{T}}) >0$. Define  for $\mathcal{U} \subsetneq \mathcal{L}$, $I_{2}(\mathcal{U}) \triangleq (M_{2, \textup{init}}, X_{\mathcal{U}}^{n_2'} )$.

\paragraph{Successive secret distribution phase} This phase requires $n$ source observations. Sub-dealer~1 performs the coding scheme in the proof of Theorem \ref{prop1} for the case $D=1$  with the requirement 
\begin{align} \label{eqreq1}
\lim_{n \to \infty} \displaystyle\max_{\mathcal{T} \subsetneq \mathcal{L}}  I\left({S}_{1} ; M_{1}, X^n_{\mathcal{T}}  \right) = 0.
\end{align}

Sub-dealer~2 performs the coding scheme in the proof of Theorem \ref{prop1} assuming that all the participants have access to $Y^n_1$  with the requirement
\begin{align} \label{eqreq2}
\lim_{n \to \infty} \displaystyle\max_{\mathcal{T} \subsetneq \mathcal{L}}  I\left({S}_{2} ; M_{2}, X^n_{\mathcal{T}}, Y^n_{1}  \right) = 0,
\end{align}
for the case $D=1$ with the following modification: Using the same notation as in the proof of Theorem \ref{prop1}, instead of defining $M_2 \triangleq M_{2,1:3}$, define  $M_2$ as  $M_2 \triangleq (M_2', M_2'')$ with $M_2' \triangleq K_2 \oplus M_{2,3}$ and $M_2'' \triangleq M_{2,1:2}$. By Property (ii) in the reconciliation step of the proof of Theorem \ref{prop1}, we have \begin{align} \label{eqreq2bbb}
I(M_2'';Y_1^n X^n_{\mathcal{L}} ) = \delta(n).\end{align}
Then, the proof of Theorem \ref{prop1} is still valid  because  $K_2$ is known by the participants (by the initialization phase provided that $n'_2$ is such that $|K_2| = |M_{2,3}|$), and the secrecy rates $R_1 = R(\{1\} )$ and $R_2 = R(\{2\} | \{ 1\})$ are achievable for Requirements \eqref{eqreq1} and \eqref{eqreq2}. Note that $|K_2|= |M_{2,3}|$ is negligible compared to $n$. More specifically, by inspecting the proof of Theorem~\ref{prop1}, one can choose $|M_{2,3}|$, on the order of $n^{1/2 -\xi}$, $\xi>0$, similar to \cite{chou2016coding,chou2013data}. Hence, it only remains to show that Requirements \eqref{eqreq1} and~\eqref{eqreq2} imply Requirements~\eqref{eqSeca} and~\eqref{eqU}.

\subsubsection{Coding scheme analysis} \label{csth5a}

We first prove that \eqref{eqU} holds. We have
\begin{align*}
&\log(|\mathcal{S}_1||\mathcal{S}_2|) - H(S_1,S_2)\\
& = \log(|\mathcal{S}_1||\mathcal{S}_2|) - H(S_1) - H(S_2) + I(S_2;S_1) \\
& \leq \log(|\mathcal{S}_1||\mathcal{S}_2|) - H(S_1) - H(S_2) + I(S_2;Y_1^n) \\
& \xrightarrow{n\to \infty} 0,
\end{align*}
where the limit holds by almost uniformity of $S_1$ and $S_2$, and by \eqref{eqreq2}. 

We now prove that \eqref{eqSeca} holds. We first ignore the initialization phase and upper bound the quantity $\displaystyle\max_{\mathcal{T} \subsetneq \mathcal{L}}I(S_{1},S_{2}; M_{1},M_{2}, X_{\mathcal{T}}^n)$. 

\begin{lem} \label{lemsc1}
For any $\mathcal{T} \subsetneq \mathcal{L}$, we have
\begin{align*}
I(S_{1},S_{2}; M_{1},M_{2}, X_{\mathcal{T}}^n)
&\leq \delta(n)  + \delta(n_2'), \numberthis \label{eqf1}
\end{align*}
\end{lem}

\begin{proof}
See Appendix \ref{Applemsc1}.

\end{proof}

Next, we jointly consider the initialization phase and the successive secret distribution phase.   

\begin{lem} \label{lemsc2}
We have for any $\mathcal{U}, \mathcal{T} \subsetneq \mathcal{L}$,
\begin{align*}
&I(S_{1},S_{2}; I_2(\mathcal{U}), M_{1},M_{2}, X_{\mathcal{T}}^n)  
\leq \delta(n)+  \delta(n'_2).
\end{align*}
\end{lem}

\begin{proof}
See Appendix \ref{Applemsc2}.

\end{proof}

\section{Converse proofs} \label{appth2ga}

\subsection{Proof of Theorem \ref{th2ga}} \label{secconvproof2}
Consider a secret-sharing strategy, as in Definition \ref{definition_modelg}, that satisfies the constraints~\eqref{eqrel}, \eqref{eqSeca}, and~\eqref{eqU}. For any $\mathcal{T} \subseteq \mathcal{D}$, $\mathcal{A} \in \mathbb{A}$,  $\mathcal{U} \in \mathbb{U}$, we have 
\begin{align*}
nR_{\mathcal{T}} 
& = \log |\mathcal{S}_{\mathcal{T}} |\\
& \stackrel{(a)} \leq H(S_{\mathcal{T}}) + o(n)\\
& \stackrel{(b)} \leq H(S_{\mathcal{T}} |M_{\mathcal{D}} X_{\mathcal{U}}^n) + o(n)\\
& \stackrel{(c)} \leq I (S_{\mathcal{T}}; \widehat{S}_{\mathcal{D}}(\mathcal{A}) | M_{\mathcal{D}} X_{\mathcal{U}}^n) + o(n)\\
& \stackrel{(d)}\leq I (Y^n_{\mathcal{T}}; X_{\mathcal{A}}^nM_{\mathcal{D}} | M_{\mathcal{D}} X_{\mathcal{U}}^n) + o(n)\\
& = I (Y^n_{\mathcal{T}}; X_{\mathcal{A}}^n | M_{\mathcal{D}} X_{\mathcal{U}}^n) + o(n)\\
& \leq  I (Y^n_{\mathcal{T}} M_{\mathcal{T}}; X_{\mathcal{A}}^n M_{\mathcal{T}^c}|  X_{\mathcal{U}}^n) + o(n)\\
&  \stackrel{(e)} \leq  I (Y^n_{\mathcal{T}}; X_{\mathcal{A}}^n Y^n_{\mathcal{T}^c}|  X_{\mathcal{U}}^n) + o(n) \\
& = n I (Y_{\mathcal{T}}; X_{\mathcal{A}}Y_{\mathcal{T}^c} |  X_{\mathcal{U}}) + o(n), \numberthis \label{eqreconv1}
\end{align*}
where $(a)$ holds by \eqref{eqU}, $(b)$ holds by \eqref{eqSeca}, $(c)$ holds by Fano's inequality and \eqref{eqrel}, $(d)$ holds because $\widehat{S}_{\mathcal{D}}(\mathcal{A})$ is a function of $(X_{\mathcal{A}}^n,M_{\mathcal{D}})$ and $S_{\mathcal{T}}$ is a function of $Y^n_{\mathcal{T}}$, $(e)$ holds because $M_{\mathcal{S}}$ is a function of $Y^n_{\mathcal{S}}$ for any $\mathcal{S} \subseteq \mathcal{D}$.

Then, since \eqref{eqreconv1} is valid for any $\mathcal{A} \in \mathbb{A}$,  $\mathcal{U} \in \mathbb{U}$, an upper-bound on the sum-rate $R_{\mathcal{S}} = \sum_{d\in\mathcal{S}} R_d$, $\mathcal{S} \subseteq \mathcal{D}$,    is $\displaystyle\min_{\mathcal{A} \in \mathbb{A}} \min_{\mathcal{U} \in \mathbb{U}}I(Y_{\mathcal{S}}; X_{\mathcal{A}} Y_{\mathcal{S}^c} |X_{\mathcal{U}}) + o(1)$.

\subsection{Proof of Theorem \ref{th3}} \label{secconvproof2b}

The proof of Theorem~\ref{th3} follows from the proof of Theorem~\ref{th2ga}, since for the all-or-nothing access structure we have for any $\mathcal{S} \subseteq \mathcal{D} $
\begin{align*}
\min_{\mathcal{A} \in \mathbb{A}} \min_{\mathcal{U} \in \mathbb{U}} I(Y_{\mathcal{S}};X_{\mathcal{\mathcal{A}}} Y_{\mathcal{S}^c}| X_{\mathcal{U}}) 
& =   \min_{\mathcal{U} \in \mathbb{U}} I(Y_{\mathcal{S}};X_{\mathcal{\mathcal{L}}} Y_{\mathcal{S}^c}| X_{\mathcal{U}}) \\
& =   \min_{\mathcal{U} \subsetneq \mathcal{L}} I(Y_{\mathcal{S}};X_{\mathcal{\mathcal{L}}} Y_{\mathcal{S}^c}| X_{\mathcal{U}}) .
\end{align*}

\section{Proof of capacity results in some special cases} \label{secoptimal}
 \subsection{Proof of Theorem \ref{cor4}} \label{appth6}
 The result holds by Corollary \ref{cor3} using the facts that for any $\mathcal{T} \subsetneq \mathcal{L}$, the Markov chain $Y_{\mathcal{D}} - X_{\mathcal{L}} - X_{\mathcal{T}}$ holds, and that $\mathbb{U}$ is the set of strict subsets of $\mathcal{L}$ for the all-or-nothing access structure. 

\subsection{Proof of Theorem \ref{thm_thr}} \label{App_th7}
In the following, for any $\mathcal{S} \subseteq \mathcal{L}$, $\mathcal{T} \subseteq \mathcal{D}$, we use the notation $K_{\mathcal{S},\mathcal{T}} \triangleq (K_{l,d})_{d \in \mathcal{S},d \in \mathcal{T}} $.

We first prove the achievability part. Let $t \in \llbracket 1 , L \rrbracket$ and $\mathcal{S} \subseteq \mathcal{D}$. We have
\begin{align*}
\displaystyle\max_{\mathcal{A} \in \mathbb{A}_t} H(Y_{\mathcal{S}}|Y_{\mathcal{S}^c}X_{\mathcal{A}}) 
& \stackrel{(a)}= \displaystyle\max_{\mathcal{A} \in \mathbb{A}_t} H(K_{\mathcal{L},\mathcal{S}}|K_{\mathcal{L},\mathcal{S}^c}K_{\mathcal{A},\mathcal{D}}) \\
& = \displaystyle\max_{\mathcal{A} \in \mathbb{A}_t} H(K_{\mathcal{L},\mathcal{S}}|K_{\mathcal{L},\mathcal{S}^c}K_{\mathcal{A},\mathcal{S}}K_{\mathcal{A},\mathcal{S}^c}) \\
& \stackrel{(b)}= \displaystyle\max_{\mathcal{A} \in \mathbb{A}_t} H(K_{\mathcal{L},\mathcal{S}}|K_{\mathcal{A},\mathcal{S}}) \\
& \stackrel{(c)}= \displaystyle\max_{\mathcal{A} \in \mathbb{A}_t} H(K_{\mathcal{A}^c,\mathcal{S}}) \\
&  \stackrel{(d)}= |\mathcal{S}| (L-t), \numberthis \label{eqamax001}
\end{align*}
where $(a)$ holds by definition of $Y_{\mathcal{S}}^n$, $Y_{\mathcal{S}^c}^n$, and $X_{\mathcal{A}}^n$, $(b)$ holds by independence between $(K_{\mathcal{A},\mathcal{S}^c},K_{\mathcal{L},\mathcal{S}^c})$ and $(K_{\mathcal{L},\mathcal{S}},K_{\mathcal{A},\mathcal{S}})$, $(c)$ holds by independence between $K_{\mathcal{A}^c,\mathcal{S}}$ and $K_{\mathcal{A},\mathcal{S}}$, $(d)$~holds by independence and uniformity of the keys.
Next, let $z \in \llbracket 1, t-1 \rrbracket$ and $\mathcal{S} \subseteq \mathcal{D}$. We have
\begin{align*}
\displaystyle\min_{\mathcal{U} \in \mathbb{U}_z} H(Y_{\mathcal{S}}|X_{\mathcal{U}})
& \stackrel{(a)} = \displaystyle\min_{\mathcal{U} \in \mathbb{U}_z} H(K_{\mathcal{L},\mathcal{S}}|K_{\mathcal{U},\mathcal{D}})\\
&  = \displaystyle\min_{\mathcal{U} \in \mathbb{U}_z} H(K_{\mathcal{L},\mathcal{S}}|K_{\mathcal{U},\mathcal{S}}K_{\mathcal{U},\mathcal{S}^c})\\
& \stackrel{(b)}= \displaystyle\min_{\mathcal{U} \in \mathbb{U}_z} H(K_{\mathcal{L},\mathcal{S}}|K_{\mathcal{U},\mathcal{S}})\\
& \stackrel{(c)}= \displaystyle\min_{\mathcal{U} \in \mathbb{U}_z} H(K_{\mathcal{U}^c,\mathcal{S}})\\
& \stackrel{(d)}=  |\mathcal{S}| (L-z), \numberthis \label{equmax001}
\end{align*}
where $(a)$ holds by definition of $Y_{\mathcal{S}}^n$ and $X_{\mathcal{U}}^n$, $(b)$ holds by independence between $K_{\mathcal{U},\mathcal{S}^c}$ and $(K_{\mathcal{L},\mathcal{S}},K_{\mathcal{U},\mathcal{S}})$, $(c)$ holds by independence between $K_{\mathcal{U}^c,\mathcal{S}}$ and $K_{\mathcal{U},\mathcal{S}}$, $(d)$ holds by independence and uniformity of the keys. %
Next, we have
\begin{align*}
&\mathcal{R}^{(\textup{in})}(\mathbb{A}_t,\mathbb{U}_z)\\
& \stackrel{(a)}= \textup{Proj}_{(R_d)_{d\in\mathcal{D}}} \left\{ \left(R_d,R_d'\right)_{d\in\mathcal{D}} : \right.\\
&\left. \phantom{-----} \begin{array}{rl}
R_{\mathcal{S}}' &\geq \displaystyle\max_{\mathcal{A} \in \mathbb{A}_t} H(Y_{\mathcal{S}}|Y_{\mathcal{S}^c}X_{\mathcal{A}}), \forall \mathcal{S} \subseteq \mathcal{D}\\
R_{\mathcal{S}}' + R_{\mathcal{S}} &\leq \displaystyle\min_{\mathcal{U} \in \mathbb{U}_z} H(Y_{\mathcal{S}}|X_{\mathcal{U}}), \forall \mathcal{S} \subseteq \mathcal{D}
\end{array}      \right\}\\
& \stackrel{(b)} = \textup{Proj}_{(R_d)_{d\in\mathcal{D}}} \left\{ \left(R_d,R_d'\right)_{d\in\mathcal{D}} :\right.\\
&\left. \phantom{-----} \begin{array}{rl}
R_{\mathcal{S}}' &\geq |\mathcal{S}| (L-t), \forall \mathcal{S} \subseteq \mathcal{D}\\
R_{\mathcal{S}}' + R_{\mathcal{S}} &\leq |\mathcal{S}| (L-z), \forall \mathcal{S} \subseteq \mathcal{D}
\end{array}      \right\} \\
& \stackrel{(c)}=   \left\{ \left(R_d \right)_{d\in\mathcal{D}} :  R_{\mathcal{S}} \leq |\mathcal{S}| (t-z), \forall \mathcal{S} \subseteq \mathcal{D}     \right\},
\end{align*}
where $(a)$ holds by Theorem \ref{th1ga}, $(b)$ holds by \eqref{eqamax001} and \eqref{equmax001}, $(c)$ holds as follows. 
First, consider the system 
\begin{align}
\begin{pmatrix}
	R_{\mathcal{S}}' \geq |\mathcal{S}| (L-t), \forall \mathcal{S} \subseteq \mathcal{D} \\
	R_{\mathcal{S}}' + R_{\mathcal{S}} \leq |\mathcal{S}| (L-z), \forall \mathcal{S} \subseteq \mathcal{D}
	\end{pmatrix} \label{eqSI},
\end{align}
and remark that the set functions $f: 2^{\mathcal{D}}\to \mathbb{R}, \mathcal{S} \mapsto |\mathcal{S}|(L-z)- R_{\mathcal{S}}$ and $g: 2^{\mathcal{D}}\to \mathbb{R}, \mathcal{S} \mapsto - |\mathcal{S}| (L-t) $ are submodular, i.e., $\forall \mathcal{S},\mathcal{T} \subseteq \mathcal{D}, g(\mathcal{S}) + g(\mathcal{T}) \geq g(\mathcal{S} \cup \mathcal{T})+g(\mathcal{S} \cap \mathcal{T})$ and $f(\mathcal{S}) + f(\mathcal{T}) \geq f(\mathcal{S} \cup \mathcal{T})+f(\mathcal{S} \cap \mathcal{T})$. Hence, by Lemma~\ref{lemsubm} below, we have that the system \eqref{eqSI} has a solution if and only if  
\begin{align*}
	|\mathcal{S}|(L-t) \leq  |\mathcal{S}| (L-z) - R_{\mathcal{S}} , \forall \mathcal{S} \subseteq \mathcal{D},
\end{align*}
which we rewrite as
\begin{align*}
	R_{\mathcal{S}} \leq |\mathcal{S}| (t-z), \forall \mathcal{S} \subseteq \mathcal{D}.
\end{align*}
\begin{lem}[{\cite[Lemma 2]{zhang2017multi}}] \label{lemsubm}
Consider two submodular functions $f:2^{\mathcal{D}} \to \mathbb{R}$ and $g:2^{\mathcal{D}} \to \mathbb{R}$. Then, the following system of equations for $(x_d)_{d\in\mathcal{D}} \in \mathbb{R}_+^D$
\begin{align*}
-g(\mathcal{S})\leq\sum_{s \in \mathcal{S}} x_s \leq f(\mathcal{S}) ,\forall \mathcal{S} \subseteq \mathcal{D},
\end{align*}
has a solution if and only if $-g(\mathcal{S}) \leq f(\mathcal{S}) ,\forall \mathcal{S} \subseteq \mathcal{D}$.
\end{lem}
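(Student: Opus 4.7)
Necessity is immediate: if $(x_d)_{d \in \mathcal{D}} \in \mathbb{R}_{+}^{D}$ satisfies the system, then for any $\mathcal{S} \subseteq \mathcal{D}$ we have $-g(\mathcal{S}) \leq \sum_{s \in \mathcal{S}} x_s \leq f(\mathcal{S})$, so $-g(\mathcal{S}) \leq f(\mathcal{S})$. All the work lies in the converse.

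For sufficiency, I would argue by induction on $D \triangleq |\mathcal{D}|$. In the base case $D = 1$, with $\mathcal{D} = \{d\}$, the system reduces to $\max(0, -g(\{d\})) \leq x_d \leq f(\{d\})$, and the hypothesis applied to $\mathcal{S} = \{d\}$ (together with the standard normalization $f(\emptyset) = g(\emptyset) = 0$, otherwise implied by the $\mathcal{S} = \emptyset$ instance) ensures this interval is non-empty. For the inductive step, fix any $d^{\star} \in \mathcal{D}$, choose $x_{d^{\star}} \in \mathbb{R}_{+}$, and define the contracted functions $\tilde{f}, \tilde{g} : 2^{\mathcal{D} \setminus \{d^{\star}\}} \to \mathbb{R}$ by
\begin{align*}
\tilde{f}(\mathcal{S}) &\triangleq \min\bigl(f(\mathcal{S}),\, f(\mathcal{S} \cup \{d^{\star}\}) - x_{d^{\star}}\bigr),\\
\tilde{g}(\mathcal{S}) &\triangleq \min\bigl(g(\mathcal{S}),\, g(\mathcal{S} \cup \{d^{\star}\}) + x_{d^{\star}}\bigr).
\end{align*}
A direct check shows that $\tilde{f}$ and $\tilde{g}$ remain submodular (the minimum of a submodular function with a shifted restriction of itself is submodular). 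If $x_{d^{\star}}$ can be chosen so that $-\tilde{g}(\mathcal{S}) \leq \tilde{f}(\mathcal{S})$ for all $\mathcal{S} \subseteq \mathcal{D} \setminus \{d^{\star}\}$, the induction hypothesis yields a valid $(x_d)_{d \in \mathcal{D} \setminus \{d^{\star}\}}$, and the concatenation solves the original system because an upper (resp.\ lower) constraint at a set $\mathcal{S} \ni d^{\star}$ matches the corresponding constraint at $\mathcal{S} \setminus \{d^{\star}\}$ for $\tilde{f}$ (resp.\ $\tilde{g}$).

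The main obstacle is producing such an $x_{d^{\star}}$. Expanding $-\tilde{g}(\mathcal{S}) \leq \tilde{f}(\mathcal{S})$ into the four cases generated by the two minima yields a finite list of linear inequalities on $x_{d^{\star}}$, namely upper bounds of the form $x_{d^{\star}} \leq f(\mathcal{S} \cup \{d^{\star}\}) + g(\mathcal{T})$ for appropriate pairs $(\mathcal{S}, \mathcal{T})$, together with the lower bound $x_{d^{\star}} \geq \max(0, -g(\{d^{\star}\}))$. Non-emptiness of the intersection of these intervals reduces to pairwise inequalities comparing $f, g$ across two sets, each of which follows by combining the submodular uncrossing inequalities $f(\mathcal{S}) + f(\mathcal{T}) \geq f(\mathcal{S} \cup \mathcal{T}) + f(\mathcal{S} \cap \mathcal{T})$ and $g(\mathcal{S}) + g(\mathcal{T}) \geq g(\mathcal{S} \cup \mathcal{T}) + g(\mathcal{S} \cap \mathcal{T})$ with the hypothesis $-g(\mathcal{U}) \leq f(\mathcal{U})$ applied at suitable $\mathcal{U}$. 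This uncrossing step is precisely where the submodularity of both $f$ and $g$ is used essentially; without it the required inequalities can fail. An alternative, equivalent route is to cast the problem as an LP feasibility question and apply Farkas' lemma: any hypothetical dual certificate of infeasibility, consisting of nonnegative multipliers on the upper and lower constraints, can be iteratively uncrossed (using submodularity of $f$ on the upper-multipliers and of $g$ on the lower-multipliers) into a certificate supported on a single set $\mathcal{S}_{0}$, yielding the contradiction $-g(\mathcal{S}_{0}) > f(\mathcal{S}_{0})$. This is essentially Edmonds' classical theorem on the non-emptiness of the intersection of a polymatroid and a contra-polymatroid, of which this lemma is a variant.
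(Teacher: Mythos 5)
Two remarks before the substance: the paper itself offers no proof of this lemma---it is imported verbatim from \cite[Lemma 2]{zhang2017multi}---so your argument is not competing with an in-paper derivation, and the route you take (fix $d^{\star}$, contract onto $\mathcal{D}\setminus\{d^{\star}\}$, uncross; equivalently Farkas plus uncrossing) is the standard proof of Frank's discrete separation theorem, which is indeed what this lemma is, with $-g$ playing the role of the supermodular minorant. The two technical claims you assert without detail do check out: $\tilde{f}(\mathcal{S})=\min\bigl(f(\mathcal{S}),f(\mathcal{S}\cup\{d^{\star}\})-x_{d^{\star}}\bigr)$ is submodular (the four-case verification uses submodularity of $f$ at the pairs $(\mathcal{A},\mathcal{B})$, $(\mathcal{A}\cup\{d^{\star}\},\mathcal{B}\cup\{d^{\star}\})$ and $(\mathcal{A},\mathcal{B}\cup\{d^{\star}\})$), likewise $\tilde{g}$; and the compatibility of a lower constraint $x_{d^{\star}}\geq -g(\mathcal{A}\cup\{d^{\star}\})-f(\mathcal{A})$ with an upper constraint $x_{d^{\star}}\leq f(\mathcal{B}\cup\{d^{\star}\})+g(\mathcal{B})$ follows from submodularity of $g$ at $(\mathcal{A}\cup\{d^{\star}\},\mathcal{B})$, submodularity of $f$ at $(\mathcal{A},\mathcal{B}\cup\{d^{\star}\})$, and the hypothesis at $\mathcal{A}\cap\mathcal{B}$ and $\mathcal{A}\cup\mathcal{B}\cup\{d^{\star}\}$.

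The genuine gap is your handling of the constraint $x\in\mathbb{R}_{+}^{D}$. In the base case you claim the interval $[\max(0,-g(\{d\})),f(\{d\})]$ is nonempty by the hypothesis plus normalization; it need not be, since $f(\{d\})$ can be negative (take $D=1$, $f(\{1\})=-1$, $g(\{1\})=2$: then $-g\leq f$ everywhere, yet no nonnegative solution exists---so the lemma as literally stated is false, and no proof can close this case). The same issue resurfaces in your inductive step: among the pairwise inequalities you need is $0\leq f(\mathcal{T}\cup\{d^{\star}\})+g(\mathcal{T})$, and this does \emph{not} follow from submodularity and $-g\leq f$ alone, contrary to your claim that every required inequality does. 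The statement is really used (here and in \cite{zhang2017multi}) under the implicit extra hypothesis that the lower bounds are nonnegative; in the paper's application $-g(\mathcal{S})=|\mathcal{S}|(L-t)\geq 0$. With $-g\geq 0$ your argument does close: submodularity of $g$ gives $g(\mathcal{T})\geq g(\mathcal{T}\cup\{d^{\star}\})-g(\{d^{\star}\})$, hence $f(\mathcal{T}\cup\{d^{\star}\})+g(\mathcal{T})\geq -g(\{d^{\star}\})\geq 0$; the base case is immediate; and $-\tilde{g}\geq 0$ is preserved by your contraction, so the induction hypothesis applies. Alternatively, drop $x\geq 0$ and your proof yields exactly Frank's separation theorem, which suffices for the paper since the singleton lower bounds already force nonnegativity there. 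Either way, you must add that hypothesis (or that reduction); as written, the ``if'' direction has a hole exactly where the nonnegativity of $x$ enters.
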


We now prove the converse. Let $t \in \llbracket 1 , L \rrbracket$, $z \in \llbracket 1, t-1 \rrbracket$, and $\mathcal{S} \subseteq \mathcal{D}$. We have
\begin{align*}
&\displaystyle\min_{\mathcal{A} \in \mathbb{A}_t} \min_{\mathcal{U} \in \mathbb{U}_z} I(Y_{\mathcal{S}};X_{\mathcal{\mathcal{A}}} Y_{\mathcal{S}^c}| X_{\mathcal{U}})\\
& \stackrel{(a)}= \displaystyle\min_{\mathcal{A} \in \mathbb{A}_t} \min_{\mathcal{U} \in \mathbb{U}_z} I(K_{\mathcal{L},\mathcal{S}};K_{\mathcal{A},\mathcal{D}} K_{\mathcal{L},\mathcal{S}^c}| K_{\mathcal{U},\mathcal{D}})\\
& = \displaystyle\min_{\mathcal{A} \in \mathbb{A}_t} \min_{\mathcal{U} \in \mathbb{U}_z} [I(K_{\mathcal{L},\mathcal{S}};K_{\mathcal{A},\mathcal{D}} | K_{\mathcal{U},\mathcal{D}}) \\
& \phantom{--}+ I(K_{\mathcal{L},\mathcal{S}};K_{\mathcal{L},\mathcal{S}^c}| K_{\mathcal{U},\mathcal{D}} K_{\mathcal{A},\mathcal{D}} )]\\
& \stackrel{(b)}\leq \displaystyle\min_{\mathcal{A} \in \mathbb{A}_t} \min_{\mathcal{U} \in \mathbb{U}_z} [I(K_{\mathcal{L},\mathcal{S}};K_{\mathcal{A},\mathcal{D}} | K_{\mathcal{U},\mathcal{D}}) \\
& \phantom{--}+ I(K_{\mathcal{L},\mathcal{S}}K_{\mathcal{U},\mathcal{S}} K_{\mathcal{A},\mathcal{S}}  ;K_{\mathcal{L},\mathcal{S}^c} K_{\mathcal{U},\mathcal{S}^c}  K_{\mathcal{A},\mathcal{S}^c} )]\\
& \stackrel{(c)}= \displaystyle\min_{\mathcal{A} \in \mathbb{A}_t} \min_{\mathcal{U} \in \mathbb{U}_z} I(K_{\mathcal{L},\mathcal{S}};K_{\mathcal{A},\mathcal{D}} | K_{\mathcal{U},\mathcal{D}})\\
& \stackrel{(d)}\leq \displaystyle\min_{\mathcal{A} \in \mathbb{A}_t} \min_{\mathcal{U} \in \mathbb{U}_z} I(K_{\mathcal{L},\mathcal{S}};K_{\mathcal{A},\mathcal{D}} K_{\mathcal{U},\mathcal{S}^c} | K_{\mathcal{U},\mathcal{S}})\\
&  = \displaystyle\min_{\mathcal{A} \in \mathbb{A}_t} \min_{\mathcal{U} \in \mathbb{U}_z} [I(K_{\mathcal{L},\mathcal{S}};K_{\mathcal{A},\mathcal{S}} | K_{\mathcal{U},\mathcal{S}})\\
& \phantom{--}+I(K_{\mathcal{L},\mathcal{S}};K_{\mathcal{A},\mathcal{S}^c} K_{\mathcal{U},\mathcal{S}^c} | K_{\mathcal{A},\mathcal{S}} K_{\mathcal{U},\mathcal{S}})]\\
& \stackrel{(e)}\leq \displaystyle\min_{\mathcal{A} \in \mathbb{A}_t} \min_{\mathcal{U} \in \mathbb{U}_z} [I(K_{\mathcal{L},\mathcal{S}};K_{\mathcal{A},\mathcal{S}} | K_{\mathcal{U},\mathcal{S}})\\
& \phantom{--}+I(K_{\mathcal{L},\mathcal{S}} K_{\mathcal{A},\mathcal{S}} K_{\mathcal{U},\mathcal{S}} ;K_{\mathcal{A},\mathcal{S}^c} K_{\mathcal{U},\mathcal{S}^c})]\\
& \stackrel{(f)}= \displaystyle\min_{\mathcal{A} \in \mathbb{A}_t} \min_{\mathcal{U} \in \mathbb{U}_z} I(K_{\mathcal{L},\mathcal{S}};K_{\mathcal{A},\mathcal{S}} | K_{\mathcal{U},\mathcal{S}})\\
& \stackrel{(g)}= \displaystyle\min_{\mathcal{A} \in \mathbb{A}_t} \min_{\mathcal{U} \in \mathbb{U}_z}  H(K_{\mathcal{A},\mathcal{S}} | K_{\mathcal{U},\mathcal{S}}) \\
& = \displaystyle\min_{\mathcal{A} \in \mathbb{A}_t} \min_{\mathcal{U} \in \mathbb{U}_z}  H(K_{\mathcal{A}\backslash \mathcal{U},\mathcal{S}} ) \\
& \stackrel{(h)}= |\mathcal{S}|(t-z) , \numberthis \label{eqconvth}
\end{align*}
where $(a)$ holds by definition of $Y_{\mathcal{S}}^n$, $Y_{\mathcal{S}^c}^n$, $X_{\mathcal{A}}^n$, and $X_{\mathcal{U}}^n$, $(b)$ holds by the chain rule, $(c)$ holds by independence between $(K_{\mathcal{L},\mathcal{S}},K_{\mathcal{U},\mathcal{S}} ,K_{\mathcal{A},\mathcal{S}}  )$ and $(K_{\mathcal{L},\mathcal{S}^c} ,K_{\mathcal{U},\mathcal{S}^c} , K_{\mathcal{A},\mathcal{S}^c})$, $(d)$ and $(e)$ hold by the chain rule, $(f)$ holds by independence between $(K_{\mathcal{L},\mathcal{S}} ,K_{\mathcal{A},\mathcal{S}}, K_{\mathcal{U},\mathcal{S}})$ and $(K_{\mathcal{A},\mathcal{S}^c} ,K_{\mathcal{U},\mathcal{S}^c})$, $(g)$~holds because $K_{\mathcal{L},\mathcal{S}}$ contains $K_{\mathcal{A},\mathcal{S}}$, $(h)$ holds because the minimum is achieved for a choice of $\mathcal{A}$ and $\mathcal{U}$ that minimizes the cardinality of $\mathcal{A}\backslash \mathcal{U}$, which happens when $\mathcal{U} \subseteq \mathcal{A}$, $|\mathcal{A}|$ is as small as possible, i.e., $|\mathcal{A}|=t$, and  $|\mathcal{U}|$ is as large as possible, i.e., $|\mathcal{U}|=z$. Hence, \eqref{eqconvth} and Theorem~\ref{th2ga} proves the converse of Theorem \ref{thm_thr}.

\section{Extension to chosen secrets} \label{secextension}

Note that, similar to a secret-key generation problem, the secrets in the problem statement in Section \ref{sec:pre} are random. In this section, we prove that if, instead the secrets are chosen by the sub-dealers, then our results remain unchanged. We first formalize the problem statement for chosen secrets in Section~\ref{secchosens}. Then, in Section \ref{secchosensp}, we show how the results of Section~\ref{sec:res} for random secrets extend to the setting of Section \ref{secchosens}.

\subsection{Problem statement} \label{secchosens}

We modify Definitions \ref{definition_modelg} and \ref{def} of Section \ref{sec:pre} as follows. Additionally, Figure \ref{fig:secreta} of Section~\ref{sec:pre} now becomes Figure \ref{figext}.

\begin{defn} \label{definition_modelc}
For $d \in \mathcal{D}$, define the alphabet  $\mathcal{S}_d \triangleq \llbracket 1 ,2^{nR_d} \rrbracket$ and  $\mathcal{S}_{\mathcal{D}}\triangleq \bigtimes_{d\in \mathcal{D}} \mathcal{S}_d$. A $( (2^{nR_d})_{d \in \mathcal{D}},\mathbb{A},\mathbb{U},n)$ secret-sharing strategy consists of:
\begin{itemize}
\item A monotone access structure $\mathbb{A}$.
\item $D$ sub-dealers indexed by the set $\mathcal{D}$.
\item $D$ independent secrets $(S_d)_{d \in \mathcal{D}} \in \mathcal{S}_{\mathcal{D}}$, where $S_d$, $d\in\mathcal{D}$, is uniformly distributed over $\mathcal{S}_{d}$ and only known at Sub-dealer $d\in\mathcal{D}$. Moreover, the secrets are assumed to be independent from the source observations. 
\item $L$ participants indexed by the set $\mathcal{L}$.
\item $D$ encoding functions $(f_d)_{d\in\mathcal{D}}$, where $f_d:\mathcal{Y}^n_d \times \mathcal{S}_d \to  \mathcal{M}_d$, $d\in\mathcal{D}$, with $\mathcal{M}_d$ an arbitrary finite alphabet.
\item $|\mathbb{A}|\times D$ decoding functions $(h_{\mathcal{A},d})_{\mathcal{A}\in\mathbb{A}, d\in\mathcal{D}}$, where $h_{\mathcal{A},d}: \mathcal{X}_{\mathcal{A}}^n  \times \mathcal{M}_{\mathcal{D}} \to \mathcal{S}_{d}$, $\mathcal{A} \in \mathbb{A}$ with  $\mathcal{X}_{\mathcal{A}}^n \triangleq \bigtimes_{a\in \mathcal{A}} \mathcal{X}_a^n$ and $\mathcal{M}_{\mathcal{D}} \triangleq \bigtimes_{d\in \mathcal{D}} \mathcal{M}_d$.
\end{itemize}
and operates as follows:
\begin{itemize}
\item Sub-dealer $d \in \mathcal{D}$ observes $Y_d^{n}$.
\item Participant $l\in \mathcal{L}$ observes $X_l^{n}$.
\item Sub-dealer $d \in \mathcal{D}$ sends over a noiseless public authenticated channel the public communication $M_d \triangleq f_d(Y_d^{n},S_d)$ to the participants. We write the global communication of all the sub-dealers as $M_{\mathcal{D}} \triangleq (M_d)_{d\in\mathcal{D}}$. 
 \item Any subset of participants $\mathcal{A} \in \mathbb{A}$ can compute for $d\in\mathcal{D}$, $\widehat{S}_d(\mathcal{A}) \triangleq h_{\mathcal{A},d} (X^n_{\mathcal{A}},M_{\mathcal{D}})$, and thus form
$\widehat{{S}}_{\mathcal{D}} (\mathcal{A})\triangleq(\widehat{S}_d(\mathcal{A}))_{d\in\mathcal{D}}$, an estimate of ${S}_{\mathcal{D}} \triangleq ({S}_d)_{d\in\mathcal{D}}$.
\end{itemize}
\end{defn}

\begin{defn} \label{defc}
A secret rate-tuple $(R_d)_{d\in \mathcal{D}}$ is achievable if there exists a sequence of $((2^{nR_d})_{d \in \mathcal{D}}$,$\mathbb{A}$, $\mathbb{U},n)$ secret-sharing strategies such that
\begin{align}
\lim_{n \to \infty}	\displaystyle\max_{ \mathcal{A} \in \mathbb{A}}    \mathbb{P}\left[\widehat{{S}}_{\mathcal{D}} (\mathcal{A}) \neq {S}_{\mathcal{D}} \right] &= 0 \text{ (Reliability),} \label{eqrelc} \\
\lim_{n \to \infty}	\displaystyle\max_{ \mathcal{U} \in \mathbb{U}}    I\left({S}_{\mathcal{D}} ; M_{\mathcal{D}}, X^n_{\mathcal{U}}  \right) &= 0 \text{ (Strong Security)}\label{eqSecac}.  
\end{align}
Let $\mathcal{C}^{\textup{(chosen)}}(\mathbb{A})$ denote the set of all achievable secret rate-tuples. When $D=1$,  ${C}^{\textup{(chosen)}}(\mathbb{A})$ denotes the supremum   of all achievable secret rates and is called the secret capacity. 
\end{defn}

\begin{figure}[t!]
        \centering
    \includegraphics[width=5.6cm]{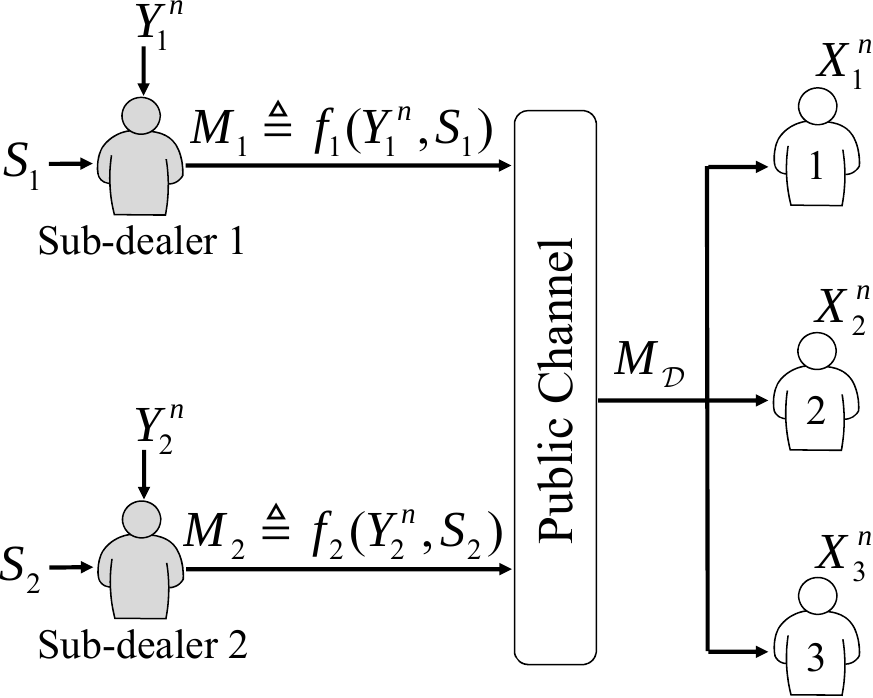}
        \caption{Secret sharing with $D=2$ sub-dealers, $L=3$ users. Formation and distribution of shares.} \label{figext}
\end{figure}
\subsection{Results} \label{secchosensp}

\begin{thm}
Fix $L,D \in \mathbb{N}^*$.
\begin{itemize}
\item For an arbitrary access structure $\mathbb{A} $, 
$$\mathcal{R}^{(\textup{in})}(\mathbb{A})\subseteq \mathcal{C}^{\textup{(chosen)}}(\mathbb{A}) \subseteq \mathcal{R}^{(\textup{out})}(\mathbb{A}),$$
where $\mathcal{R}^{(\textup{in})}(\mathbb{A})$ and $\mathcal{R}^{(\textup{out})}(\mathbb{A})$ are defined in Theorems \ref{th1ga} and \ref{th2ga}.
\item For the all-or-nothing access structure $\mathbb{A}^{\star} \triangleq \{ \mathcal{L}\} $, 
$$\mathcal{R}_1^{(\textup{in})} \subseteq \mathcal{C}^{\textup{(chosen)}}(\mathbb{A}^{\star}) \subseteq \mathcal{R}^{(\textup{out})}(\mathbb{A}^{\star}),$$
where $\mathcal{R}_1^{(\textup{in})}$ and $\mathcal{R}^{(\textup{out})}(\mathbb{A}^{\star})$ are defined in Theorems \ref{prop1} and \ref{th3}. And when $D=2$, if $\displaystyle\min_{d \in \{1,2\}}\displaystyle\min_{ \mathcal{T} \subsetneq \mathcal{L}} I(Y_{d};X_{\mathcal{L}}|X_{\mathcal{T}}) >0$, then we also have
$$\mathcal{R}_1^{(\textup{in})} \subseteq  \mathcal{R}_2^{(\textup{in})} \subseteq \mathcal{C}^{\textup{(chosen)}}(\mathbb{A}^{\star}) \subseteq \mathcal{R}^{(\textup{out})}(\mathbb{A}^{\star}),$$
where $\mathcal{R}_2^{(\textup{in})}$ is defined in Theorem \ref{th2}.
\end{itemize}
\end{thm}

\begin{proof}
The converse proof is obtained by modifying Equation~\eqref{eqreconv1} in Section \ref{secconvproof2} as follows.  For any $\mathcal{T} \subseteq \mathcal{D}$, $\mathcal{A} \in \mathbb{A}$,  $\mathcal{U} \in \mathbb{U}$, we have 
\begin{align*}
nR_{\mathcal{T}} 
& = \log |\mathcal{S}_{\mathcal{T}} |\\
& \stackrel{(a)} = H(S_{\mathcal{T}}) \\
& \stackrel{(b)} \leq H(S_{\mathcal{T}} |M_{\mathcal{D}} X_{\mathcal{U}}^n) + o(n)\\
& \stackrel{(c)} \leq I (S_{\mathcal{T}}; \widehat{S}_{\mathcal{D}}(\mathcal{A}) | M_{\mathcal{D}} X_{\mathcal{U}}^n) + o(n)\\
& \stackrel{(d)}\leq I (S_{\mathcal{T}}; X_{\mathcal{A}}^nM_{\mathcal{D}} | M_{\mathcal{D}} X_{\mathcal{U}}^n) + o(n)\\
& = I (S_{\mathcal{T}}; X_{\mathcal{A}}^n | M_{\mathcal{D}} X_{\mathcal{U}}^n) + o(n)\\
& \leq  I (S_{\mathcal{T}} M_{\mathcal{T}}; X_{\mathcal{A}}^n M_{\mathcal{T}^c}|  X_{\mathcal{U}}^n) + o(n)\\
&  \stackrel{(e)} \leq  I (S_{\mathcal{T}}Y^n_{\mathcal{T}} ; X_{\mathcal{A}}^n S_{\mathcal{T}^c}Y^n_{\mathcal{T}^c}|  X_{\mathcal{U}}^n) + o(n) \\
& \stackrel{(f)} = I (Y^n_{\mathcal{T}} ; X_{\mathcal{A}}^n S_{\mathcal{T}^c}Y^n_{\mathcal{T}^c}|  X_{\mathcal{U}}^n) + o(n) \\
&  \stackrel{(g)} =  I (Y^n_{\mathcal{T}} ; X_{\mathcal{A}}^n Y^n_{\mathcal{T}^c}|  X_{\mathcal{U}}^n) + o(n) \\
& = n I (Y_{\mathcal{T}}; X_{\mathcal{A}} Y_{\mathcal{T}^c}|  X_{\mathcal{U}}) + o(n), 
\end{align*}
where $(a)$ holds by the uniformity of the secrets, $(b)$ holds by \eqref{eqSecac}, $(c)$ holds by Fano's inequality and \eqref{eqrelc}, $(d)$ holds because $\widehat{S}_{\mathcal{D}}(\mathcal{A})$ is a function of $(X_{\mathcal{A}}^n,M_{\mathcal{D}})$, $(e)$ holds because $M_{\mathcal{T}}$ is a function of $(Y^n_{\mathcal{T}},S_{\mathcal{T}})$ for any $\mathcal{T} \subseteq \mathcal{D}$, $(f)$ holds by the chain rule and because $I (S_{\mathcal{T}} ; X_{\mathcal{A}}^n S_{\mathcal{T}^c}Y^n_{\mathcal{T}^c}|  X_{\mathcal{U}}^n Y^n_{\mathcal{T}})=0$, $(g)$ holds by the chain rule and because $I (Y^n_{\mathcal{T}} ;   S_{\mathcal{T}^c} | X_{\mathcal{A}}^n  Y^n_{\mathcal{T}^c} X_{\mathcal{U}}^n) = 0$.

The achievability proof consists in doing a one-time pad on top of the achievability proofs from Section \ref{sec:achievability}. More specifically, suppose that one has generated the secrets $(\tilde{S}_d)_{d\in\mathcal{D}}$ with rate $(R_d)_{d\in\mathcal{D}}$ with the achievability schemes of Section \ref{sec:achievability} such that
\begin{align}
\lim_{n \to \infty}	\displaystyle\max_{ \mathcal{U} \in \mathbb{U}}    I\left(\tilde{S}_{\mathcal{D}} ; \tilde{M}_{\mathcal{D}}, X^n_{\mathcal{U}}  \right)& = 0 ,\label{eqSecaprime} \\
\lim_{n \to \infty} \log |\mathcal{S}_{\mathcal{D}} | - H(\tilde{S}_{\mathcal{D}}) &= 0. \label{equprime}
\end{align}
 Then, Sub-dealer $d\in\mathcal{D}$ transmits over the public channel $\breve{M}_d \triangleq \tilde{S}_d\oplus S_d$  and the security requirement is satisfied because, for any $\mathcal{U} \in \mathbb{U}$ and by defining $\breve{M}_{\mathcal{D}} \triangleq (\breve{M}_d)_{d\in\mathcal{D}}$, we have
\begin{align*}
&I({S}_{\mathcal{D}} ; \tilde{M}_{\mathcal{D}} ,\breve{M}_{\mathcal{D}}, X^n_{\mathcal{U}}  )\\
& = I({S}_{\mathcal{D}} ; \breve{M}_{\mathcal{D}}  ) + I({S}_{\mathcal{D}} ; \tilde{M}_{\mathcal{D}}, X^n_{\mathcal{U}} | \breve{M}_{\mathcal{D}} ) \\
& \stackrel{(a)}\leq  \log |\mathcal{S}_{\mathcal{D}} | - H( \tilde{S}_{\mathcal{D}}) + I({S}_{\mathcal{D}} ; \tilde{M}_{\mathcal{D}}, X^n_{\mathcal{U}} | \breve{M}_{\mathcal{D}} ) \\
& \leq  \log |\mathcal{S}_{\mathcal{D}} | - H( \tilde{S}_{\mathcal{D}}) + I({S}_{\mathcal{D}} \breve{M}_{\mathcal{D}}; \tilde{M}_{\mathcal{D}}, X^n_{\mathcal{U}}   ) \\
& =  \log |\mathcal{S}_{\mathcal{D}} | - H( \tilde{S}_{\mathcal{D}}) + I({S}_{\mathcal{D}} ,\tilde{S}_{\mathcal{D}}; \tilde{M}_{\mathcal{D}}, X^n_{\mathcal{U}}   ) \\
& \stackrel{(b)} =  \log |\mathcal{S}_{\mathcal{D}} | - H( \tilde{S}_{\mathcal{D}}) + I(\tilde{S}_{\mathcal{D}} ; \tilde{M}_{\mathcal{D}}, X^n_{\mathcal{U}}   ) \\
& \xrightarrow{n\to\infty} 0,
\end{align*}
where $(a)$ holds because $H( \breve{M}_{\mathcal{D}}) \leq \log |\mathcal{S}_{\mathcal{D}} |$ and $H( \breve{M}_{\mathcal{D}}| {S}_{\mathcal{D}} ) = H( \tilde{S}_{\mathcal{D}}| {S}_{\mathcal{D}} ) = H( \tilde{S}_{\mathcal{D}})$, $(b)$ holds because $I({S}_{\mathcal{D}} ; \tilde{M}_{\mathcal{D}}, X^n_{\mathcal{U}}   | \tilde{S}_{\mathcal{D}}) \leq I({S}_{\mathcal{D}} ; \tilde{M}_{\mathcal{D}}, X^n_{\mathcal{U}}   , \tilde{S}_{\mathcal{D}}) =0$, and the limit holds by \eqref{eqSecaprime} and~\eqref{equprime}.
\end{proof}

\section{Concluding remarks} \label{sec:concl}

We defined a secret-sharing model between multiple participants and a dealer made of multiple sub-dealers, when each party observes the realizations of correlated random variables and each sub-dealer can communicate with the participants over a   public channel.
Our model extends Shamir's secret-sharing model in three directions. First, it allows a joint design of the creation of the shares and their distribution to the participants. This contrasts with Shamir's model which considers the creation of the shares and their distribution independently. Second, unlike Shamir's model, which assumes that the participants and the dealer  have access to information-theoretically secure channels,  our model   rely on more general resources, namely, a public channel and correlated randomness in the form of realizations of independently and identically distributed random variables.  Third, motivated by a wireless network setting, we explored the problem of secret sharing in a distributed setting where the dealer is an entity made of multiple sub-dealers. 
 
We derived inner and outer regions for the achievable secret rates that the dealer can obtain via its sub-dealers. To this end, we developed two new achievability techniques, a first one to successively handle reliability and security constraints in a distributed setting, and a second one to reduce a distributed setting to multiple single-user settings. We obtained  capacity results in the case of threshold access structures when the correlated randomness corresponds to pairwise secret keys shared between each sub-dealer and each participant, and in the case of a single-dealer setting for the all-or-nothing access structure and arbitrarily correlated randomness.   We highlight that in all our achievabilitiy results the length of each share  always scales linearly with the size of the secret for any access structures.

Note that constructive and low-complexity coding schemes for secret-sharing source model  and channel model have been proposed in the case of a single dealer in \cite{sultana2021} and \cite{chou2020,chou2018explicit,sultana2023}, respectively. While the question of providing  constructive and low-complexity coding schemes for distributed-dealer settings is not addressed in this paper and represents an open challenge, we expect that our proof technique that separates the reliability and security constraints for the all-or-nothing access structure can lead to such a constructive and low-complexity coding scheme for an arbitrary number of sub-dealers.

\appendices

\section{Proof of Lemma \ref{lemappth1a}} \label{appth1a}

By  \cite{kramerbook}, we have
\vspace*{-1em}
\begin{align*}
\mathbb{P} [ \mathcal{E}_0]  \leq 2 |\mathcal{X}_{\mathcal{A}}| |\mathcal{Y}_{\mathcal{D}}| e^{-n \epsilon^2 \mu_{X_{\mathcal{A}}Y_{\mathcal{D}}}} \leq 2 |\mathcal{X}_{\mathcal{L}}| |\mathcal{Y}_{\mathcal{D}}| e^{-n \epsilon^2 \mu_{X_{\mathcal{L}}Y_{\mathcal{D}}}}. 
\end{align*}

Then, for any $\mathcal{S} \subseteq \mathcal{D}$, $\mathcal{S} \neq  \emptyset   $, we have
\begin{align}
\mathbb{P} [\mathcal{E}_{\mathcal{S}} ]   \nonumber 
& = {\sum_{x^n_{\mathcal{A}},y^n_{\mathcal{D}}} } p(x^n_{\mathcal{A}},y^n_{\mathcal{D}}) \mathbb{P} \left[  \forall i \in \mathcal{S}, \exists \hat{y}_{i}^n \neq y_{i}^n,\right.\\ \nonumber
& \left. \phantom{--}  g_i (\hat{y}_i^n) = g_i (y_i^n) \text{ and } ( x^n_{\mathcal{A}},\hat{y}_{\mathcal{S}}^n,y_{  \mathcal{S}^c}^n) \in \mathcal{T}_{\epsilon }^n (X_{\mathcal{A}}  Y_{\mathcal{D}})   \right]\\ \nonumber
& \stackrel{(a)} \leq  {\sum_{x^n_{\mathcal{A}},y^n_{\mathcal{D}}} } p(x^n_{\mathcal{A}},y^n_{\mathcal{D}}) \\
& \phantom{--}\times \sum_{ \stackrel{\hat{y}_{\mathcal{S}}^n \in \mathcal{T}_{\epsilon }^n (X_{\mathcal{A}}  Y_{\mathcal{D}}|x^n_{\mathcal{A}} y_{ \mathcal{S}^c}^n)}{\hat{y}_{\mathcal{S}}^n \neq y_{\mathcal{S}}^n}} \mathbb{P} \left[  \forall i \in \mathcal{S},  g_i (\hat{y}_i^n) = g_i (y_i^n) 
 \right] \nonumber \\ \nonumber
 &\stackrel{(b)} =  {\sum_{x^n_{\mathcal{A}},y^n_{\mathcal{D}}} } p(x^n_{\mathcal{A}},y^n_{\mathcal{D}}) \sum_{ \stackrel{\hat{y}_{\mathcal{S}}^n \in \mathcal{T}_{\epsilon }^n (X_{\mathcal{A}}  Y_{\mathcal{D}}|x^n_{\mathcal{A}} y_{ \mathcal{S}^c}^n)}{ \hat{y}_{\mathcal{S}}^n \neq y_{\mathcal{S}}^n}} 2^{-nR'_{\mathcal{S}}}\\ \nonumber
& \leq {\sum_{x^n_{\mathcal{A}},y^n_{\mathcal{D}}} } p(x^n_{\mathcal{A}},y^n_{\mathcal{D}}) |\mathcal{T}_{\epsilon }^n (X_{\mathcal{A}}  Y_{\mathcal{D}}|x^n_{\mathcal{A}} y_{ \mathcal{S}^c}^n)| 2^{-nR'_{\mathcal{S}}}\\ \nonumber
& \stackrel{(c)}\leq {\sum_{x^n_{\mathcal{A}},y^n_{\mathcal{D}}} } p(x^n_{\mathcal{A}},y^n_{\mathcal{D}}) 2^{  n(1 + \epsilon ) {H}(Y_{\mathcal{S}}|Y_{ \mathcal{S}^c } X_{\mathcal{A}})} 2^{-nR'_{\mathcal{S}}} \\ \nonumber
& = 2^{  n(1 + \epsilon )  {H}(Y_{\mathcal{S}}|Y_{ \mathcal{S}^c } X_{\mathcal{A}}) -nR'_{\mathcal{S}} } \\
& \leq 2^{  n(1 + \epsilon ) \max_{\mathcal{A} \in \mathbb{A}} {H}(Y_{\mathcal{S}}|Y_{ \mathcal{S}^c } X_{\mathcal{A}}) -nR'_{\mathcal{S}} }, \nonumber
\end{align}
where in $(a)$ $\hat{y}_{\mathcal{S}}^n \neq y_{\mathcal{S}}^n$ means $\hat{y}_{i}^n \neq y_{i}^n, \forall i \in \mathcal{S}$, $(b)$ holds by independence of the random binning choices across the sub-dealers, $(c)$ holds by~\cite{kramerbook}.

\section{Proof of Lemma \ref{lemappth1b}} \label{appth1b}
We first bound the second term in (\ref{twotermsa}) as follows
\begin{align*}
&\mathbb{E} \left[ \sum_{m_{\mathcal{D}},s_{\mathcal{D}},x_{\mathcal{U}}^n} \left| p^{(2)}_{M_{\mathcal{D}} S_{\mathcal{D}} X_{\mathcal{U}}^n }(m_{\mathcal{D}},s_{\mathcal{D}},x_{\mathcal{U}}^n) \right. \right.\\
& \phantom{--------}\left. \left. -  \mathbb{E} \left[ p^{(2)}_{M_{\mathcal{D}} S_{\mathcal{D}} X_{\mathcal{U}}^n }(m_{\mathcal{D}},s_{\mathcal{D}},x_{\mathcal{U}}^n) \right] \right| \right] \nonumber  \displaybreak[0]\\
& \stackrel{(a)} \leq  \sum_{m_{\mathcal{D}},s_{\mathcal{D}},x_{\mathcal{U}}^n}  2  \mathbb{E} \left[ p^{(2)}_{M_{\mathcal{D}} S_{\mathcal{D}} X_{\mathcal{U}}^n }(m_{\mathcal{D}},s_{\mathcal{D}},x_{\mathcal{U}}^n)    \right] \\ \nonumber
& = 2  \sum_{m_{\mathcal{D}},s_{\mathcal{D}},x_{\mathcal{U}}^n}    \sum_{y_{\mathcal D}^n \notin \mathcal{T}_{\epsilon}^n({Y}_{\mathcal{D}}  {X}_{\mathcal{U}}|x_{\mathcal{U}}^n)} p(y_{\mathcal{D}}^n,x_{\mathcal{U}}^n) 2^{-n(R_{\mathcal{D}}+R_{\mathcal{D}}')} \\ \nonumber 
& = 2  \mathbb{P}\left[ ( Y_{\mathcal{D}}^n, X_{\mathcal{U}}^n ) \notin \mathcal{T}_{\epsilon}^n({Y}_{\mathcal{D}} X_{\mathcal{U}}) \right]\\
& \stackrel{(b)} \leq 2  |\mathcal{Y}_{\mathcal{D}}| |\mathcal{X}_{\mathcal{U}}| e^{-n\epsilon^2 \mu_{Y_{\mathcal{D}} X_{\mathcal{U}}}} \\
& \leq 2  |\mathcal{Y}_{\mathcal{D}}| |\mathcal{X}_{\mathcal{L}}| e^{-n\epsilon^2 \mu_{Y_{\mathcal{D}} X_{\mathcal{L}}}} ,
\end{align*}
where $(a)$ holds by the triangle inequality and $(b)$ holds 
by~\cite{kramerbook}.

We now upper-bound the first term in~(\ref{twotermsa}) using Jensen's inequality by
\begin{align}
  \sum_{m_{\mathcal{D}},s_{\mathcal{D}},x_{\mathcal{U}}^n}  \sqrt{ \textup{Var}  \left(p^{(1)}_{M_{\mathcal{D}} S_{\mathcal{D}} X_{\mathcal{U}}^n }(m_{\mathcal{D}},s_{\mathcal{D}},x_{\mathcal{U}}^n) \right)}, \label{eqjensena}
\end{align}
and upper-bound the variance in \eqref{eqjensena} as follows
\begin{align*}
&\textup{Var}  \left( p^{(1)}_{M_{\mathcal{D}} S_{\mathcal{D}} X_{\mathcal{U}}^n }(m_{\mathcal{D}},s_{\mathcal{D}},x_{\mathcal{U}}^n) \right) \nonumber \\ \nonumber
& = \mathbb{E} \left[  \left( p^{(1)}_{M_{\mathcal{D}} S_{\mathcal{D}} X_{\mathcal{U}}^n }(m_{\mathcal{D}},s_{\mathcal{D}},x_{\mathcal{U}}^n) \right)^2 \right] \\
& \phantom{-} - \mathbb{E} \left[  \left( p^{(1)}_{M_{\mathcal{D}} S_{\mathcal{D}} X_{\mathcal{U}}^n }(m_{\mathcal{D}},s_{\mathcal{D}},x_{\mathcal{U}}^n) \right) \right]^2 \nonumber \\ \nonumber
& \stackrel{(a)} \leq  {\sum_{\mathcal{S} \subsetneq \mathcal{D}} \sum_{ y_{\mathcal{D}}^{n}} \!\!\! \!\!\sum_{ \substack{y_{\mathcal D}^{n\prime} \\ \text{s.t.} y_{\mathcal S}^{n\prime} \neq y_{\mathcal S}^{n} \\ y_{\mathcal{S}^c}^{n\prime} = y_{\mathcal {S}^c}^{n}}}} \mathds{1} \{  y_{\mathcal D}^n \in \mathcal{T}_{\epsilon}^n({Y}_{\mathcal{D}}  {X}_{\mathcal{U}}|x_{\mathcal{U}}^n) \}  \\ \nonumber
&  \phantom{====} \times \mathds{1} \{ y_{\mathcal D}^{n\prime} \in \mathcal{T}_{\epsilon}^n({Y}_{\mathcal{D}}  {X}_{\mathcal{U}}|x_{\mathcal{U}}^n) \}  \times p(y_{\mathcal{D}}^n,x_{\mathcal{U}}^n) p(y_{\mathcal{D}}^{n\prime},x_{\mathcal{U}}^n)    \\ \nonumber
&  \phantom{====} \times  \! \mathbb{E} \!   \left[  \textstyle\prod_{i\in \mathcal{D}} \mathds{1} \{g_i(y_i^n) = m_{i} \}  \mathds{1} \{h_i(y_i^{n}) = s_{i} \}   \right.\\ \nonumber
&  \phantom{=======} \left. \times \textstyle\prod_{i\in \mathcal{D}} \mathds{1} \{g_i(y_i^{n\prime}) = m_{i} \}   \mathds{1} \{h_i(y_i^{n\prime}) = s_{i} \}  \!\right]   \\ \nonumber
& =  \sum_{\mathcal{S} \subsetneq \mathcal{D}} \sum_{ y_{\mathcal{D}}^{n}} \!\!\! \!\! \sum_{ \substack{y_{\mathcal D}^{n\prime} \\ \text{s.t.} y_{\mathcal S}^{n\prime} \neq y_{\mathcal S}^{n} \\ y_{\mathcal{S}^c}^{n\prime} = y_{\mathcal {S}^c}^{n}}} \mathds{1} \{  y_{\mathcal D}^n \in \mathcal{T}_{\epsilon}^n({Y}_{\mathcal{D}}  {X}_{\mathcal{U}}|x_{\mathcal{U}}^n) \} \\ \nonumber
&  \phantom{====} \times \mathds{1} \{ y_{\mathcal D}^{n\prime} \in \mathcal{T}_{\epsilon}^n({Y}_{\mathcal{D}}  {X}_{\mathcal{U}}|x_{\mathcal{U}}^n) \}      \\ \nonumber
&     \phantom{====}  \times p^2(x_{\mathcal{U}}^n) p^2(y_{\mathcal{S}^c}^n|x_{\mathcal{U}}^n) p(y_{\mathcal{S}}^n|x_{\mathcal{U}}^ny_{\mathcal{S}^c}^n)  p(y_{\mathcal{S}}^{n\prime}|x_{\mathcal{U}}^n y_{\mathcal{S}^c}^{n }) \\ \nonumber
&     \phantom{====}  \times   2^{-n(2R_{\mathcal{S}}+2R'_{\mathcal{S}}+R_{\mathcal{S}^c}+R_{\mathcal{S}^c}')} \\ \nonumber
& \stackrel{(b)}\leq   \sum_{\mathcal{S} \subsetneq \mathcal{D}} p^2(x_{\mathcal{U}}^n) 2^{-n(1-\epsilon)H(Y_{\mathcal{S}^c}|X_{\mathcal{U}})}  2^{-n(2R_{\mathcal{S}}+2R'_{\mathcal{S}}+R_{\mathcal{S}^c}+R_{\mathcal{S}^c}')}    ,\numberthis
  \label{eqjensensuitea}  
\end{align*}
where in $(a)$ the notation $y_{\mathcal S}^{n\prime} \neq y_{\mathcal S}^{n}$ means $y_{i}^{n\prime} \neq y_{i}^{n}, \forall i \in \mathcal S$, and $(b)$ holds by \cite{kramerbook}.
Hence, by \eqref{eqjensena} and \eqref{eqjensensuitea}, we upper-bound the first term in (\ref{twotermsa}) by
\begin{align*}
 &   \sum_{m_{\mathcal{D}},s_{\mathcal{D}},x_{\mathcal{U}}^n}  \sqrt{ \textup{Var}  \left(p^{(1)}_{M_{\mathcal{D}} S_{\mathcal{D}} X_{\mathcal{U}}^n }(m_{\mathcal{D}},s_{\mathcal{D}},x_{\mathcal{U}}^n) \right)} \\
  &\leq \smash{\sum_{\mathcal{S} \subsetneq \mathcal{D}}   2^{n (R_{\mathcal{D}}+R'_{\mathcal{D}})}  2^{-\tfrac{n}{2}(1-\epsilon)H(Y_{\mathcal{S}^c}|X_{\mathcal{U}})}} \\
  & \phantom{----}\times 2^{-\tfrac{n}{2}(2R_{\mathcal{S}}+2R'_{\mathcal{S}}+R_{\mathcal{S}^c}+R_{\mathcal{S}^c}')} \\
    &= \sum_{\mathcal{S} \subsetneq \mathcal{D}}     2^{-\tfrac{n}{2}(1-\epsilon)H(Y_{\mathcal{S}^c}|X_{\mathcal{U}})}  2^{\tfrac{n}{2}(R_{\mathcal{S}^c}+R_{\mathcal{S}^c}')} \\
        &= \sum_{\mathcal{S} \subseteq \mathcal{D}, \mathcal{S}\neq \emptyset}     2^{-\tfrac{n}{2}(1-\epsilon)H(Y_{\mathcal{S}}|X_{\mathcal{U}})}  2^{\tfrac{n}{2}(R_{\mathcal{S}}+R_{\mathcal{S}}')} \\
                & \leq \sum_{\mathcal{S} \subseteq \mathcal{D}, \mathcal{S}\neq \emptyset}     2^{-\tfrac{n}{2}(1-\epsilon)\min_{\mathcal{U} \in \mathbb{U}}H(Y_{\mathcal{S}}|X_{\mathcal{U}})}  2^{\tfrac{n}{2}(R_{\mathcal{S}}+R_{\mathcal{S}}')}. 
\end{align*}

\section{Proof of Lemma \ref{lemloh}} \label{app_loh}
For $p_{X_{\mathcal{L}} X_{\mathcal{L}}'F_{\mathcal{L}}F_{\mathcal{L}}'} = p^2_{X_{\mathcal{L}}} p^2_{F_{\mathcal{L}}}$, we have
\begin{align*}
& \sum_{m_{\mathcal{L}},f_{\mathcal{L}}}  {p}^2_{F_{\mathcal{L}}( X_{\mathcal{L}})   F_{\mathcal{L}} }(m_{\mathcal{L}} ,f_{\mathcal{L}} )\\
& = \sum_{ f_{\mathcal{L}}} {p}^2_{F_{\mathcal{L}}}(f_{\mathcal{L}}) \sum_{m_{\mathcal{L}} } {p}^2_{F_{\mathcal{L}}( X_{\mathcal{L}})   F_{\mathcal{L}} }(m_{\mathcal{L}} |f_{\mathcal{L}} )\\
& = \sum_{ f_{\mathcal{L}}} {p}^2_{F_{\mathcal{L}}}(f_{\mathcal{L}})  \sum_{x_{\mathcal{L}},x_{\mathcal{L}}'} p_{X_{\mathcal{L}} X_{\mathcal{L}}'}(x_{\mathcal{L}},x_{\mathcal{L}}') \mathds{1} \{ f_{\mathcal{L}}(x_{\mathcal{L}}) = f_{\mathcal{L}}(x_{\mathcal{L}}') \}\\
& =  s^{-1}_{\mathcal{L}}  \sum_{x_{\mathcal{L}},x_{\mathcal{L}}'} p_{X_{\mathcal{L}} X_{\mathcal{L}}'}(x_{\mathcal{L}},x_{\mathcal{L}}') \mathbb{P} [ F_{\mathcal{L}}(x_{\mathcal{L}}) = F_{\mathcal{L}}(x_{\mathcal{L}}') ]\\
&\stackrel{(a)} = s^{-1}_{\mathcal{L}} \smash{\sum_{\mathcal{S} \subseteq \mathcal{L}}\sum_{x_{\mathcal{L}}}  {\sum_{\substack{x_{\mathcal{L}}' \\ \!\!\!\!\!\!\text{s.t.} x'_{\mathcal{S}} \neq x_{\mathcal{S}} \\  x'_{\mathcal{S}^c} = x_{\mathcal{S}^c} }}} \mathbb{P} [ F_{\mathcal{L}}(x_{\mathcal{L}}) = F_{\mathcal{L}}(x_{\mathcal{L}}')] } \\
& \phantom{-------------}\times p_{X_{\mathcal{L}}}(x_{\mathcal{L}}) p_{X_{\mathcal{L}}}(x'_{\mathcal{L}})\\~\\
& = s^{-1}_{\mathcal{L}} \smash{\sum_{\mathcal{S} \subseteq \mathcal{L}}\sum_{x_{\mathcal{L}}}  \sum_{\substack{x_{\mathcal{L}}' \\ \!\!\!\!\!\!\text{s.t.} x'_{\mathcal{S}} \neq x_{\mathcal{S}} \\  x'_{\mathcal{S}^c} = x_{\mathcal{S}^c} }} \prod_{l\in \mathcal{L}} \mathbb{P} [ F_{l}(x_{l}) = F_{l}(x_{l}')] } \\
& \phantom{-------------}\times p_{X_{\mathcal{L}}}(x_{\mathcal{L}})    p_{X_{\mathcal{L}}}( x_{\mathcal{S}}', x_{\mathcal{S}^c}) \\ ~\\
&\stackrel{(b)} \leq s^{-1}_{\mathcal{L}}  \sum_{\mathcal{S} \subseteq \mathcal{L}}\sum_{x_{\mathcal{L}}} { \sum_{\substack{x_{\mathcal{L}}' \\ \!\!\!\!\!\!\text{s.t.} x'_{\mathcal{S}} \neq x_{\mathcal{S}} \\ x'_{\mathcal{S}^c} = x_{\mathcal{S}^c} }}}  2^{-r_{\mathcal{S}}} p_{X_{\mathcal{L}}}(x_{\mathcal{L}})   p_{X_{\mathcal{L}}}( x_{\mathcal{S}}', x_{\mathcal{S}^c})  \\ 
& \stackrel{(c)} \leq s^{-1}_{\mathcal{L}}  \sum_{\mathcal{S} \subseteq \mathcal{L}}\sum_{x_{\mathcal{L}}}   2^{-r_{\mathcal{S}}}p_{X_{\mathcal{L}}}(x_{\mathcal{L}}) p_{X_{\mathcal{S}^c}}(x_{\mathcal{S}^c})  \\ 
& \stackrel{(d)} \leq s^{-1}_{\mathcal{L}}  \sum_{\mathcal{S} \subseteq \mathcal{L}}\sum_{x_{\mathcal{L}}}   2^{-r_{\mathcal{S}}} p_{X_{\mathcal{L}}}(x_{\mathcal{L}}) 2^{-H_{\infty}(p_{X_{\mathcal{S}^c}})}  \\ 
& \leq s^{-1}_{\mathcal{L}}  \sum_{\mathcal{S} \subseteq \mathcal{L}}   2^{-r_{\mathcal{S}}-H_{\infty}(p_{X_{\mathcal{S}^c}})}, \numberthis \label{eqloh}
\end{align*}
where  in $(a)$ the notation $x'_{\mathcal{S}} \neq x_{\mathcal{S}}$ means $x'_{i} \neq x_{i}, \forall i \in \mathcal S$, $(b)$ holds by the two-universality of the $F_l$'s, $l \in \mathcal{L}$, $(c)$ holds by marginalization over $X'_{\mathcal{S}}$, $(d)$~holds by definition of the min-entropy.

Next, consider  $q_Z$ defined over $\mathcal{Z}$ such that $\textup{supp}(q_Z) \subseteq \textup{supp}(p_Z)$. We have \eqref{eqprooflohl}, 
\begin{figure*}
 	\begin{align*}
      & \mathbb{V}({{p}_{F_{\mathcal{L}}( X_{\mathcal{L}}),   F_{\mathcal{L}},Z}}, p_{U_{\mathcal K}} p_{U_{\mathcal F}} p_Z) \\
      & =  \sum_{m_{\mathcal{L}},f_{\mathcal{L}},z} \left( q_Z^{1/2}(z)\right) \left(q_Z^{-1/2}(z)| {p}_{F_{\mathcal{L}}( X_{\mathcal{L}}),   F_{\mathcal{L}},Z}(m_{\mathcal{L}},f_{\mathcal{L}},z) - p_{U_{\mathcal K}} (m_{\mathcal{L}}) p_{U_{\mathcal F}} (f_{\mathcal{L}}) p_Z (z) | \right)   \\
      & \stackrel{(a)}\leq      \sqrt{ \left(\sum_{m_{\mathcal{L}},f_{\mathcal{L}},z} q_Z^{-1}(z) \left( {p}_{F_{\mathcal{L}}( X_{\mathcal{L}})   F_{\mathcal{L}}, Z}(m_{\mathcal{L}} ,f_{\mathcal{L}} ,z) - p_{U_{\mathcal K}} (m_{\mathcal{L}}) p_{U_{\mathcal F}} (f_{\mathcal{L}}) p_Z (z) \right)^2 \right) \left( \sum_{m_{\mathcal{L}},f_{\mathcal{L}},z}  q_Z(z)\right)} \\
                        & =      \sqrt{s_{\mathcal{L}} 2^{r_{\mathcal{L}}} \sum_{z}  q_Z^{-1}(z)  \sum_{m_{\mathcal{L}},f_{\mathcal{L}}}  \left( {p}_{F_{\mathcal{L}}( X_{\mathcal{L}})   F_{\mathcal{L}}, Z}(m_{\mathcal{L}} ,f_{\mathcal{L}} ,z) - \frac{p_Z (z)}{s_{\mathcal{L}} 2^{r_{\mathcal{L}}}} \right)^2 } \\
       & = \sqrt{ s_{\mathcal{L}} 2^{r_{\mathcal{L}}} \!\sum_{z}  q_Z^{-1}(z)  \!\!\sum_{m_{\mathcal{L}},f_{\mathcal{L}}}\!\! \left( {p}^2_{F_{\mathcal{L}}( X_{\mathcal{L}})   F_{\mathcal{L}}, Z}(m_{\mathcal{L}}, f_{\mathcal{L}} ,z) \!-\!2 \frac{p_Z (z)}{s_{\mathcal{L}} 2^{r_{\mathcal{L}}}}  {p}_{F_{\mathcal{L}}( X_{\mathcal{L}})   F_{\mathcal{L}}, Z}(m_{\mathcal{L}}, f_{\mathcal{L}} ,z) \!+\! \frac{p^2_Z (z)}{s^2_{\mathcal{L}} 2^{2r_{\mathcal{L}}}} \!\right) }  \\
       & = \sqrt{  \sum_{z}  q_Z^{-1}(z) \left( s_{\mathcal{L}} 2^{r_{\mathcal{L}}}\sum_{m_{\mathcal{L}},f_{\mathcal{L}}}  {p}^2_{F_{\mathcal{L}}( X_{\mathcal{L}})   F_{\mathcal{L}}, Z}(m_{\mathcal{L}} ,f_{\mathcal{L}} ,z) -p^2_Z (z) \right)  }  \\
       & = \sqrt{  \sum_{z} p^2_Z (z) q_Z^{-1}(z) \left( s_{\mathcal{L}} 2^{r_{\mathcal{L}}}\sum_{m_{\mathcal{L}},f_{\mathcal{L}}}  {p}^2_{F_{\mathcal{L}}( X_{\mathcal{L}})   F_{\mathcal{L}} | Z}(m_{\mathcal{L}} ,f_{\mathcal{L}} |z) -1 \right)  }   \\   
              & \stackrel{(b)}= \sqrt{  \sum_{z} p^2_Z (z) q_Z^{-1}(z) \left( s_{\mathcal{L}} 2^{r_{\mathcal{L}}}   \sum_{m_{\mathcal{L}},f_{\mathcal{L}}}  {p}^2_{F_{\mathcal{L}}( X^{(z)}_{\mathcal{L}})   F_{\mathcal{L}} }(m_{\mathcal{L}} ,f_{\mathcal{L}} )    -   1 \right)  }   \\  
                    & \stackrel{(c)}\leq \sqrt{  \sum_{z} p^2_Z (z) q_Z^{-1}(z) \left( 2^{r_{\mathcal{L}}}  \sum_{\mathcal{S} \subseteq \mathcal{L}}   2^{-r_{\mathcal{S}}-H_{\infty}(p_{X^{(z)}_{\mathcal{S}^c}})} -   1 \right)  }   \\  
    & = \sqrt{  \sum_{z} p_Z (z) 2^{\log \frac{p_Z (z)}{q_Z (z)}} \left(   \sum_{\mathcal{S} \subseteq \mathcal{L}}   2^{r_{\mathcal{S}^c}-H_{\infty}(p_{X^{(z)}_{\mathcal{S}^c}})} -   1 \right)  }   \\   
   & = \sqrt{  \sum_{z} p_Z (z)   \sum_{\mathcal{S} \subseteq \mathcal{L}, \mathcal{S} \neq \emptyset}   2^{r_{\mathcal{S}}-H_{\infty}(p_{X^{(z)}_{\mathcal{S}}}) +\log \frac{p_Z (z)}{q_Z (z)}}   }   \\
    &\leq \sqrt{  \sum_{z} p_Z (z)    \sum_{\mathcal{S} \subseteq \mathcal{L}, \mathcal{S} \neq \emptyset}   2^{r_{\mathcal{S}}-H_{\infty}(p_{X_{\mathcal{S}}Z} | q_Z) }   }   \\  
        &= \sqrt{  \sum_{\mathcal{S} \subseteq \mathcal{L}, \mathcal{S} \neq \emptyset}   2^{r_{\mathcal{S}}-H_{\infty}(p_{X_{\mathcal{S}}Z} | q_Z) }    } \numberthis \label{eqprooflohl}
    \end{align*}
    \hrulefill                                                                                                                
\end{figure*}
where $(a)$ holds by Cauchy-Schwarz inequality, in $(b)$ we have defined for $z \in \mathcal{Z}$, $X_{\mathcal{L}}^{(z)}$ distributed according to $p_{X_{\mathcal{L}}^{(z)}} = p_{X_{\mathcal{L}}|Z=z}$, $(c)$ holds by \eqref{eqloh}.
 \section{Proof of Lemma \ref{lems1}} \label{App_lemma2}

       \begin{proof}
For any $z^n \in \mathcal{Z}^n$ such that $q_{Z^n}(z^n)>0$, define
\begin{align*}
&\mathcal{A}(z^n) \triangleq \left\{ y^n_{\mathcal{D}} \in \mathcal{Y}^n_{\mathcal{D}} : \right.\\
& \left. \phantom{-}  -\log q_{Y^n_{\mathcal{S}}|Z^n} (y^n_{\mathcal{S}}|z^n)  \geq H(Y^n_{\mathcal{S}}|Z^n) - n \delta_{\mathcal{S}}(n), \forall \mathcal{S} \subseteq \mathcal{D} \right\}\!,
\end{align*}
and for $\mathcal{S} \subseteq \mathcal{D}$,
\begin{align*} 
&\mathcal{A}_{\mathcal{S}}(z^n) \triangleq \left\{ y^n_{\mathcal{S}} \in \mathcal{Y}^n_{\mathcal{S}} : \right.\\
& \left. \phantom{---} -\log q_{Y^n_{\mathcal{S}}|Z^n} (y^n_{\mathcal{S}}|z^n)  \geq H(Y^n_{\mathcal{S}}|Z^n) - n \delta_{\mathcal{S}}(n)  \right\}.
\end{align*}
Define for $(y^n_{\mathcal{D}},z^n) \in \mathcal{Y}^n_{\mathcal{D}} \times \mathcal{Z}^n$,
\begin{align} \label{equnn}
w_{Y^n_{\mathcal{D}}Z^n}(y^n_{\mathcal{D}},z^n) \triangleq \mathds{1} \{ y^n_{\mathcal{D}} \in \mathcal{A}(z^n)  \} q_{Y^n_{\mathcal{D}}Z^n}(y^n_{\mathcal{D}},z^n),
\end{align}
and for $\mathcal{S} \subseteq \mathcal{D}$,
\begin{align} \label{eqmarg}
w_{Y^n_{\mathcal{S}}Z^n}(y^n_{\mathcal{S}},z^n) \triangleq \sum_{y^n_{\mathcal{S}^c} \in \mathcal{Y}^n_{\mathcal{S}^c} } w_{Y^n_{\mathcal{D}}Z^n}(y^n_{\mathcal{D}},z^n).
\end{align}
We first show that $\mathbb{V}(p_{Y^n_{\mathcal{D}}Z^n},w_{Y^n_{\mathcal{D}}Z^n}) \leq \epsilon$. We have
\begin{align*}
& \mathbb{V}(q_{Y^n_{\mathcal{D}}Z^n},w_{Y^n_{\mathcal{D}}Z^n}) \\
& = \sum_{y^n_{\mathcal{D}},z^n} |q_{Y^n_{\mathcal{D}}Z^n}(y^n_{\mathcal{D}},z^n)- w_{Y^n_{\mathcal{D}}Z^n}(y^n_{\mathcal{D}},z^n)| \\
& = \sum_{y^n_{\mathcal{D}},z^n} q_{Y^n_{\mathcal{D}}Z^n}(y^n_{\mathcal{D}},z^n) \mathds{1} \{ y^n_{\mathcal{D}} \notin  \mathcal{A}(z^n) \} \\
& = \mathbb{P} \left[ Y^n_{\mathcal{D}} \notin  \mathcal{A}(Z^n) \right]\\
& \stackrel{(a)}{\leq} \sum_{ \substack{\mathcal{S} \subseteq \mathcal{D} }}\mathbb{P} \left[  Y^n_{\mathcal{S}} \notin  \mathcal{A}_{\mathcal{S}}(Z^n) \right]\\
& \stackrel{(b)}{\leq} \sum_{ \substack{\mathcal{S} \subseteq \mathcal{D} }} 2^{-\frac{n \delta_{\mathcal{S}}^2(n)}{2\log(|\mathcal{Y}_{\mathcal{S}}| +3)^2}}\\
& = \sum_{ \substack{\mathcal{S} \subseteq \mathcal{D} }} 2^{-D}\epsilon\\
& \leq \epsilon,
\end{align*}
where $(a)$ holds by the union bound, $(b)$ holds by \cite[Theorem 3.3.3]{renner2008security}.

Next, for $\mathcal{S} \subseteq \mathcal{D}$, we have
\begin{align*}
& H_{\infty}(w_{Y^n_{\mathcal{S}}Z^n}|q_{Z^n}) \\
&= - \max_{z^n \in \textup{supp}(q_{Z^n})} \max_{y^n_{\mathcal{S}} \in \mathcal{Y}^n_{\mathcal{S}}} \log \frac{w_{Y^n_{\mathcal{S}}Z^n}(y^n_{\mathcal{S}},z^n)}{q_{Z^n}(z^n)} \\
& \stackrel{(a)}{=} - \max_{z^n} \max_{y^n_{\mathcal{S}} } \log \frac{ \displaystyle\sum_{y^n_{\mathcal{S}^c} }  \mathds{1} \{ y^n_{\mathcal{D}} \in \mathcal{A}(z^n)  \} q_{Y^n_{\mathcal{D}}Z^n}(y^n_{\mathcal{D}},z^n)}{q_{Z^n}(z^n)} \\
& \stackrel{(b)}{\geq} - \max_{z^n  } \max_{y^n_{\mathcal{S}} } \log \frac{  \mathds{1} \{ y^n_{\mathcal{S}} \in \mathcal{A}_{\mathcal{S}}(z^n)  \} q_{Y^n_{\mathcal{S}}Z^n}(y^n_{\mathcal{S}},z^n)}{q_{Z^n}(z^n)}\\
& \stackrel{(c)}{\geq} H(Y^n_{\mathcal{S}}|Z^n) - n \delta_{\mathcal{S}}(n) ,
\end{align*}
where the first maximum in $(a)$ and $(b)$ is over $\textup{supp}(q_{Z^n})$, $(a)$ holds by \eqref{equnn} and \eqref{eqmarg}, $(b)$ holds because for any $y^n_{\mathcal{D}} \in \mathcal{Y}^n_{\mathcal{D}}$, $ \mathds{1} \{ y^n_{\mathcal{S}} \in \mathcal{A}_{\mathcal{S}}(z^n)  \} \geq \mathds{1} \{ y^n_{\mathcal{D}} \in \mathcal{A}(z^n)  \}$ and by marginalization over $Y^n_{\mathcal{S}^c}$, $(c)$ holds by definition of $\mathcal{A}_{\mathcal{S}}(z^n)$.
\end{proof}

 \section{Proof of Lemma \ref{lemamp}} \label{Applemamp}
Let $\mathcal{U} \in \mathbb{U}$. By Lemma \ref{lems1}, for any $\epsilon>0$, there exists a subnormalized non-negative function $w_{ Y^{nB}_{\mathcal{D}}  M^B_{\mathcal{D}} X^{nB}_{\mathcal{U}} }$  such that  \begin{align} &\label{eqepsi}\mathbb{V}(p_{ Y^{nB}_{\mathcal{D}}  M^B_{\mathcal{D}} X^{nB}_{\mathcal{U}} },w_{ Y^{nB}_{\mathcal{D}}  M^B_{\mathcal{D}} X^{nB}_{\mathcal{U}} })\leq\epsilon, \\
 &          \forall \mathcal{S}\subseteq \mathcal{D}, H_{\infty}(w_{ Y^{nB}_{\mathcal{S}}  M^B_{\mathcal{D}} X^{nB}_{\mathcal{U}} }|p_{ M^B_{\mathcal{D}} X^{nB}_{\mathcal{U}} }) \nonumber \\
           &\phantom{-----}\geq B H(Y^{n}_{\mathcal{S}}| M_{\mathcal{D}} X^{n}_{\mathcal{U}})-B \delta_{\mathcal{S}}(n,B). \label{eqepsi2}
       \end{align}
     Next, we have
\begin{align*}
&\mathbb{V} ( p_{F_{\mathcal{D}}(Y^{nB}_{\mathcal{D}})F_{\mathcal{D}} M_{\mathcal{D}}^B X^{nB}_{\mathcal{U}} }, p_{U_{\mathcal{D}}} p_{U_{\mathcal{F}}}p_{M^B_{\mathcal{D}}X^{nB}_{\mathcal{U}}  })    \\
&\stackrel{(a)}\leq \mathbb{V} ( p_{F_{\mathcal{D}}(Y^{nB}_{\mathcal{D}})F_{\mathcal{D}} M_{\mathcal{D}}^B X^{nB}_{\mathcal{U}} },w_{F_{\mathcal{D}}(Y^{nB}_{\mathcal{D}})F_{\mathcal{D}} M^B_{\mathcal{D}} X^{nB}_{\mathcal{U}} }) \\
& \phantom{--} +  \mathbb{V} (w_{F_{\mathcal{D}}(Y^{nB}_{\mathcal{D}})F_{\mathcal{D}} M^B_{\mathcal{D}} X^{nB}_{\mathcal{U}} },  p_{U_{\mathcal{D}}} p_{U_{\mathcal{F}}}w_{M^B_{\mathcal{D}}X^{nB}_{\mathcal{U}}  }) \\
& \phantom{--} +  \mathbb{V} ( p_{U_{\mathcal{D}}} p_{U_{\mathcal{F}}}w_{M^B_{\mathcal{D}}X^{nB}_{\mathcal{U}}  }, p_{U_{\mathcal{D}}} p_{U_{\mathcal{F}}}p_{M^B_{\mathcal{D}}X^{nB}_{\mathcal{U}}  })\\
& \stackrel{(b)}\leq \mathbb{V} ( p_{Y^{nB}_{\mathcal{D}}  M_{\mathcal{D}}^B X^{nB}_{\mathcal{U}} },w_{ Y^{nB}_{\mathcal{D}}  M^B_{\mathcal{D}} X^{nB}_{\mathcal{U}} })  \\
& \phantom{--}+  \mathbb{V} (w_{F_{\mathcal{D}}(Y^{nB}_{\mathcal{D}})F_{\mathcal{D}} M^B_{\mathcal{D}} X^{nB}_{\mathcal{U}} },  p_{U_{\mathcal{D}}} p_{U_{\mathcal{F}}}w_{M^B_{\mathcal{D}}X^{nB}_{\mathcal{U}}  }) \\
& \phantom{--} +  \mathbb{V} ( w_{M^B_{\mathcal{D}}X^{nB}_{\mathcal{U}}  }, p_{M^B_{\mathcal{D}}X^{nB}_{\mathcal{U}}  })\\
& \leq 2\mathbb{V} ( p_{Y^{nB}_{\mathcal{D}}  M_{\mathcal{D}}^B X^{nB}_{\mathcal{U}} },w_{ Y^{nB}_{\mathcal{D}}  M^B_{\mathcal{D}} X^{nB}_{\mathcal{U}} }) \\
& \phantom{--} +  \mathbb{V} (w_{F_{\mathcal{D}}(Y^{nB}_{\mathcal{D}})F_{\mathcal{D}} M^B_{\mathcal{D}} X^{nB}_{\mathcal{U}} },  p_{U_{\mathcal{D}}} p_{U_{\mathcal{F}}}w_{M^B_{\mathcal{D}}X^{nB}_{\mathcal{U}}  }) \\
& \stackrel{(c)} \leq  2 \epsilon+\mathbb{V} ( w_{F_{\mathcal{D}}(Y^{nB}_{\mathcal{D}})F_{\mathcal{D}} M_{\mathcal{D}} X^{nB}_{\mathcal{U}} }, p_{U_{\mathcal{D}}} p_{U_{\mathcal{F}}}w_{M_{\mathcal{D}}X^{nB}_{\mathcal{U}}  })  \\ 
& \stackrel{(d)} \leq     2 \epsilon+   \sqrt{ \sum_{ \substack{ \mathcal{S} \subseteq {\mathcal{D}} \\ \mathcal{S}\neq \emptyset }} 2^{ r_{\mathcal{S}} - H_{\infty}\left( w_{Y_{\mathcal{S}}^{nB}M^B_{\mathcal{D}}  X^{nB}_{\mathcal{U}}}| p_{M^B_{\mathcal{D}}  X^{nB}_{\mathcal{U}} } \right) }}  \numberthis \label{eqremark1} \\ 
&\stackrel{(e)} \leq      2 \epsilon+  \sqrt{ \sum_{ \substack{ \mathcal{S} \subseteq {\mathcal{D}} \\ \mathcal{S}\neq \emptyset }} 2^{ r_{\mathcal{S}} - BH \left( Y^n_{\mathcal{S}}|M_{\mathcal{D}}  X^n_{\mathcal{U}}   \right)+ B \delta_{\mathcal{S}}(n,B)} }  ,
\end{align*}
where $(a)$ holds by the triangle inequality, $(b)$ holds by the data processing inequality, $(c)$ holds by \eqref{eqepsi}, $(d)$ holds by Lemma \ref{lemloh}, $(e)$ holds by \eqref{eqepsi2}.

\section{Proof of Lemma \ref{lemamp2}} \label{Applemamp2}
Let $\mathcal{S} \subseteq \mathcal{D}$, $\mathcal{S}\neq \emptyset$. We have
\begin{align}
&H(Y_{\mathcal{S}}^n|M_{\mathcal{D}} X^n_{\mathcal{U}})  \nonumber \\  \nonumber
& = H(Y_{\mathcal{S}}^n M_{\mathcal{D}} X^n_{\mathcal{U}}) - H(M_{\mathcal{D}} X^n_{\mathcal{U}})\\ \nonumber
& =H(Y_{\mathcal{S}}^n | X^n_{\mathcal{U}}) + H( M_{\mathcal{S}^c} |Y_{\mathcal{S}}^n X^n_{\mathcal{U}})- H(M_{\mathcal{D}} | X^n_{\mathcal{U}})\\ 
& \geq H(Y_{\mathcal{S}}^n | X^n_{\mathcal{U}}) + H( M_{\mathcal{S}^c} |Y_{\mathcal{S}}^n X^n_{\mathcal{U}})- n\sum_{i \in \mathcal{D}} \sum_{j \in \llbracket 1 , 2^D+1\rrbracket} R_{i,j}, \label{eqamp1} 
\end{align}
where we have used in the last inequality that $H(M_{\mathcal{D}} | X^n_{\mathcal{U}})$ is upper bounded by the logarithm of the cardinality of the alphabet of $M_{\mathcal{D}}$.

The third term in the right-hand side of \eqref{eqamp1} is evaluated as follows.
\begin{align}
&\sum_{i \in \mathcal{D}} \sum_{j \in \llbracket 1 , 2^D+1\rrbracket} R_{i,j} \nonumber \\ \nonumber
& \stackrel{(a)}{=} \sum_{i \in \mathcal{D}} (\bar{R}_{i,2^D}+ R_{i,2^D+1}) \\ \nonumber
& \stackrel{(b)}{=}  \sum_{i \in \mathcal{D}} (H(Y_i|Y_{1:i-1} X_{\mathcal{L}})  + \epsilon H(Y_i|Y_{1:i-1} X_{\mathcal{L}}) + \epsilon) \\
& \stackrel{(c)}= H(Y_{\mathcal{D}}| X_{\mathcal{L}})  + \epsilon( H(Y_{\mathcal{D}}| X_{\mathcal{L}}) + D), \label{eqamp2} 
\end{align}
where $(a)$ and $(b)$ holds by the definitions and rates chosen in Section \ref{sec:reconcil}, $(c)$ holds by the chain rule.

Next, the second term in the right-hand side of \eqref{eqamp1} is lower bounded as follows.
\begin{align}
& H( M_{\mathcal{S}^c} |Y_{\mathcal{S}}^n X^n_{\mathcal{U}}) \nonumber \\
& \stackrel{(a)}{\geq} H( M_{\mathcal{S}^c} |Y_{\mathcal{S}}^n X^n_{\mathcal{L}}) \nonumber \\ \nonumber
& \stackrel{(b)}{\geq} \sum_{i \in \mathcal{S}^c} H( M_{i} | M_{1:i-1}Y_{\mathcal{S}}^n X^n_{\mathcal{L}})\\ \nonumber
&\stackrel{(c)}{\geq} \sum_{i \in \mathcal{S}^c} H( M_{i} | Y_{1:i-1}^nY_{\mathcal{S}}^n X^n_{\mathcal{L}})\\ \nonumber
&\stackrel{(d)}{\geq} \sum_{i \in \mathcal{S}^c} H( M_{i,1:j} | Y_{1:i-1}^nY_{\mathcal{S}}^n X^n_{\mathcal{L}})\\ \nonumber
& = \sum_{i \in \mathcal{S}^c} H( M_{i,1:j} ) - I(M_{i,1:j} ; Y_{1:i-1}^nY_{\mathcal{S}}^n X^n_{\mathcal{L}})\\ \nonumber
& \stackrel{(e)}{\geq} \sum_{i \in \mathcal{S}^c} H( M_{i,1:j} ) - o(1)\\ \nonumber
& \stackrel{(f)}{=} \smash{ \sum_{i \in \mathcal{S}^c}} n( H( Y_i | Y_{1:i-1}Y_{\mathcal{S}} X_{\mathcal{L}} )\\\nonumber
& \phantom{-----} - 3 \epsilon H( Y_i | Y_{1:i-1} X_{\mathcal{L}} ) - \epsilon) - o(1)\\
& \stackrel{(g)}{=}   n\left[ H( Y_{\mathcal{S}^c} | Y_{\mathcal{S}} X_{\mathcal{L}} ) - \delta(\epsilon)\right] - o(1), \label{eqamp3}
\end{align}
where $(a)$ and $(b)$ holds because conditioning reduces entropy, $(c)$ holds because $M_{1:i-1}$ is a function of $Y_{1:i-1}^n$, $(d)$ holds because $M_{i}$ contains $M_{i,1:j}$ for any $j \in \llbracket 1, 2^D\rrbracket$ by the construction in Section \ref{sec:rec}, $(e)$ holds by Property (ii) in Section \ref{sec:rec} and \cite[Lemma 1]{csiszar1996almost}, $(f)$ holds by the rates chosen in Section \ref{sec:reconcil} and  Property (ii) in Section~\ref{sec:rec} with \cite[Lemma~2.7]{bookCsizar}, and in $(g)$ we have defined $\delta(\epsilon)\triangleq \epsilon \left( \sum_{i \in \mathcal{S}^c} (3H( Y_i | Y_{1:i-1} X_{\mathcal{L}} ) +1) \right)$.
Hence, combining \eqref{eqamp1}, \eqref{eqamp2}, \eqref{eqamp3}, we obtain
\begin{align*}
&H(Y_{\mathcal{S}}^n|M_{\mathcal{D}} X^n_{\mathcal{U}})  \\
&  \geq  H(Y_{\mathcal{S}}^n | X^n_{\mathcal{U}}) + n\left[ H( Y_{\mathcal{S}^c} | Y_{\mathcal{S}} X_{\mathcal{L}} ) - \delta(\epsilon)\right] - o(1)\\
& \phantom{--}  - n(H(Y_{\mathcal{D}}| X_{\mathcal{L}})  + \epsilon( H(Y_{\mathcal{D}}| X_{\mathcal{L}}) + D)) \\
& = n \left[ H(Y_{\mathcal{S}}| X_{\mathcal{U}} )  -H(Y_{\mathcal{S}}| X_{\mathcal{L}}) \right]  \\
& \phantom{--}- n \delta(\epsilon) - o(1) - n \epsilon( H(Y_{\mathcal{D}}| X_{\mathcal{L}}) + D).
\end{align*}

 \section{Proof of Lemma \ref{lemsc1}} \label{Applemsc1}
 
 For any $\mathcal{T} \subsetneq \mathcal{L}$, we have
\begin{align*}
&I(S_{1},S_{2}; M_{1},M_{2}, X_{\mathcal{T}}^n)\\
& \stackrel{(a)} = I(S_{1}; M_{1},M_{2}, X_{\mathcal{T}}^n) +  I(S_{2}; M_{1},M_{2}, X_{\mathcal{T}}^n|S_{1})\\
& \stackrel{(b)} = I(S_{1}; M_{1}, X_{\mathcal{T}}^n) + I(S_{1}; M_{2}| M_{1}, X_{\mathcal{T}}^n) \\
&\phantom{-} +  I(S_{2}; M_{1},M_{2}, X_{\mathcal{T}}^n|S_{1})\\
& \stackrel{(c)} \leq I(S_{1}; M_{1}, X_{\mathcal{T}}^n) + I(S_{1}; M_{2}| M_{1}, X_{\mathcal{T}}^n) \\
&\phantom{-} +  I(S_{2}; M_{2}, X_{\mathcal{T}}^n,Y_{1}^n )\\
& \stackrel{(d)} \leq  \delta(n) + I(S_{1}; M_{2}| M_{1}, X_{\mathcal{T}}^n)\\
& \stackrel{(e)} =  \delta(n) + I(S_{1}; M''_{2}| M_{1}, X_{\mathcal{T}}^n) + I(S_{1}; M_{2}'| M''_{2} , M_{1}, X_{\mathcal{T}}^n)\\
& \stackrel{(f)} \leq  \delta(n) + I(Y^n_{1}, X_{\mathcal{L}}^n ; M''_{2}) + I(S_{1},M_{2,3}; M_{2}'| M''_{2}, M_{1}, X_{\mathcal{T}}^n)\\
& \stackrel{(g)} \leq  \delta(n)  + |M_{2}'| - H( K_2| S_1,M_{2,3},M''_{2}, M_{1}, X_{\mathcal{T}}^n)\\
&\stackrel{(h)} =  \delta(n)  + |M_{2}'| - H( K_2)\\
& \stackrel{(i)}= \delta(n)  + \delta(n_2'), \numberthis \label{eqf1}
\end{align*}
where $(a)$ and $(b)$ hold by the chain rule, $(c)$ holds because $(M_1,S_1)$ is a function of $Y_1^n$, $(d)$~holds by \eqref{eqreq1} and \eqref{eqreq2}, $(e)$ holds by the chain rule and the definition of $M_2$, $(f)$ holds because $I(S_{1}; M''_{2}| M_{1}, X_{\mathcal{T}}^n) \leq I( M_{1}, X_{\mathcal{T}}^n, S_{1}; M''_{2}) \leq  I( Y_1^n, X_{\mathcal{L}}^n; M''_{2})$, where the first inequality holds by the chain rule and the second inequality holds as in $(c)$, $(g)$ holds by~\eqref{eqreq2bbb} and by the definition of $M_2'$, $(h)$ holds by independence of the initialization phase and the successive secret distribution phase, $(i)$ holds by almost uniformity of $K_2$ in the initialization phase.
 \section{Proof of Lemma \ref{lemsc2}} \label{Applemsc2}

We have for any $\mathcal{U}, \mathcal{T} \subsetneq \mathcal{L}$,
\begin{align*}
&I(S_{1},S_{2}; I_2(\mathcal{U}), M_{1},M_{2}, X_{\mathcal{T}}^n) - \delta(n)  - \delta(n_2') \\
& \stackrel{(a)} \leq   I(S_{1},S_{2};I_2(\mathcal{U})| M_{1},M_{2}, X_{\mathcal{T}}^n) \\ 
& \leq  I(S_{1},S_{2}, M_{1},M_{2};I_2(\mathcal{U})| X_{\mathcal{T}}^n) \\ 
& \stackrel{(b)} =  I( M_{1},M_{2}; I_2(\mathcal{U})| S_{1},S_{2}, X_{\mathcal{T}}^n) \\ 
&  =  I( M_{1} ;  I_2(\mathcal{U})| S_{1},S_{2}, X_{\mathcal{T}}^n)+I( M_{2};  I_2(\mathcal{U})|M_{1}, S_{1},S_{2}, X_{\mathcal{T}}^n) \\ 
& \stackrel{(c)} =  I( M_{2}';  I_2(\mathcal{U})|M_2'',M_{1}, S_{1},S_{2}, X_{\mathcal{T}}^n) \\ 
& \leq  I( M_{2}'; M_{2,3},I_2(\mathcal{U})|M_2'',M_{1}, S_{1},S_{2}, X_{\mathcal{T}}^n) \\ 
& \stackrel{(d)} \leq |K_2| - H(K_2|M_{2,3},  I_2(\mathcal{U}),M_2'',M_{1}, S_{1},S_{2}, X_{\mathcal{T}}^n) \\ 
& \stackrel{(e)}=  |K_2| - H(K_2|  I_2(\mathcal{U}) ) \\ 
& \stackrel{(f)}=  |K_2| - H(K_2 ) + I(K_2;  I_2(\mathcal{U}) )  \\
& \stackrel{(g)}\leq  \delta(n'_2),
\end{align*}
where $(a)$ holds by the chain rule and \eqref{eqf1}, $(b)$ holds because $I(S_{1},S_{2} ; I_2(\mathcal{U})| X_{\mathcal{T}}^n) \leq I(S_{1},S_{2},X_{\mathcal{T}}^n ;I_2(\mathcal{U})) =0$, where the equality holds by independence of the initialization phase and the successive secret distribution phase, $(c)$ holds by the definition of $M_2$, the chain rule, and independence of the initialization phase and the successive secret distribution phase, $(d)$ holds by the definition of $M_2'$, $(e)$ holds by the independence of the initialization phase and the successive secret distribution phase, $(g)$ holds by the initialization~phase.

\bibliographystyle{IEEEtran}
\bibliography{polarwiretap}

\end{document}